\newtheorem{lemma}{Lemma}
\numberwithin{equation}{section}
\def\cN{\mathcal{N}} 
\def\bea{\begin{eqnarray}}
\def\eea{\end{eqnarray}} 
\def\Dim{{\rm{Dim}}}
\def\tr{ {\rm{tr}} }
\def\mC{ \mathbb{C} } 
\def\cZ{ \mathcal{Z} } 
\def\cO{ \mathcal{O} } 
\def\cA{ \mathcal{A} } 
\def\cL{ \mathcal{L} }
\begin{document}


\begin{flushright}
QMUL-PH-22-33\\
\end{flushright}

\bigskip
\bigskip 
\bigskip
\bigskip
\bigskip 
\bigskip

\centerline{\huge \bf Matrix and tensor witnesses of}
\vspace{.2cm}
\centerline{\huge \bf hidden symmetry algebras  }

\bigskip
\bigskip

\centerline{\bf    Sanjaye Ramgoolam ${}^{a,b, \dagger}$  and  Lewis Sword ${}^{a,*}$  }

\bigskip

\begin{center}

\small{
{ \it ${}^{a}$ Centre for Theoretical Physics, Department of Physics and Astronomy \\
Queen Mary University of London, 327 Mile End Road, London E1 4NS, UK}}

    \vspace{.2cm}
\centerline{{\it ${}^b$ National Institute for Theoretical Physics, School of Physics and Centre for Theoretical Physics, }}
\centerline{{\it University of the Witwatersrand, Wits, 2050, South Africa } }

\medskip 
{\it Email:} ${}^{\dagger}$ \texttt{s.ramgoolam@qmul.ac.uk} ,  ${}^{*}$ \texttt{l.sword@qmul.ac.uk} \quad 

\end{center}

\bigskip 
\bigskip
\centerline{ \it \today}

\begin{abstract}
Permutation group algebras, and their generalizations called permutation centralizer algebras (PCAs),  play a central role as hidden symmetries in the combinatorics of large $N$ gauge theories and matrix models  with manifest continuous gauge symmetries. Polynomial functions invariant under the manifest symmetries are the observables of interest and have applications in AdS/CFT. 
We compute such correlators in the presence of matrix or tensor witnesses, which by definition, can include  a matrix or tensor field appearing as a coupling in the action (i.e a spurion)  or as a classical (un-integrated)  field in the observables, appearing alongside quantum (integrated) fields. In both matrix and tensor cases we find that two-point correlators of general gauge-invariant observables can be written in terms of gauge invariant functions of the witness fields, with coefficients given by structure constants of the associated PCAs. Fourier transformation on the relevant PCAs, relates combinatorial bases to representation  theoretic bases. The representation theory basis elements  obey orthogonality results for the two-point correlators which generalise known orthogonality relations to the case with witness fields. The new orthogonality equations involve two representation basis elements for observables as input and a representation basis observable constructed purely from witness fields as the output. These equations extend known equations in the  super-integrability programme initiated by Mironov  and Morozov, and are a direct physical realization of the Wedderburn-Artin decompositions of the hidden permutation  centralizer algebras of matrix/tensor models. 
\end{abstract}

 
\newpage

\tableofcontents

\newpage


\section{Introduction}
\label{s:introduction}
Matrix models  have played a prominent role in gauge-string duality, starting from the duality between matrix models with double scaling limits and low-dimensional non-critical strings  \cite{BrezKaz1990,DougShenk1990,GrossMig1990}. 
They also arise in the description of BPS states and their correlators in $\cN=4$ super-Yang Mills theories. They are therefore important for the AdS/CFT duality between $\cN=4$ super-Yang Mills (SYM)  theories and string theory on $AdS_5 \times S^5$ \cite{Maldacena:1997re,Witten:1998qj,Gubser:1998bc}. In particular, complex matrix models with one complex variable matrix $Z$, and the $U(N)$ invariant polynomial functions of $Z$,  are relevant for the half-BPS sector of $\cN=4$ super-Yang Mills (SYM) with $U(N)$ gauge group. AdS/CFT brings into focus the classification of the invariant polynomial functions and the computation of their correlators. It was observed in  \cite{BBNS} that the breakdown of the standard map between multi-traces  in SYM and multi-particle states in AdS, for large dimension operators,  is related to a failure of orthogonality of multi-traces in the inner product constructed from 2-point functions of holomorphic and anti-holomorphic gauge invariant operators. Sub-determinant operators were also proposed as SYM duals of giant gravitons extended in the  $S^5$.  Orthogonal bases were thus recognised as important for the identification of bulk space-time duals for the quantum states in the CFT corresponding to large dimension local operators. This led to the classification of half-BPS operators in $U(N)$ $\cN=4$ SYM theory with dimension $n$  in terms of operators $\cO_R ( Z ) $ labelled by Young diagrams $R$ having $n$ boxes \cite{Corley:2001zk}.  These operators were shown to be orthogonal in the free-field inner product, and this informed a proposal  for a general map between half-BPS gauge invariant operators in SYM to giant gravitons in the string theory, distinguishing giant gravitons extended in the $S^5$ from those extended in the $ AdS_5$ giants, and single-giant states from multi-giant states in terms of the Young diagrams. This proposal has passed non-trivial checks based on the calculation of 3-point functions of large dimension operators in SYM  and comparison with calculations in the $AdS_5 \times S^5$ (see \cite{Bissi:2011dc,Caputa:2012yj,Lin:2012ey,Kristjansen:2015gpa,Jiang:2019xdz,Yang:2021kot,Chen:2019gsb,Holguin:2022zii}).  The  free-field inner product is equivalently expressible in terms of a Gaussian matrix model correlator. The precise form of $\cO_R ( Z ) \equiv \chi_R ( Z )  $ will be recalled in \eqref{eqn:spb_complex_character_schur} of section \ref{ss:single_matrix_schur_basis}: they are linear combinations of multi-trace operators weighted by characters of the symmetric group $S_n$.  
 The inner product was  calculated as 
\bea\label{schur-orthog} 
\langle \cO_R ( Z ) \cO_S ( Z^{ \dagger} ) \rangle  = \delta_{ RS } { n! \Dim_N ( R ) \over  d_R } 
\eea
where $\Dim_N R $ is the dimension of the $U(N)$ irreducible representation associated with Young diagram $R$. This object is referred to throughout as the two-point correlator. The reason for this nomenclature stems from the fact that the matrices $Z$ and $Z^{\dagger}$ are inserted at a given point in spacetime, and the convention followed in this paper chooses to suppress this spacetime dependence. In section \ref{ss:single_matrix_schur_basis} we consider a modification of the standard Gaussian matrix model integral by introducing a matrix coupling $A$. The partition function is modified as: 
\bea 
\int [dZ] e^{-\text{Tr}(ZZ^{\dagger})} \rightarrow \int [dZ] e^{-\text{Tr}(Z A Z^{\dagger})} 
\eea 
Similar modifications of hermitian matrix models have been considered in \cite{Mironov:2022fsr}. Using a subscript $A$ to denote the correlation functions calculated with this modified action, we show that 
\bea\label{schur-orthog-refined}  
\boxed{ 
\langle \cO_R ( Z ) \cO_S ( Z^{ \dagger} ) \rangle_{A} =  \delta_{ RS } { n! \cO_R ( B )  \over  d_R } 
}
\eea
where $ B = A^{-1}$.  The RHS of \eqref{schur-orthog} is recovered from the RHS of \eqref{schur-orthog-refined} by setting $B$ to be equal to a unit matrix. The matrix-coupling dependent result is obtained by generalising the combinatorial and diagrammatic techniques in \cite{Corley:2001zk} and \cite{Corley:2002mj} to the case of the background matrix coupling.

For operators of dimension $n$ obeying  $ n < N$, a familiar basis of operators is given by trace structures. For operators of dimension $n$, the different trace structures can be parameterised by integers $p = \{ p_1 , p_2 , \cdots , p_n \}$ giving the powers of different traces 
\bea 
\cO_{ p } ( Z )  = ( \tr Z )^{ p_1} ( \tr Z^2 )^{ p_2} \cdots ( \tr Z^n )^{ p_n }  
\eea
These integers obey 
\bea 
n  = p_1 + 2p_2 + \cdots + n p_n = \sum_{ i=1 }^n  i p_i 
\eea
i.e, they define a partition of $n$. The linear change of basis  between the $\cO_{p} $ and $\cO_{R}$ is given by 
\bea 
\cO_{ R } ( Z )  = {1 \over n!} \sum_{\sigma \in S_{n}} \chi^{R}(\sigma) ~ \mathcal{O}_{\sigma} (Z)  = { 1 \over n! } \sum_{ p \vdash n }   \chi^R_p ~  \cO_{ p } ( Z ) 
\eea
where $\chi^R_p $ is the character for any permutation $\sigma^{(p)}  $  in the conjugacy class $p$ in $S_n$, $|T_p|$ is the number of group elements in the conjugacy class $p$ and 
\bea 
\cO_{ p }  (Z )  =  |T_{p}|  \mathcal{O}_{\sigma^{(p)}} ( Z )  \, . 
\eea
 For $ n>N$, the operators $\cO_p (Z ) $ form an overcomplete basis, due to finite $N$ trace relations. The description of the finite $N$ state space is simple in terms of the Young diagram basis elements $\cO_R (Z)$. We simply drop all Young diagrams with $l(R) > N$, where $l(R)$ is the number of rows in the Young diagram $R$.  The two-point function \eqref{schur-orthog} reflects this cutoff since the RHS, viewed as a polynomial in $N$, vanishes for $l(R) > N$. 
The two-point function in the trace basis is 
\bea\label{trace-basis-corr}  
\langle \cO_{ p_1} ( Z ) ( \cO_{ p_2} ( Z ))^{ \dagger} \rangle =n! \sum_{ p_3 \vdash n }  C_{ p_1 , p_2 }^{ p_3 } N^{ C_{ p_3} }    
\eea
where $C_{ p_1 , p_2}^{ p_3} $ are the structure constants of the commutative algebra formed by the centre of 
$\cZ ( \mC ( S_n ) ) $. These  are described more explicitly in section  \ref{s:alg_one_mat_bmf}. 

A partition $p$ of $n$ determines a conjugacy class in $S_n$. The formal sum of group elements in the conjugacy class, denoted $T_p$, is an element in the group algebra $\mC ( S_n)$ and commutes with all elements of $\mC ( S_n)$. As $p$ ranges over all partitions of $n$, these $T_p$ form a basis for the centre of $\mC ( S_n)$. The multiplication of two of these central elements can be expanded in terms of the same basis
\bea\label{permbasprod}  
T_{ p_1} T_{ p_2} = \sum_{ p_3} C_{ p_1 , p_2 }^{ p_3} T_{ p_3} 
\eea
where $C_{ p_1 , p_2 }^{p_3} $ are the structure constants of the multiplication in the algebra $ \cZ ( \mC ( S_n) )$. We will show that, with the background matrix coupling $A$, the equation \eqref{trace-basis-corr} 
 admits a simple modification 
\bea\label{permbasCorr2}  
\boxed{ 
\langle \cO_{ p_1} ( Z ) ( \cO_{ p_2} ( Z ))^{ \dagger} \rangle_A  = n! \sum_{ p_3 \vdash n }  C_{ p_1 , p_2 }^{ p_3 } 
\cO_{ p_3} ( B ) 
} 
\eea
This shows that the class algebra of the symmetric group has a direct realization in terms of matrix model correlators in the presence of a background matrix coupling. 
In \eqref{permbasprod} we  have a direct realisation of the same structure constants, purely within the algebra, with no connection to matrix elements. To go from \eqref{permbasprod}  to \eqref{permbasCorr2} we are simply decorating the algebra elements with matrix model quantities : 
\bea\label{algmapmat}  
&& T_{ p_1} \rightarrow \cO_{ p_1} ( Z ) \cr 
&& T_{ p_2} \rightarrow  ( \cO_{ p_2} ( Z ) )^{ \dagger}  \cr 
&& T_{p_3 } \rightarrow \cO_{ p_3} ( B ) 
\eea

The equality of the number of conjugacy classes and irreducible representations of a finite group is a familiar mathematical fact.  A related fact is that there is a change of basis in the centre of the group algebra between a basis labelled by the conjugacy classes and a basis labelled by irreducible representations. In the case of 
$S_n$, we have a basis of (un-normalized) projectors 
\bea 
P_{ R } = { 1 \over n! } \sum_{ p } \chi^R_p T_p 
\eea
The multiplication in this basis is diagonal, i.e. zero unless that two projectors are identical:  
\bea
 P_{R_1}  P_{R_2}  = \delta_{ R_1 R_2 } {  P_{R_1}   \over d_{ R_1} } 
\eea
To obtain \eqref{schur-orthog-refined} we apply the same map \eqref{algmapmat}
\bea 
&& P_{R_1}  \rightarrow \cO_{ P_{R_1} } ( Z ) \equiv \cO_{ R_1} ( Z )  \cr 
&& P_{R_2}  \rightarrow  ( \cO_{ P_{R_2}  } ( Z ) )^{ \dagger} \equiv \cO_{ R_2} ( Z^{\dagger}  )  \cr 
&& P_{ R_3} \rightarrow \cO_{ P_{ R_3} } ( B ) \equiv \cO_{ R_3} ( B )  
\eea
Our derivation of the refined 2-point correlator in the Schur basis will in fact be obtained by first deriving 
\eqref{permbasCorr2}  and then Fourier transforming.  The result \eqref{schur-orthog-refined} may be expected by analogy to similar results in the super-integrability literature. Our derivation shows that this result follows using Fourier transformation on a hidden symmetry algebra underlying the combinatorics of invariant operators, which becomes visible (as in \eqref{permbasCorr2})  in the presence of a background matrix coupling. 

The super-integrability results in \cite{Mironov:2022fsr} also include examples where there is a classical (un-integrated) matrix in the observable, alongside the quantum (integrated) matrix. Emulating these results, 
we also show that an  equation similar to \eqref{schur-orthog-refined} can be obtained by considering gauge invariant operators constructed from classical as well as quantum fields.  The result takes the form 
\bea 
\label{eqn:intro_spbback_correlator_with_background}
\boxed{ \langle \mathcal{O}_{R}(ZA) \left(\mathcal{O}_{S}(ZA) \right)^{\dagger} \rangle = \frac{\delta_{RS} n!}{d_{R}} \mathcal{O}_{S}(B)  \,, }   
\eea
where we now  have $B = A^{ \dagger} A$. This is also related by a Fourier transform on the
 algebra $\cZ ( \mC ( S_n ) ) $ to the fact that the structure constants of this algebra 
 arise in 2-point functions for operators labelled by permutations. In the context of AdS/CFT with $\cN=4$ SYM, where the theory has three complex matrices $X, Y, Z $, the matrix $A$ could be $ Y, X$ which transform in the same way as $Z $ with an adjoint $U(N)$ action, so that the observables $ \cO_{ R } ( Z A ) $ are gauge-invariant under the simultaneous gauge transformation of $ Z $ and $A$.    

Classical matrices which occur as couplings in the action are matrix  spurions. Spurions are coupling constants in quantum field theories  which are promoted to fields (see for example \cite{Penco} for a recent effective field theory discussion). Here the matrix $A$ can be viewed as a matrix spurion in a zero-dimensional quantum field theory.  Since there are results of the same algebraic nature whether we consider classical (un-integrated) matrices as coupling constants in the action or classical matrices appearing inside observables, it is natural to have a single name for both uses of a classical field in a matrix model. We propose to use ``matrix witness'' as a unifying terminology which includes a matrix being used in either role. There is some resonance with the fact that classical objects appear as measuring apparatuses in classical/quantum interactions. We leave the exploration of witnesses, as defined here, for applications in quantum information theory as an intriguing question for the future.

Motivated by the CFT description of open strings attached to giant gravitons, the orthogonality relation \eqref{schur-orthog}  has been generalised to  multi-matrix systems   \cite{Kimura:2007wy,Brown:2007xh,Bhattacharyya:2008rb,Bhattacharyya:2008xy,Kimura:2008ac,Brown:2008ij,Pasukonis:2013ts}. The study of tensor models as combinatorial models of quantum gravity \cite{Ambjorn:1990ge,Sasakura:1990fs,Gross:1991hx,Oriti:2006se, 
Gurau:2010ba} and as models of gauge-string duality \cite{Witten:2016iux}  has also led to tensor generalisations of the orthogonality relation \cite{BenGeloun:2013lim,Diaz:2017kub,deMelloKoch:2017bvv,BenGeloun:2017vwn}. It has been recognised that these orthogonality equations for representation theoretic bases   are related to permutation centralizer algebras \cite{Mattioli:2016eyp}. These are in general non-commutative algebras.   The starting point is that the observables can be constructed by using a set of permutations to parametrize the contractions of upper and lower indices of matrix or tensor fields.  We will refer to these as the parameterising-permutations. There are redundancies in the parameterising permutations, which are themselves described by a smaller permutation   group. It is useful to think of the smaller permutation group as giving a discrete gauge symmetry which acts on the discrete set of parameterising-permutations. Parameterising permutations which are gauge-equivalent give rise to the same gauge invariant polynomial.    In the case of the one-matrix problem with $U(N)$ invariants of degree $n$, 
 \bea 
&\hbox{Parameterising permutations}:   &G =    S_n   \cr 
& \hbox{Gauge permutations } : &H = S_n 
\label{eqn:intro_param_gauge_one_matrix}
 \eea
 For the two-matrix problem with $U(N)$ invariants of degree $m$ in one matrix and degree $n$ in the other 
 \bea 
 & \hbox{ Parameterising permutations  } : & G = S_{ m+n }  \cr 
 & \hbox{ Gauge permutations  } : & H = S_m \times S_n \subset S_{ m+n } 
 \label{eqn:intro_param_gauge_two_matrix}
 \eea
 In both of the above cases, the gauge permutations act by conjugation on $G$.  For the case of a complex 3-index tensor $\Phi$ transforming in 3-fold tensor product $V_N \otimes V_N \otimes V_N$ 
 of the fundamental representation $V_N$ of $U(N)$, the construction of invariants of degree $n$ in $\Phi $ and $\bar \Phi $ can be done by using   
 \bea 
 & \hbox{ Parameterising permutations  } : & G  = S_{n} \times S_n \times S_{n}  \cr 
 & \hbox{ Gauge permutations  } : &   H  =  \text{Diag}(S_n) \times \text{Diag} ( S_n ) 
  \label{eqn:intro_param_gauge_tensor}
 \eea
 $\text{Diag}  (S_n)$ is  the diagonal sub-group of $S_n\times S_n \times S_n $ consisting of permutations  $\gamma \in S_n$ embedded diagonally in $S_n \times S_n \times S_n  $ as  $ ( \gamma , \gamma , \gamma )$. Writing the general parameterising $S_n \times S_n \times S_n $ elements as ordered triples $ ( \sigma_1  , \sigma_2, \sigma_3 ) $ with $ \sigma_1 , \sigma_2, \sigma_3 \in S_n$, the gauge permutations $\gamma_L , \gamma_R \in S_n  $ act as
\bea 
(\gamma_L , \gamma_R ) : ( \sigma_1 , \sigma_2 , \sigma_3) \rightarrow  ( \gamma_L \sigma_1 \gamma_R , \gamma_L \sigma_2 \gamma_R , \gamma_L \sigma_3 \gamma_R ) 
\eea  
 with the $\gamma_L$ acting diagonally on the left and $\gamma_R $ acting diagonally on the right. 
  In all the above cases,  the group action by $H$ organises the set of permutations in $G$ into gauge orbits. Sums over $G$-permutations within an orbit of the $H$-action commute with $H$, and form a {\it combinatorial  basis} of a $(G,H)$ permutation centralizer algebra (PCA). This is defined as the subspace of the group algebra $\mC ( G ) $ which is stabilised by  elements of $ H $. A more detailed and more general account of permutation centralizer algebras is given in \cite{Mattioli:2016eyp}.  The PCA for the 2-matrix models is closely related to Littlewood-Richardson coefficients (reduction multiplicities for irreps of $S_{m+n}$ to the subgroup $S_m \times S_n$), while that for the tensor model the PCA (denoted ${\mathcal{K}}(n)$) is related to Kronecker coefficients (Clebsch-Gordan multiplicities) for tensor products of $S_n$ irreps. These latter algebras have been used to realise the squared Kronecker coefficients as the dimensions of  null spaces of  combinatorially defined integer matrices: the null vectors can be obtained using standard combinatorial  integer matrix algorithms \cite{BenGeloun:2020yau,Geloun:2022kma}. Using an involution on ${\mathcal{K}}(n)$ these remarks are generalised to the Kronecker coefficients themselves. 
  
An important aspect of the PCAs is that there is a change of basis (Fourier transformation)  from the combinatorial basis to a Wedderburn-Artin basis labelled by representation theory data (a collection of Young diagrams and associated representation theoretic multiplicity labels). The Wedderburn-Artin basis shows that the associative algebras are isomorphic to a direct sum of matrix algebras. A general discussion of the Wedderburn-Artin theorem for associative algebras in a form we find accessible for physicists is in \cite{Ram-Chapter1}. 
Explicit equations describing the bases which exhibit the isomorphism to a direct sum of matrix algebras, drawing on the background physics literature,   will be reviewed at the start of sections  \ref{s:alg_two_mat_bmf}, \ref{s:multi_matrix_model}, \ref{s:tensor_model}. A general synopsis of this paper's results is captured in Figure \ref{fig:intro_boxed_maps}.

The paper is organized as follows. Section \ref{s:alg_one_mat_bmf} develops the results outlined in the earlier part of this introduction for the complex 1-matrix model. Section  \ref{s:alg_two_mat_bmf} develops all the analogous results for the complex 2-matrix model. Section \ref{s:multi_matrix_model} generalises this discussion to the complex multi-matrix models. Section \ref{s:tensor_model} describes how the structure constants of the Kronecker PCA arise by using tensor witness fields appearing inside gauge invariant composites of quantum (integrated) and classical (unintegrated) tensor fields.

\begin{figure}
\tikzstyle{block} = [draw, rectangle, minimum height=5em, minimum width=5em]
\tikzstyle{virtual} = [coordinate]
\begin{center}
\begin{tikzpicture}[auto, node distance=8cm]
    \node [block] (upleft) {\begin{tabular}{cc}
							Correlators with \\
							witness fields in \\
							combinatorial basis
							\end{tabular}  };
    \node [block, below of=upleft] (botleft) {\begin{tabular}{cc}
										      Correlators with \\
										      witness fields in\\
											  W.A. basis \\
											 \end{tabular}  };
    \node [block, right of=upleft] (upright) {\begin{tabular}{cc}
										      Structure constants \\
											  of permutation \\
											  centraliser algebras
											 \end{tabular}  };
    \node [block, below of=upright] (botright)    {\begin{tabular}{cc}
										      Representation theoretic \\
										      orthogonality relations \\
											  with witness fields \\
											 \end{tabular}};
    \draw [-{Stealth[length=3mm, width=2mm]}] (upleft) -- node  [name=wickcombo,label={[yshift=0.3mm]\bf{Wick contraction}}] {} (upright);
    \draw [{Stealth[length=3mm, width=2mm]}-{Stealth[length=3mm, width=2mm]}] (upright) -- node [name=FTright] {\bf{Fourier Transform}} (botright);
    \draw [{Stealth[length=3mm, width=2mm]}-{Stealth[length=3mm, width=2mm]}] (botleft) -- node [name=FTleft] {\bf{Fourier Transform}} (upleft);
    \draw [-{Stealth[length=3mm, width=2mm]}] (botleft) -- node [name=wickWA,label={[yshift=0.3mm]\bf{Wick contraction}}] {} (botright);
\end{tikzpicture}
\end{center}
\caption{The diagram demonstrates the core results of the paper. The action of Fourier transforming exchanges combinatorial and Wedderburn-Artin (W.A.) bases, while Wick contractions are used to produce the orthogonality relations of correlators as well as the structure constants of the associated PCAs.}
\label{fig:intro_boxed_maps}
\end{figure}
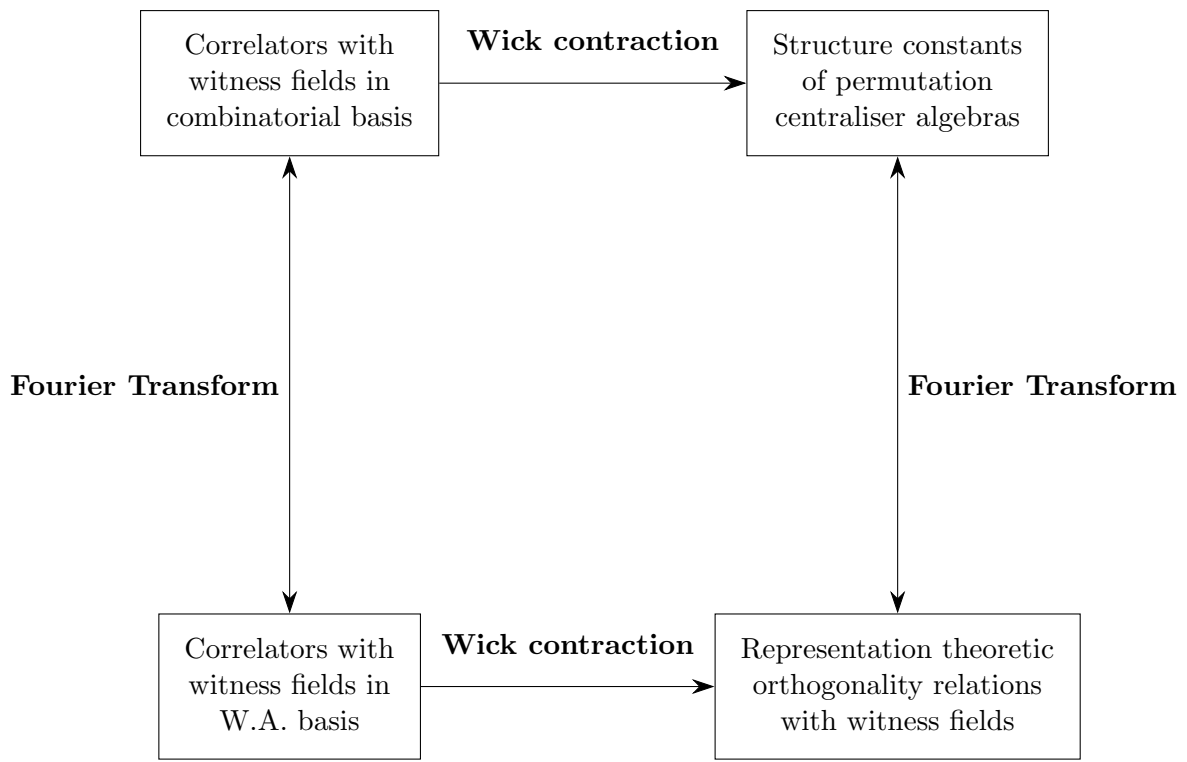

\clearpage
\section{Algebras and single matrix correlators with matrix witnesses }
\label{s:alg_one_mat_bmf}

In this section we consider  the  complex-matrix model of a single complex matrix $Z$ of size $N$, transforming in the adjoint of $U(N)$, which is relevant to the half-BPS sector of $\cN=4$ SYM theory with $U(N)$ gauge group. We first set up the notation following previous work \cite{Corley:2001zk,Corley:2002mj}. The holomorphic gauge invariant functions of degree $n$ are BPS operators with dimension $n$. These are products of traces of powers of $Z$. 
  For the $N \times N$, complex matrix $Z$ and permutation $\sigma \in S_{n}$, we can define  a gauge invariant operator (GIO) as
\begin{equation}
\label{eqn:aomb_gio_definition}
\mathcal{O}_{\sigma}(Z) = Z_{i_{\sigma(1)}}^{i_{1}} \dots Z_{i_{\sigma(n)}}^{i_{n}} = \text{Tr}_{V_{N}^{\otimes n}}(Z^{\otimes n} \cL_{\sigma}) \,.
\end{equation}
The indices $i_1, \cdots , i_n$ are summed over the range $\{ 1 , \cdots , N \}$. The last expression indicates that $\cO_{ \sigma } ( Z ) $ is the trace in the $n$-fold tensor product of the $N$-dimensional fundamental representation $V_N$ of $U(N)$ of the product of two linear operators: $Z^{ \otimes n } \mathcal{L}_{\sigma} $. Here, $\mathcal{L}_{\sigma}$ has the following action on the tensor product of basis vectors
\begin{equation}
\label{eqn:bmf_lin_op_action}
\mathcal{L}_{\sigma} \ket{e_{i_{1}} \otimes \dots \otimes e_{i_{n}}} = \ket{e_{i_{\sigma(1)}} \otimes \dots \otimes e_{i_{\sigma(n)}}} \,.
\end{equation}
It follows straightforwardly that these GIO satisfy
\begin{equation}
\label{eqn:aomb_obs_equiv_rel}
\mathcal{O}_{\sigma}(Z) = \mathcal{O}_{\gamma \sigma \gamma^{-1}}(Z)
\end{equation}
where $\gamma \in S_{n}$. This invariance means that the holomorphic gauge invariants can be described in terms of  equivalence classes of permutations $ \sigma \in S_n$, obtained by conjugation with permutations $\gamma \in S_n $:   
\begin{equation}
\label{eqn:aomb_equiv_rel}
\sigma \sim \gamma \sigma \gamma^{-1}
\end{equation}
Sums of permutations $\sigma $ within an equivalence class belong to the centre $\mathcal{Z}[\mathbb{C}[S_{n}]]$ 
of the group algebra $\mathbb{C}[S_{n}]$. These equivalence classes are the conjugacy classes in $S_n$, associated with partitions $ p $ of $n$, which describe  cycle structures of permutations. We denote as $\mathbf{C}_{p} $, the conjugacy class of permutations associated with partition  $p$, and let $\sigma^{(p)} $ be any chosen permutation in the class  $\mathbf{C}_{p}$. The automorphism group $\text{Aut}_{\mathcal{Z}}\left( \sigma^{ (p) } \right)$, is the subgroup of $S_{n}$ that leaves $\sigma^{(p)}$ invariant under conjugation i.e. the stabiliser subgroup composed of the permutations $\gamma \in S_{n}$, that satisfy $\gamma \sigma^{(p)} \gamma^{-1} = \sigma^{(p)}$. The order of this stabiliser group depends only on $ p$, and not the choice of $ \sigma^{(p)} $, and we denote  this order as 
$| \text{Aut}_{\mathcal{Z}}\left( p \right ) | $. We define central elements $T_p$ labelled by partitions $p$ as follows:  
\begin{equation}
\label{eqn:bmf_zcs_algebra_ele}
T_{p} =  \frac{1}{|\text{Aut}_{\mathcal{Z}}\left(p\right)|} \sum_{\gamma \in S_{n}} \gamma \sigma^{(p)} \gamma^{-1} = \sum_{\alpha \in \mathbf{C}_{p}} \alpha \quad \in \mathcal{Z}[\mathbb{C}[S_{n}]] \,.
\end{equation}  
  As central elements $T_p$ satisfy  $T_{p} = \gamma T_{p} \gamma^{-1}$ for all $ \gamma \in S_n$. The size of the class is denoted by $|\mathbf{C}_{p}| = |T_{p}|$ and it is useful to note  that $|T_{p}| = {  n! \over | \text{Aut}_{\mathcal{Z}}\left( p \right ) | }$. 

The algebra  $\mathcal{Z}[\mathbb{C}[S_{n}]]$  is an example of a permutation centraliser algebra (PCA) as defined in \cite{Mattioli:2016eyp}. As discussed in the introduction, the definition of the PCAs used in this paper involves a set of parameterising permutations and a set of gauge permutations. In this case the parametrising permutations $ \sigma $ and the gauge permutations $\gamma $ are both general permutations in $S_n$ (as indicated in \eqref{eqn:intro_param_gauge_one_matrix}). In  section  \ref{ss:background_matrix_field} we show that the two-point function involving a holomorphic and an anti-holomorphic gauge invariant operator, in the presence of an invertible matrix coupling $A = B^{-1}$,  can be written in terms of the structure constants of the  algebra $\mathcal{Z}[\mathbb{C}[S_{n}]]$. In section \ref{ss:single_matrix_schur_basis} we perform the Fourier transform from the basis in  $\mathcal{Z}[\mathbb{C}[S_{n}]]$  labelled by partitions $p$ to a basis labelled by Young diagrams, to show that the Young-diagram-labelled operators form an orthogonal basis for the two-point functions. In section \ref{ss:back_fields_in_observables} we consider gauge-invariant operators with the matrix $Z$  replaced by the matrix product $ZY$, where $Y$ is another complex matrix (such as exists in $\cN=4$ SYM),  transforming in the adjoint of $U(N)$. We define two-point functions where $Z$ is integrated while $Y$ is left as an unintegrated classical field. 
We show that the results  are given in terms of the structure constants of $\mathcal{Z}[\mathbb{C}[S_{n}]]$. In 
section \ref{ss:schur_basis_back_fields_in_observables} we use Fourier transformation to obtain the orthogonal two-point functions in the Young diagram basis. Thus we have the same structure of results whether we have matrix couplings or matrix classical fields, both of which are referred to as examples of witness fields. The results of this section provide the template which is emulated in subsequent sections, with  $\mathcal{Z}[\mathbb{C}[S_{n}]]$ replaced by appropriate PCAs, which are in general non-commutative.



\subsection{Two-point function of general operators with  matrix coupling }
\label{ss:background_matrix_field}

The partition function of the single matrix model with coupling witness matrix  $A$ is 
\begin{equation}
\label{eqn:bmf_part_func}
\Sigma[0] = \int [dZ] e^{-\text{Tr}(Z A Z^{\dagger})} \,,
\end{equation}
and throughout we take $A$ to be a positive definite Hermitian matrix. In Appendix \ref{app:background_matrix_field_correlator}, we derive the basic correlator\footnote{Note that $\langle Z^{i}_{j} Z^{k}_{l} \rangle = \langle (Z^{\dagger})^{i}_{j} (Z^{\dagger})^{k}_{l} \rangle = 0$, also.} 
\begin{equation}
\label{eqn:bmf_new_correlator_part_func}
\langle Z^{i}_{j} (Z^{\dagger})^{k}_{l} \rangle = \frac{1}{\Sigma[0]}  \int [dZ] Z^{i}_{j} (Z^{\dagger})^{k}_{l} e^{-\text{Tr}(Z A Z^{\dagger})}  =\delta^{i}_{l} (A^{-1})^{k}_{j} = \delta^{i}_{l} B^{k}_{j} \,,
\end{equation}
where we set $B = A^{-1}$ and henceforth refer to $B$ as the coupling witness field. Recalling the permutation parameterisation of the GIO from \eqref{eqn:aomb_gio_definition}
\begin{equation}
\label{eqn:aomb_gio_definition2}
\mathcal{O}_{\sigma}(Z) = \text{Tr}_{V^{\otimes n}_{N}}(Z^{\otimes n} \cL_{\sigma}) = Z^{i_{1}}_{i_{\sigma(1)}} \dots Z^{i_{n}}_{i_{\sigma(n)}} \,,
\end{equation}
the Hermitian conjugate is 
\begin{equation}
\label{eqn:aomb_gio_herm_definition}
(\mathcal{O}_{\sigma}(Z))^{\dagger} = \left(\text{Tr}_{V^{\otimes n}_{N}}(Z^{\otimes n} \cL_{\sigma})\right)^{\dagger} = (Z^{\dagger})^{i_{1}}_{i_{\sigma^{-1}(1)}} \dots (Z^{\dagger})^{i_{n}}_{i_{\sigma^{-1}(n)}} = \mathcal{O}_{\sigma^{-1}}(Z^{\dagger}) \,.
\end{equation}
By taking the expectation value of the product of \eqref{eqn:aomb_gio_definition2} and \eqref{eqn:aomb_gio_herm_definition}, the two-point function/correlator of GIOs with coupling matrix field is
\begin{align}
\langle \mathcal{O}_{\sigma_{1}}(Z) & (\mathcal{O}_{\sigma_{2}}(Z) )^{\dagger} \rangle \nonumber 
\\
&= \frac{1}{\Sigma[0]}  \int [dZ] \mathcal{O}_{\sigma_{1}}(Z)  (\mathcal{O}_{\sigma_{2}}(Z) )^{\dagger}  e^{-\text{Tr}(Z A Z^{\dagger})}
\\
&= \langle \text{Tr}_{V^{\otimes n}_{N}}(Z^{\otimes n} \cL_{\sigma_{1}})  \left(\text{Tr}_{V^{\otimes n}_{N}}(Z^{\otimes n} \cL_{\sigma_{2}}) \right)^{\dagger} \rangle  \label{eqn:bmf_exp_val_gen1} 
\\
&= \sum_{ \substack{ i_{1}, i_{2}, \dots , i_{n} \\ j_{1}, j_{2}, \dots , j_{n}}} \left\langle Z^{i_{1}}_{i_{\sigma_{1}(1)}} \dots Z^{i_{n}}_{i_{\sigma_{1}(n)}} (Z^{\dagger})^{j_{1}}_{j_{\sigma_{2}^{-1}(1)}} \dots (Z^{\dagger})^{j_{n}}_{j_{\sigma_{2}^{-1}(n)}} \right\rangle  \label{eqn:bmf_exp_val_gen2} 
\\
&= \sum_{ \substack{ i_{1}, i_{2}, \dots , i_{n} \\ j_{1}, j_{2}, \dots , j_{n}}} \sum_{\gamma \in S_{n}} \delta^{i_{1}}_{j_{ \sigma_{2}^{-1}\left(\gamma(1)\right)}} \dots   \delta^{i_{n}}_{j_{\sigma_{2}^{-1}\left(\gamma (n)\right)}} B^{j_{\gamma(1)}}_{i_{\sigma_{1}(1)}}\dots B^{j_{\gamma(n)}}_{i_{\sigma_{1}(n)}} \label{eqn:bmf_exp_val_gen3_point_5} 
\\
&= \sum_{ \substack{ i_{1}, i_{2}, \dots , i_{n} \\ j_{1}, j_{2}, \dots , j_{n}}} \sum_{\gamma \in S_{n}} \delta^{i_{1}}_{j_{\gamma \sigma_{2}^{-1}(1)}} \dots   \delta^{i_{n}}_{j_{\gamma \sigma_{2}^{-1}(n)}} B^{j_{\gamma(1)}}_{i_{\sigma_{1}(1)}}\dots B^{j_{\gamma(n)}}_{i_{\sigma_{1}(n)}} \label{eqn:bmf_exp_val_gen3} 
\\  
&= \sum_{ \substack{ i_{1}, i_{2}, \dots , i_{n} \\ j_{1}, j_{2}, \dots , j_{n}}} \sum_{\gamma \in S_{n}} \delta^{i_{\gamma^{-1} \sigma_{1}(1)}}_{j_{\gamma^{-1} \sigma_{1} \gamma \sigma_{2}^{-1}(1)}} \dots  \delta^{i_{\gamma^{-1} \sigma_{1}(n)}}_{j_{\gamma^{-1} \sigma_{1} \gamma \sigma_{2}^{-1}(n)}} B^{j_{1}}_{i_{\gamma^{-1} \sigma_{1}(1)}} \dots B^{j_{n}}_{i_{\gamma^{-1} \sigma_{1}(n)}} \label{eqn:bmf_exp_val_gen4}
\\ 
&= \sum_{j_{1}, j_{2}, \dots , j_{n}} \sum_{\gamma \in S_{n}} B_{j_{\gamma^{-1} \sigma_{1} \gamma \sigma_{2}^{-1}(1)}}^{j_{1}} \cdots B_{j_{\gamma^{-1} \sigma_{1} \gamma \sigma_{2}^{-1}(n)}}^{j_{n}} \label{eqn:bmf_exp_val_gen5}
\\
&= \sum_{\gamma \in S_{n}} \text{Tr}_{V_{N}^{\otimes n}}(B^{\otimes n} \cL_{\gamma^{-1}\sigma_{1}\gamma\sigma_{2}^{-1}}) \label{eqn:bmf_exp_val_gen6}\,,
\end{align}
In \eqref{eqn:bmf_exp_val_gen3_point_5}, Wick's theorem has been used to write the correlator as a sum over permutations $\gamma \in S_{n}$ in conjunction with the basic correlator of \eqref{eqn:bmf_new_correlator_part_func}. The step between equations \eqref{eqn:bmf_exp_val_gen3} and \eqref{eqn:bmf_exp_val_gen4} uses the following 
identity 
\begin{equation}
\label{eqn:bmf_kron_equiv}
\delta^{i_{\tau(1)}}_{j_{\sigma(1)}} \dots \delta^{i_{\tau(n)}}_{j_{\sigma(n)}} = \delta^{i_{1}}_{j_{\tau^{-1}\sigma(1)}} \dots \delta^{i_{n}}_{j_{\tau^{-1}\sigma(n)}}
\end{equation}
where the action of permutations on a number $a$ is $\tau^{-1} \sigma (a) = \sigma(\tau^{-1}(a))$, i.e. the chosen convention is to act with the leftmost permutation first (left-to-right).  We will refer to this identity for products of Kronecker delta functions  as \textit{Kronecker equivariance}. It follows by re-ordering the Kronecker delta's. In equation \eqref{eqn:bmf_exp_val_gen4}, we also used the identity 
\bea 
B^{j_{\gamma(1)}}_{i_{\sigma_{1}(1)}}\dots B^{j_{\gamma(n)}}_{i_{\sigma_{1}(n)}} 
= B^{j_{1}}_{i_{\gamma^{-1} \sigma_{1}(1)}} \dots B^{j_{n}}_{i_{\gamma^{-1} \sigma_{1}(n)}} \, . 
\eea
This follows from a  similar re-ordering of the $B$-matrix elements. The final equation \eqref{eqn:bmf_exp_val_gen6} writes the correlator as a  trace  over the $n$-fold tensor product of $V_N$. 
 
A neat way to understand the structure of the  derivation and to anticipate the outcome is to use the diagrammatic representation of linear operators in tensor spaces \cite{Corley:2002mj,Pasukonis:2013ts}. Figure \ref{fig:two_perm_correlator_bmf_diagram} demonstrates the original two-point function on the left, while the right hand side replaces the matrix variables with permutation operators and the tensor product of coupling matrix $B$. To obtain the result from the diagram, one follows a branch downward and multiplies the $B^{\otimes n}$ and $\mathcal{L}$ operator boxes encountered. The horizontal bars on each branch imply a trace by identifying the bottom and the top of the diagram. The figure therefore expresses the correlator equation 
\begin{align} 
\label{eqn:bmf_diagram_interpretation_step1}
\langle \text{Tr}_{V^{\otimes n}_{N}}(Z^{\otimes n}\mathcal{L}_{\sigma_{1}})  \text{Tr}_{V^{\otimes n}_{N}}((Z^{\dagger})^{\otimes n}\mathcal{L}_{\sigma_{2}^{-1}})  \rangle 
{}&= \sum_{\gamma \in S_{n}} \text{Tr}_{V^{\otimes n}_{N}}(B^{\otimes n} \mathcal{L}_{\gamma^{-1}} \mathcal{L}_{\sigma_{1}} \mathcal{L}_{\gamma} \mathcal{L}_{\sigma_{2}^{-1}})
\,,
\end{align} 
and we infer the equation on the right, starting at $B^{\otimes n}$ in the right side diagram. From definition \eqref{eqn:bmf_lin_op_action}, it is possible to show that multiple permutation operators acting on tensor space follow the rule $\mathcal{L}_{\sigma}\mathcal{L}_{\tau} = \mathcal{L}_{\sigma \tau}$, where again, $\sigma \tau (i) = \tau(\sigma(i))$ is the composition of elements, applying left-to-right convention. This action is represented diagrammatically in Figure \ref{fig:lin_op_combo} and using this fact, the expression of \eqref{eqn:bmf_diagram_interpretation_step1} reduces to that of the derived result in \eqref{eqn:bmf_exp_val_gen5}. For a more detailed discussion of the diagram interpretation and linear operators, see Appendix \ref{app:corr_diagram_interpret}.  
\begin{figure}[htb!]
\begin{center}
\includegraphics[width=15cm, height=7cm]{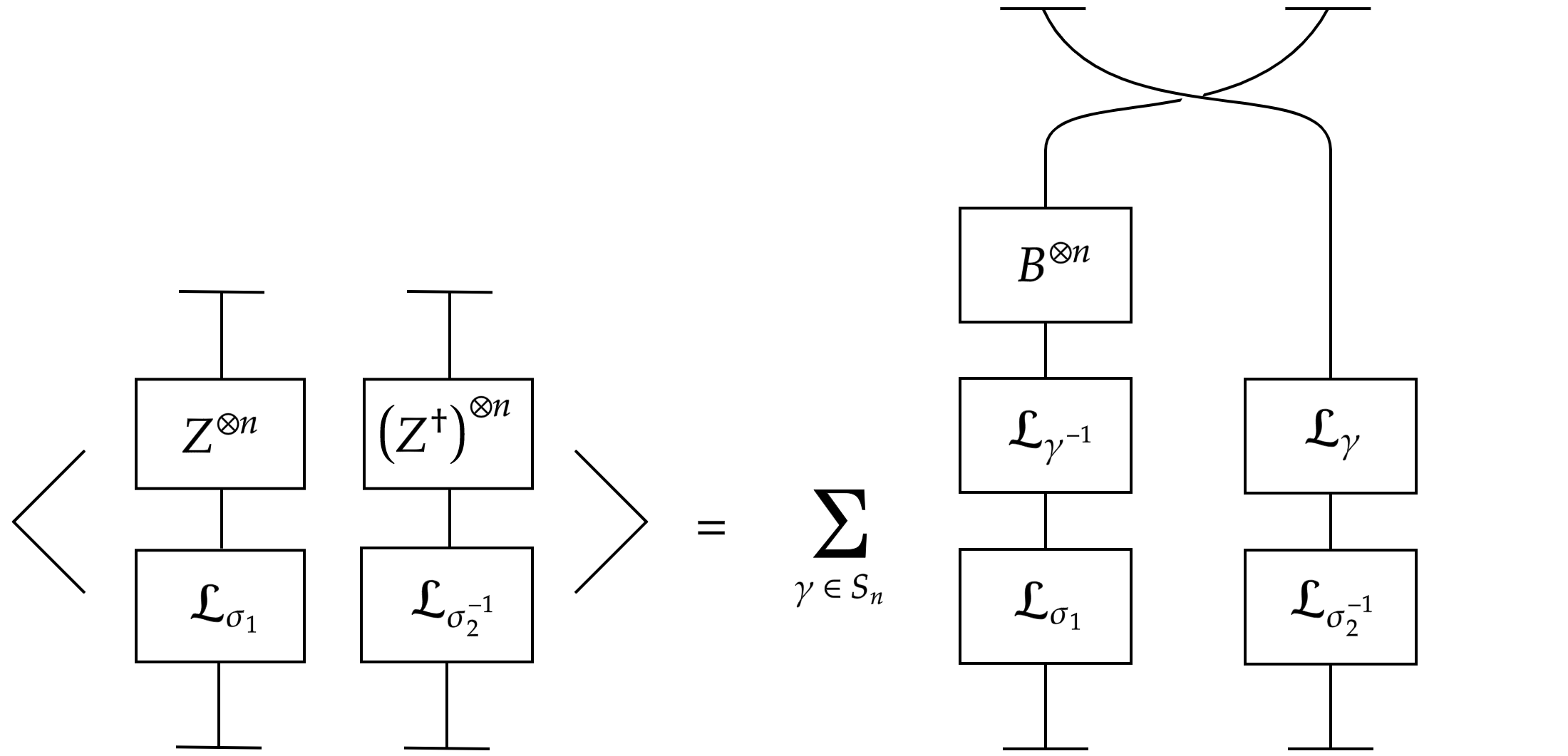}
\end{center}
\caption{The diagrammatic representation of the one-matrix correlator with coupling witness field. The expectation value of these gauge invariant operators can be written as a sum over permutations by exchanging $Z^{\otimes n}$ and $(Z^{\dagger})^{\otimes n}$ for permutation operators $\mathcal{L}_{\gamma^{-1}}$ and  $\mathcal{L}_{\gamma}$, inserting $B^{\otimes n}$ and then swapping the indices. This tensor product of the coupling witness matrix $B$ is incorporated when the $(\gamma^{-1},\gamma)$ permutations are inserted via Wick contraction, using equation \eqref{eqn:bmf_new_correlator_part_func}.}
\label{fig:two_perm_correlator_bmf_diagram}
\end{figure}
\begin{figure}[htb!]
\begin{center}
\includegraphics[width=5.5cm, height=5.5cm]{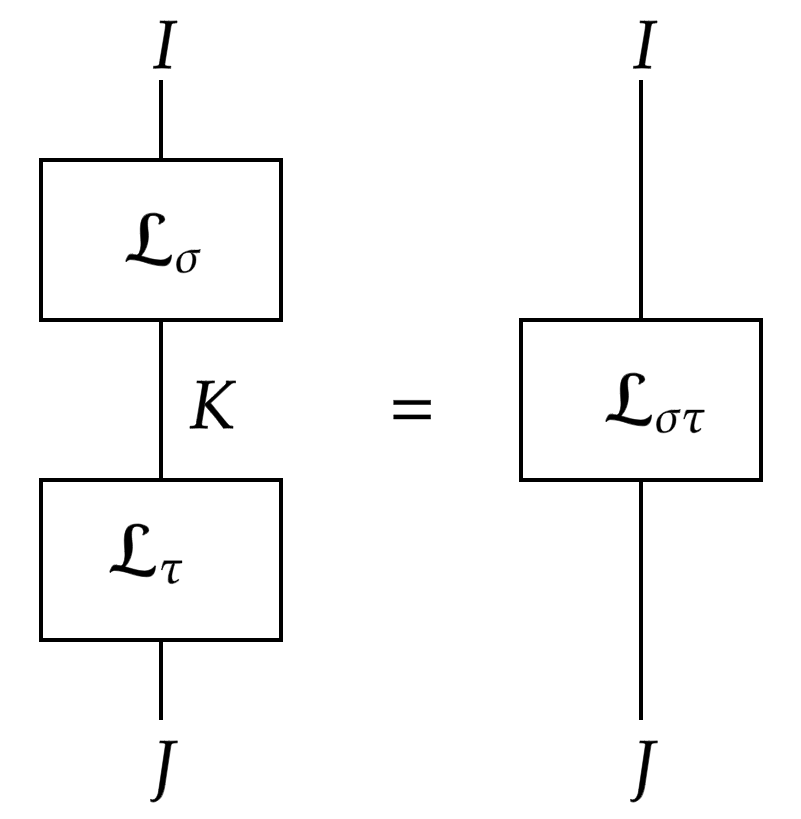}
\end{center}
\caption{The diagrammatic interpretation of the multiplication of linear operators $\mathcal{L}_{\sigma}$ and $\mathcal{L}_{\tau}$. In terms of the tensor components of these operators, $(\mathcal{L}_{\sigma})^{i_{1}, \dots, i_{n}}_{k_{1}, \dots, k_{n}} (\mathcal{L}_{\tau})^{k_{1}, \dots, k_{n}}_{j_{1}, \dots, j_{n}} = (\mathcal{L}_{\sigma \tau})^{i_{1}, \dots, i_{n}}_{j_{1}, \dots, j_{n}}$, where the sum over repeated $k$ indices is implicit.}
\label{fig:lin_op_combo}
\end{figure}

Like the previous work without witness matrix fields, this two-point function is naturally related to the cycle structure of the product of these permutations
\begin{align}
\label{eqn:bmf_cycle_struc_B_mat}
\langle \mathcal{O}_{\sigma_{1}}(Z) (\mathcal{O}_{\sigma_{2}}(Z) )^{\dagger} \rangle 
&= \sum_{\gamma \in S_{n}}  \sum_{j_{1}, j_{2}, \dots , j_{n}} B_{j_{\gamma^{-1} \sigma_{1} \gamma \sigma_{2}^{-1}(1)}}^{j_{1}} \cdots B_{j_{\gamma^{-1} \sigma_{1} \gamma \sigma_{2}^{-1}(n)}}^{j_{n}}
\\
&= \sum_{\gamma \in S_{n}} \prod_{i=1}^{n} \left[ \text{Tr}(B^{i}) \right]^{C_{i}(\gamma^{-1} \sigma_{1} \gamma \sigma_{2}^{-1})} \label{eqn:bmf_cycle_struc_B_mat3}
\\
&=\sum_{\sigma_{3} \in S_{n}} \sum_{\gamma \in S_{n}} \prod_{i=1}^{n} \left[\text{Tr}(B^{i}) \right]^{C_{i}(\sigma_{3})} \delta(\sigma_{3}^{-1} \gamma^{-1} \sigma_{1} \gamma \sigma_{2}^{-1}) \label{eqn:bmf_cycle_struc_B_mat4}\,.
\end{align}
Here ``$\text{Tr}$" is the matrix trace and $C_{i}(\gamma^{-1} \sigma_{1} \gamma \sigma_{2}^{-1})$ is the number of $i$-length cycles in the permutation $\gamma^{-1} \sigma_{1} \gamma \sigma_{2}^{-1}$. In \eqref{eqn:bmf_cycle_struc_B_mat4}, a delta function on permutations is introduced using $\sigma_{3} \in S_{n}$. This is defined by $\delta(\sigma) = 1$ for $\sigma = e$ (the identity element) and $\delta(\sigma) = 0$ for all other elements.  From here, the appearance of the underlying PCA in the correlator can also be made manifest. Since the $\sigma_{3}$ sum runs over the entire group, it can be decomposed in terms of conjugacy classes/partitions
\begin{align}
\sum_{\sigma_{3} \in S_{n}} \sum_{\gamma \in S_{n}} \prod_{i=1}^{n}& \left[\text{Tr}(B^{i}) \right]^{C_{i}(\sigma_{3})}\delta(\sigma_{3}^{-1} \gamma^{-1} \sigma_{1} \gamma \sigma_{2}^{-1}) \nonumber
\\
&= \sum_{\gamma \in S_{n}} \sum_{p_{3}  \vdash n } \left( \sum_{\alpha \in \mathbf{C}_{p_{3}}} \prod_{i=1}^{n} \left[\text{Tr}(B^{i}) \right]^{C_{i}(\alpha)} \delta(\alpha^{-1} \gamma^{-1} \sigma_{1} \gamma \sigma_{2}^{-1})  \right) 
\\
& = \sum_{\gamma \in S_{n}} \sum_{p_{3}  \vdash n}  \prod_{i=1}^{n} \left[\text{Tr}(B^{i}) \right]^{C_{i}\left(\sigma^{\left(p_{3}\right)}\right)} \delta \left(\sum_{\alpha \in \mathbf{C}_{p_{3}}} \alpha^{-1} \gamma^{-1} \sigma_{1} \gamma \sigma_{2}^{-1} \right) 
\\
& = \sum_{p_{3} \vdash n } \mathcal{O}_{\sigma^{\left(p_{3}\right)}}(B) \sum_{\gamma \in S_{n}}\delta \left(T_{p_{3}} \gamma^{-1} \sigma_{1} \gamma \sigma_{2}^{-1} \right)  \label{eqn:bmf_obs_with_p_label_and_delta} \,,
\end{align}
where $p_{3}$ labels the conjugacy classes in the decomposition. Note that $C_{i}(\alpha)$ is replaced with $C_{i}(\sigma^{\left(p_{3} \right)})$ here, as both $\alpha$ and $\sigma^{(p_{3})}$ represent permutations from conjugacy class $\mathbf{C}_{p_{3}}$ and therefore have the same cycle structure. Consequently, the sum over $\alpha$ can move inside the delta function and since $\alpha$ and $\alpha^{-1}$ belong to the same conjugacy class $\mathbf{C}_{p_{3}}$, the PCA element $T_{p_{3}} = \sum_{\alpha \in \mathbf{C}_{p_{3}}} \alpha =\sum_{\alpha \in \mathbf{C}_{p_{3}}} \alpha^{-1}$, is used. Equation \eqref{eqn:bmf_obs_with_p_label_and_delta} then sets 
\begin{equation}
\prod_{i=1}^{n} \left[\text{Tr}(B^{i}) \right]^{C_{i}\left(\sigma^{\left(p_{3}\right)}\right)}= \text{Tr}_{V^{\otimes n}_{N}}\left(B^{\otimes n} \cL_{\sigma^{\left(p_{3} \right)}} \right) = \mathcal{O}_{\sigma^{\left(p_{3} \right)}}(B) \,,
\end{equation}
following the general definition in equation \eqref{eqn:aomb_gio_definition}, and noting again that $\sigma^{\left(p_{3}\right)}$ is any representative permutation from class $\mathbf{C}_{p_{3}}$. At this point, it is useful to introduce the following lemma.
\begin{lemma} 
\label{lem:single-matrix_lemma}
For $\gamma, \sigma_{1},\sigma_{2} \in S_{n}$, $T_{p_{i}} \in \mathcal{Z}[\mathbb{C}[S_{n}]]$ the following equality holds 
\begin{equation}
\sum_{\gamma \in S_{n}} \delta(T_{p_{3}} \gamma^{-1} \sigma_{1} \gamma \sigma_{2}^{-1}) =  \frac{n!|T_{p_{3}} |}{|T_{p_{1}} | |T_{p_{2}}|} C^{p_{3}}_{p_{1} p_{2}} 
\end{equation}
where $C^{p_{3}}_{p_{1} p_{2}}$ is a $\mathcal{Z}[\mathbb{C}[S_{n}]]$ PCA structure constant and $\sigma_{1}, \sigma_{2}^{-1}$ belong to conjugacy classes $\mathbf{C}_{p_{1}}, \mathbf{C}_{p_{2}}$ respectively.
\end{lemma}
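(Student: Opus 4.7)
The plan is to reduce the left-hand side to a count of solutions of an equation in $S_n$ that matches, up to the standard symmetry of class algebra structure constants, the definition of $C^{p_3}_{p_1 p_2}$. First I would expand the central element as $T_{p_3} = \sum_{\alpha_3 \in \mathbf{C}_{p_3}} \alpha_3$, so that
\begin{equation*}
\sum_{\gamma \in S_n} \delta\!\left(T_{p_3}\,\gamma^{-1}\sigma_1\gamma\,\sigma_2^{-1}\right) \;=\; \sum_{\gamma \in S_n}\sum_{\alpha_3 \in \mathbf{C}_{p_3}} \delta\!\left(\alpha_3\,\gamma^{-1}\sigma_1\gamma\,\sigma_2^{-1}\right).
\end{equation*}
The delta forces $\gamma^{-1}\sigma_1\gamma = \alpha_3^{-1}\sigma_2$, and the manifest $\gamma \to \gamma\eta$ invariance of the sum shows that the answer depends only on the conjugacy classes $p_1, p_2$ (and $p_3$), consistent with the statement.

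Next I would trade the sum over $\gamma \in S_n$ for a sum over $\tilde\sigma_1 = \gamma^{-1}\sigma_1\gamma \in \mathbf{C}_{p_1}$: each $\tilde\sigma_1$ in the class is achieved by $|\mathrm{Aut}_{\mathcal{Z}}(p_1)| = n!/|T_{p_1}|$ values of $\gamma$ (the stabiliser of $\sigma_1$). This gives
\begin{equation*}
\sum_{\gamma \in S_n} \delta\!\left(T_{p_3}\,\gamma^{-1}\sigma_1\gamma\,\sigma_2^{-1}\right) \;=\; \frac{n!}{|T_{p_1}|}\sum_{\tilde\sigma_1 \in \mathbf{C}_{p_1}} \sum_{\alpha_3 \in \mathbf{C}_{p_3}} \delta\!\left(\alpha_3\,\tilde\sigma_1\,\sigma_2^{-1}\right),
\end{equation*}
so the problem reduces to counting pairs $(\alpha_3,\tilde\sigma_1) \in \mathbf{C}_{p_3}\times \mathbf{C}_{p_1}$ with $\tilde\sigma_1\,\alpha_3^{-1} = \sigma_2^{-1}$, equivalently $\alpha_3\,\tilde\sigma_1^{-1} = \sigma_2$.

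Now I would extract a structure constant. Since $\mathbf{C}_{p_1}$ is closed under inversion in $S_n$ (a permutation and its inverse share the same cycle type), replacing $\tilde\sigma_1 \to \tilde\sigma_1^{-1}$ is a bijection on $\mathbf{C}_{p_1}$, and the double sum becomes the number of pairs $(\alpha_3, \beta_1) \in \mathbf{C}_{p_3} \times \mathbf{C}_{p_1}$ with $\alpha_3\beta_1 = \sigma_2$. By the very definition of the centre structure constants, the coefficient of any fixed $\sigma_2 \in \mathbf{C}_{p_2}$ in the product $T_{p_3}T_{p_1} = \sum_{p_2} C^{p_2}_{p_3 p_1} T_{p_2}$ is exactly this count; commutativity of $\mathcal{Z}[\mathbb{C}[S_n]]$ gives $C^{p_2}_{p_3 p_1} = C^{p_2}_{p_1 p_3}$, so
\begin{equation*}
\sum_{\gamma \in S_n} \delta\!\left(T_{p_3}\,\gamma^{-1}\sigma_1\gamma\,\sigma_2^{-1}\right) \;=\; \frac{n!}{|T_{p_1}|}\, C^{p_2}_{p_1 p_3}.
\end{equation*}

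Finally I would convert $C^{p_2}_{p_1 p_3}$ into $C^{p_3}_{p_1 p_2}$ using the standard symmetry of class algebra structure constants for $S_n$. Counting triples $(\alpha_1,\alpha_2,\alpha_3) \in \mathbf{C}_{p_1}\times\mathbf{C}_{p_2}\times\mathbf{C}_{p_3}$ satisfying $\alpha_1\alpha_2\alpha_3^{-1}=e$ in two different orders (exploiting again that classes in $S_n$ are inversion-closed) gives $|T_{p_2}|\,C^{p_2}_{p_1 p_3} = |T_{p_3}|\,C^{p_3}_{p_1 p_2}$. Substituting this identity produces the claimed form $\frac{n!\,|T_{p_3}|}{|T_{p_1}|\,|T_{p_2}|}\,C^{p_3}_{p_1 p_2}$. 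The main thing to be careful about is the bookkeeping of inverses and left/right orderings — in particular making sure that the $\tilde\sigma_1 \to \tilde\sigma_1^{-1}$ relabelling is needed precisely to align the count with the product $T_{p_1}T_{p_3}$ and not its ``opposite'' — and verifying that each use of inversion-closure of $S_n$ conjugacy classes is legitimate; the rest is straightforward combinatorics.
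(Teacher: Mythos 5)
Your proof is correct, but it takes a genuinely different route from the paper's. The paper gauge-averages over both insertions: it writes $\gamma = \mu_1\mu_2$, introduces an extra $\tfrac{1}{n!}$-normalised sum over $\mu_2$, and uses centrality of $T_{p_3}$ to collapse the whole sum to $\tfrac{n!}{|T_{p_1}||T_{p_2}|}\,\delta\!\left(T_{p_3}T_{p_1}T_{p_2}\right)$, from which $C^{p_3}_{p_1p_2}$ is read off directly via $T_{p_1}T_{p_2} = \sum_{p_k}C^{p_k}_{p_1p_2}T_{p_k}$ and $\delta(T_{p_3}T_{p_k}) = \delta_{p_3p_k}|T_{p_k}|$, so the structure-constant indices land immediately in the position required by the statement. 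You instead symmetrise only over $\sigma_1$ (orbit--stabiliser gives the factor $n!/|T_{p_1}|$), keep $\sigma_2$ fixed, identify the remaining double sum as the coefficient of $\sigma_2$ in $T_{p_3}T_{p_1}$, i.e. $C^{p_2}_{p_3p_1}=C^{p_2}_{p_1p_3}$, and then need the additional symmetry $|T_{p_2}|\,C^{p_2}_{p_1p_3} = |T_{p_3}|\,C^{p_3}_{p_1p_2}$, which you correctly justify by counting triples with product $e$ and using inversion-closure of $S_n$ conjugacy classes. Your route buys a very concrete element-counting picture with no auxiliary averaging sum; the paper's route buys the absence of any index reshuffling, with inversion-closure entering only through dropping the primes on the class labels. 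One small bookkeeping slip: the delta enforces $\alpha_3\tilde\sigma_1 = \sigma_2$, equivalently $\tilde\sigma_1^{-1}\alpha_3^{-1} = \sigma_2^{-1}$, not $\tilde\sigma_1\alpha_3^{-1} = \sigma_2^{-1}$ as written; this is harmless, since your relabelling $\tilde\sigma_1 \to \tilde\sigma_1^{-1}$ over the inversion-closed class $\mathbf{C}_{p_1}$ restores exactly the correct count --- indeed, with the condition stated correctly that relabelling is not needed at all, as the double sum is already the coefficient of $\sigma_2$ in $T_{p_3}T_{p_1}$.
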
 
\begin{proof}
\begin{align}
\sum_{\gamma \in S_{n}} \delta(T_{p_{3}} \gamma^{-1} \sigma_{1} \gamma \sigma_{2}^{-1}) 
&= \sum_{\mu_{1} \in S_{n}} \delta\left(T_{p_{3}} (\mu_{1} \mu_{2})^{-1} \sigma_{1} (\mu_{1} \mu_{2}) \sigma_{2}^{-1}\right) \label{eqn:bmf_lem_relabel_gam}
\\
&= \sum_{\mu_{1} \in S_{n}} \delta\left((\mu_{2}^{-1} \mu_{2})T_{p_{3}} \mu_{2}^{-1} \mu_{1}^{-1} \sigma_{1} \mu_{1} \mu_{2} \sigma_{2}^{-1}\right) \label{eqn:bmf_lem_insert_id}
\\
&=  \sum_{\mu_{1} \in S_{n}} \delta \left( \underbrace{\mu_{2}T_{p_{3}} \mu_{2}^{-1}}_{=T_{p_{3}}} \mu_{1}^{-1} \sigma_{1} \mu_{1} \mu_{2} \sigma_{2}^{-1} \mu_{2}^{-1}\right)\label{eqn:bmf_lem_cycles_mu2inv}
\\
&=  \frac{1}{n!} \sum_{\mu_{1},\mu_{2} \in S_{n}} \delta \left( T_{p_{3}} (\mu_{1}^{-1} \sigma_{1} \mu_{1}) (\mu_{2} \sigma_{2}^{-1} \mu_{2}^{-1}) \right)\label{eqn:bmf_lem_sum_over_mu2_added}
\\
&=  \frac{1}{n!}  \delta \left( T_{p_{3}} \left(\sum_{\mu_{1} \in S_{n}}\mu_{1}^{-1} \sigma_{1} \mu_{1}\right) \left(\sum_{\mu_{2} \in S_{n}} \mu_{2} \sigma_{2}^{-1} \mu_{2}^{-1}\right) \right) \label{eqn:bmf_lem_collecting_sums_in_delta}
\\
&= \frac{| \text{Aut}_{\mathcal{Z}}\left(p_{1}\right)|| \text{Aut}_{\mathcal{Z}}\left(p'_{2}\right)|}{n!}  \delta \left( T_{p_{3}} T_{p_{1}} T_{p'_{2}} \right) \label{eqn:bmf_lem_eles_and_autos_included}
\\
&= \frac{n!}{|T_{p_{1}}||T_{p_{2}}|}  \delta \left( T_{p_{3}} T_{p_{1}} T_{p_{2}} \right)
\label{eqn:bmf_lem_all_pca_eles_prime_removed} \,.
\end{align}
In \eqref{eqn:bmf_lem_relabel_gam}, the sum over  $\gamma$ has been replaced by a sum over 
$\mu_1$ with $\gamma \rightarrow \mu_1 \mu_2 $. Equation \eqref{eqn:bmf_lem_insert_id} inserts the identity element $e = \mu_{2}^{-1} \mu_{2}$, while \eqref{eqn:bmf_lem_cycles_mu2inv} cycles the $\mu_{2}^{-1}$ permutation to the right hand side of the delta and utilises $T_{p_{3}} = \mu_{2} T_{p_{3}} \mu_{2}^{-1}$. An additional sum  over $\mu_{2}$ is introduced in \eqref{eqn:bmf_lem_sum_over_mu2_added}, and the appropriate normalisation factor $1/n!$ introduced (see Appendix \ref{app:delta_function_sums} for similar arguments). Equation \eqref{eqn:bmf_lem_collecting_sums_in_delta} collects the sums inside the delta functions, leading to the introduction of the PCA elements in \eqref{eqn:bmf_lem_eles_and_autos_included}, along with the automorphism group sizes as per the PCA definition in \eqref{eqn:bmf_zcs_algebra_ele}. We take $\sigma_{1}$ and $\sigma_{2}^{-1}$ to belong to conjugacy classes $\mathbf{C}_{p_{1}}$ and $\mathbf{C}_{p_{2}'}$ respectively: the primes on the partition/class labels indicating that they are related to an inverse permutation. The final step makes use of the orbit-stabiliser theorem, whereby the size of the automorphism group is related to the conjugacy/equivalence class size and gauge permutation group order by $|\text{Aut}_{\mathcal{Z}}(p)| = |S_{n}|/|T_{p}|$. As was the case with $T_{p_{3}}$, since a permutation and its inverse share the same conjugacy class, the prime labels have subsequently been removed in \eqref{eqn:bmf_lem_all_pca_eles_prime_removed}.

The next step makes use of PCA multiplication and identities of the delta function 
\begin{equation}
\delta(T_{p_{3}} T_{p_{1}} T_{p_{2}}) = \sum_{p_{k}} C^{p_{k}}_{p_{1} p_{2}} \delta(T_{p_{3}}T_{p_{k}}) = \sum_{p_{k}} \delta_{p_{3} p_{k}} C^{p_{k}}_{p_{1} p_{2}} |T_{p_{k}}| = |T_{p_{3}}|C^{p_{3}}_{p_{1} p_{2}}\,.
\end{equation}
Therefore
\begin{align}
\sum_{\gamma \in S_{n}} \delta(T_{p_{3}} \gamma^{-1} \sigma_{1} \gamma \sigma_{2}^{-1})  
&=  \frac{n!|T_{p_{3}} |}{|T_{p_{1}} | |T_{p_{2}}|} C^{p_{3}}_{p_{1} p_{2}}  \label{eqn:bmf_exp_val_with_bmf2}\,.
\end{align}
\end{proof}
The result of this lemma combined with equation \eqref{eqn:bmf_obs_with_p_label_and_delta}, directly shows that upon inclusion of a coupling matrix field, the correlator may be written as 
\begin{equation}
\label{eqn:aomb_correlator_result}
\langle \mathcal{O}_{\sigma_{1}}(Z)  (\mathcal{O}_{\sigma_{2}}(Z) )^{\dagger} \rangle = \sum_{p_{3}  \vdash n} \mathcal{O}_{\sigma^{\left(p_{3}\right)}}(B) \frac{n!|T_{p_{3}} |}{|T_{p_{1}} | |T_{p_{2}}|} C^{p_{3}}_{p_{1} p_{2}} \,.
\end{equation}
As the GIOs in the correlator are functions of conjugacy class, we may write $\mathcal{O}_{\sigma_{1}} \equiv \mathcal{O}_{\sigma^{\left(p_{1}\right)}}$ and $\mathcal{O}_{\sigma_{2}} \equiv \mathcal{O}_{\sigma^{\left(p_{2}\right)}}$.  Additionally, since a PCA element is the sum of permutations in a given equivalence class, the GIO may also be written in combinatorial basis form, using PCA elements divided by the size of the class 
\begin{equation}
\label{eqn:aomb_op_oerm_to_pca_relation}
\mathcal{O}_{\sigma^{\left(p_{i}\right)}} = \frac{1}{|T_{p_{i}}|}\mathcal{O}_{T_{p_{i}}} \equiv \frac{1}{|T_{p_{i}}|}\mathcal{O}_{p_{i}} \,.
\end{equation}
Therefore,  by rearranging the class size factors, the final result can be written as
\begin{equation}
\label{eqn:aomb_correlator_result_boxed}
\boxed{\langle \mathcal{O}_{p_{1}}(Z)  (\mathcal{O}_{p_{2}}(Z))^{\dagger} \rangle = n!\sum_{p_{3} \vdash n} C^{p_{3}}_{p_{1} p_{2}}  \mathcal{O}_{p_{3}}(B)}
\end{equation}
showing that the insertion of gauge invariant functions in  combinatorial basis,  $\cO_{ p_1} ( Z) , (\mathcal{O}_{p_{2}}(Z))^{\dagger}$ of the fluctuating/quantum field $Z$,  is equal to a linear combination of gauge invariant functions of the $B$ witness fields. The structure constants $C_{ p_1 p_2 }^{p_3} $  can be exactly reconstructed 
by choosing the appropriate basis labels $( p_1, p_2 , p_3 ) $ for the gauge invariant functions of the quantum and witness fields. Contact with the previous result of \cite{Corley:2001zk} can be made by setting the coupling matrix in equation \eqref{eqn:bmf_cycle_struc_B_mat3} equal to the $N \times N$ identity matrix, i.e. $B = \mathbb{I}$,
\begin{align}
\begin{split}
\label{eqn:aomb_recover_prev_res}
\langle \mathcal{O}_{\sigma_{1}}(Z) (\mathcal{O}_{\sigma_{2}}(Z))^{\dagger} \rangle 
&= 
\sum_{\gamma \in S_{n}} \prod_{i=1}^{n} \left[ \text{Tr}(\mathbb{I}^{i}) \right]^{C_{i}(\gamma^{-1} \sigma_{1} \gamma \sigma_{2}^{-1})} 
\\
&=  \sum_{\gamma \in S_{n}} \prod_{i=1}^{n}  N^{C_{i}(\gamma^{-1} \sigma_{1} \gamma \sigma_{2}^{-1})} 
\\
&=  \sum_{\gamma \in S_{n}} N^{C_{1}(\gamma^{-1} \sigma_{1} \gamma \sigma_{2}^{-1}) + \dots + C_{n}(\gamma^{-1} \sigma_{1} \gamma \sigma_{2}^{-1})} 
\\
&= \sum_{\gamma \in S_{n}} N^{C_{\gamma^{-1} \sigma_{1} \gamma \sigma_{2}^{-1}}}\,.
\end{split}
\end{align}
where $C_{i}(\sigma)$ is the number of $i$-length cycles in $\sigma$ and $C_{\sigma}$ is the total number of cycles in $\sigma$.

\subsection{Fourier basis for two-point function with matrix coupling}
\label{ss:single_matrix_schur_basis}

We have established above a direct connection between the structure constants of the algebra 
$\mathcal{Z}[\mathbb{C}[S_{n}]]$ and the two-point correlator of matrix observables in the presence of a matrix coupling. In this section we will show that a Fourier transform on $\mathcal{Z}[\mathbb{C}[S_{n}]]$ which  maps the combinatorial basis to a Young-diagram basis results in on orthogonal 2-point correlator in the presence of the matrix coupling. This generalizes the orthogonality result for Young-diagram basis operators, also called Schur polynomial operators, which captures finite $N$ effects in terms of a simple cut-off on the Young diagram and informs  the map between half-BPS operators 
in $ \cN =4$ SYM and giant gravitons   \cite{Corley:2001zk,Ramgoolam:2016ciq}. The role of the Fourier transform on algebras in the construction of orthogonal bases was emphasized in \cite{Brown:2008ij,Brown:2007xh,Pasukonis:2013ts,deMelloKoch:2012ck,Mattioli:2016eyp}. 


Applying the Fourier transform to the single matrix operator, as seen in \S \ref{ss:background_matrix_field}, we define a general gauge invariant operator as
\begin{equation}
\label{eqn:spb_complex_character_schur}
\mathcal{O}_{R}(Z) = \frac{1}{n!}\sum_{\sigma \in S_{n}} \chi_{R}(\sigma)\text{Tr}_{V^{\otimes n}}( Z^{\otimes n} \cL_{\sigma}) = \frac{1}{n!}\sum_{\sigma \in S_{n}} \chi_{R}(\sigma)\mathcal{O}_{\sigma}(Z)\,.
\end{equation}
along with conjugate operator
\begin{equation}
\left(\mathcal{O}_{R}(Z)\right)^{\dagger} = \frac{1}{n!}\sum_{\sigma \in S_{n}} \left(\chi_{R}(\sigma)\right)^{*} \left(\text{Tr}_{V^{\otimes n}}( Z^{\otimes n} \cL_{\sigma})\right)^{\dagger} = \frac{1}{n!}\sum_{\sigma \in S_{n}} \chi_{R}(\sigma) \left(\mathcal{O}_{\sigma}(Z)\right)^{\dagger} \,.
\end{equation}
Here $R$ denotes the irreducible representation (irrep) of $S_{n}$, $Z$ is a complex matrix and $\chi_{R}(\sigma)$ is the character of the representation $R$ of element $\sigma \in S_{n}$ \cite{Corley:2001zk}. The fact $(\chi_{R}(\sigma))^{*} = \chi_{R}(\sigma^{-1}) = \chi_{R}(\sigma)$ has been used since characters of $S_{n}$ representations can be chosen to be real and the operator $\mathcal{O}_{\sigma}(Z)$ is as previously defined in \eqref{eqn:aomb_gio_definition}. These $\mathcal{O}_{R}(Z)$, with $R$ having $n$ boxes,  form a basis in the space of $U(N)$ invariant polynomial functions of the matrix $Z$ of degree $n$, where $U(N)$ acts on $Z$ by conjugation. The calculation of the general GIO two-point functions starts with
\begin{align}
\label{eqn:spb_schur_gen_gauge_inv}
\langle \mathcal{O}_{R}(Z) \left(\mathcal{O}_{S}(Z)\right)^{\dagger} \rangle 
&= \frac{1}{(n!)^2} \sum_{\sigma \in S_{n}} \sum_{\tau \in S_{n}} \chi_{R}(\sigma) \chi_{S}(\tau) \langle \mathcal{O}_{\sigma}(Z) \left(\mathcal{O}_{\tau}(Z)\right)^{\dagger} \rangle \,.
\end{align}
The expectation value appearing in the right hand side is just the  correlator found in the previous section using the permutation parameterisation of the observables. Using its form given in equation \eqref{eqn:bmf_cycle_struc_B_mat4}, stated again here for convenience 
\begin{equation}
\langle \mathcal{O}_{\sigma}(Z) \left(\mathcal{O}_{\tau}(Z)\right)^{\dagger} \rangle = \sum_{\rho \in S_{n}} \sum_{\gamma \in S_{n}} \prod_{i=1}^{n} \delta(\rho^{-1} \gamma^{-1} \sigma \gamma \tau^{-1}) \left[\text{Tr}(B^{i}) \right]^{C_{i}(\rho)}\,,
\end{equation}
this can be substituted in to \eqref{eqn:spb_schur_gen_gauge_inv} to achieve the following
\begin{equation}
\label{eqn:spb_corr_as_delta_func}
\langle \mathcal{O}_{R}(Z) \left(\mathcal{O}_{S}(Z) \right)^{\dagger} \rangle =  
\frac{1}{(n!)^2} \sum_{\substack{\sigma, \tau, \rho, \gamma \\  \in S_{n}}}\prod_{i=1}^{n} \chi_{R}(\sigma) \chi_{S}(\tau)  \delta(\rho^{-1} \gamma^{-1} \sigma \gamma \tau^{-1}) \left[\text{Tr}(B^{i}) \right]^{C_{i}(\rho)} \,.
\end{equation}
To simplify the correlator further, the following lemma is introduced.
\begin{lemma}
\label{lem:single-matrix_schur_basis_lemma}
For $\sigma, \tau, \rho, \gamma \in S_{n}$ and where $\chi_{R}(\sigma)$ are characters of $\sigma$ in representation $R$, the following equality holds
\begin{equation}
\frac{1}{(n!)^2} \sum_{\substack{\sigma, \tau, \rho, \gamma \\  \in S_{n}} }\prod_{i=1}^{n} \chi_{R}(\sigma) \chi_{S}(\tau)  \delta(\rho^{-1} \gamma^{-1} \sigma \gamma \tau^{-1}) \left[\text{Tr}(B^{i}) \right]^{C_{i}(\rho)} = \frac{n!\delta_{RS} }{d_{R}} \mathcal{O}_{S}(B) \,.
\end{equation}
where $d_{R}$ is the dimension of the symmetry group's $R$ representation, $\text{Dim}(V_{R}^{S_{n}})$.
\end{lemma}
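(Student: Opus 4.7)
The plan is to collapse the four permutation sums one by one using the delta function and standard character orthogonality. Observe that the $n$-fold product on the left-hand side is exactly $\mathcal{O}_\rho(B)$ by the cycle decomposition identity noted right after \eqref{eqn:aomb_recover_prev_res}. So the left-hand side is
\begin{equation*}
\frac{1}{(n!)^2} \sum_{\sigma,\tau,\rho,\gamma \in S_n} \chi_R(\sigma)\chi_S(\tau)\,\delta(\rho^{-1}\gamma^{-1}\sigma\gamma\tau^{-1})\,\mathcal{O}_\rho(B).
\end{equation*}

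First, I would use the delta to eliminate the $\sigma$ sum by enforcing $\sigma = \gamma\rho\tau\gamma^{-1}$. Since $\chi_R$ is a class function, $\chi_R(\gamma\rho\tau\gamma^{-1}) = \chi_R(\rho\tau)$, so the $\gamma$ dependence disappears entirely and the $\gamma$ sum contributes a factor of $n!$. This reduces the expression to
\begin{equation*}
\frac{1}{n!}\sum_{\rho,\tau \in S_n}\chi_R(\rho\tau)\,\chi_S(\tau)\,\mathcal{O}_\rho(B).
\end{equation*}

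Next I would perform the $\tau$ sum using the character orthogonality identity with an inserted group element,
\begin{equation*}
\sum_{\tau \in S_n}\chi_R(\rho\tau)\,\chi_S(\tau^{-1}) \;=\; \frac{n!}{d_R}\,\delta_{RS}\,\chi_R(\rho),
\end{equation*}
which is an immediate consequence of the Grand Orthogonality Theorem for the matrix elements $D^R_{ij}(\tau)$. Since $S_n$ characters are real, $\chi_S(\tau)=\chi_S(\tau^{-1})$, so this identity applies directly. The expression becomes
\begin{equation*}
\frac{\delta_{RS}}{d_R}\sum_{\rho \in S_n}\chi_R(\rho)\,\mathcal{O}_\rho(B).
\end{equation*}
Finally, I would recognise the remaining sum as $n!$ times the Schur-basis operator $\mathcal{O}_R(B)$ via the very definition \eqref{eqn:spb_complex_character_schur}, and use $\delta_{RS}$ to relabel $R \to S$, yielding $\frac{n!\,\delta_{RS}}{d_R}\mathcal{O}_S(B)$.

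No step here is really hard; the only point that requires care is the version of character orthogonality with an inserted permutation, which is the essential ingredient that lets the sum over $\tau$ diagonalise in the representation label. Everything else is bookkeeping on class functions and the delta function, entirely analogous to the manipulations performed in Lemma \ref{lem:single-matrix_lemma}.
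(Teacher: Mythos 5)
Your proposal is correct and follows essentially the same route as the paper's proof: collapse the $\sigma$ sum via the delta function, use conjugation invariance of $\chi_R$ so the $\gamma$ sum gives $n!$, apply the character orthogonality identity with an inserted element to diagonalise in $R,S$, and identify the remaining sum as $n!\,\mathcal{O}_S(B)$. The only cosmetic difference is that you trade in the product $\prod_i[\text{Tr}(B^i)]^{C_i(\rho)}$ for $\mathcal{O}_\rho(B)$ at the outset rather than at the end, which changes nothing of substance.
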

\begin{proof}
\begin{align}
\frac{1}{(n!)^2} \sum_{\substack{\sigma, \tau, \rho, \gamma \\  \in S_{n}}}\prod_{i=1}^{n}\chi_{R}(\sigma) \chi_{S}(\tau)&  \delta(\rho^{-1} \gamma^{-1} \sigma \gamma \tau^{-1}) \left[\text{Tr}(B^{i}) \right]^{C_{i}(\rho)} \nonumber\\
&=  \frac{1}{(n!)^2} \sum_{\tau, \rho, \gamma  \in S_{n}}\prod_{i=1}^{n} \chi_{R}(\gamma \rho \tau \gamma^{-1}) \chi_{S}(\tau)  \left[\text{Tr}(B^{i}) \right]^{C_{i}(\rho)} \label{eqn:spb_correlator_sigma_summed}
\\
&=  \sum_{\rho \in S_{n} } \prod_{i=1}^{n} \left(  \frac{1}{n!} \sum_{\tau\in S_{n}}\chi_{S}(\tau) \chi_{R}(\rho\tau) \right)  \left[\text{Tr}(B^{i}) \right]^{C_{i}(\rho)} \label{eqn:spb_correlator_char_invariance}
\\
& = \frac{\delta_{RS}}{d_{S}}  \sum_{\rho \in S_{n}}  \chi_{S}(\rho) \prod_{i=1}^{n} \left[ \text{Tr}(B^{i}) \right]^{\text{Cyc}_{i}(\rho)} \label{eqn:spb_correlator_with_background_char}
\\
& = \frac{\delta_{RS}}{d_{S}} \sum_{\rho \in S_{n}} \chi_{S}(\rho) \mathcal{O}_{\rho}(B) \label{eqn:spb_correlator_with_background_char2}
\\
& = \frac{ n!\delta_{RS}}{d_{R}} \mathcal{O}_{S}(B) \,.
\label{eqn:spb_correlator_with_background_1}
\end{align}
In equation \eqref{eqn:spb_correlator_sigma_summed}, the sum over $\sigma$ was computed to remove the delta function and the cyclic invariance of the character under conjugation of elements was used to remove the $\gamma$ dependence in \eqref{eqn:spb_correlator_char_invariance}, such that the $\gamma$ sum produces a factor of $n!$. Identity  
\begin{equation}
\label{eqn:spb_char_identity}
\frac{1}{n!}\sum_{\tau \in S_{n}} \chi_{S}(\tau) \chi_{R}(\rho \tau) = \frac{\delta_{RS}}{d_{S}} \chi_{S}(\rho) 
\end{equation}
was used to obtain \eqref{eqn:spb_correlator_with_background_char}  and the definition of the permutation-parameterised GIO  was applied in \eqref{eqn:spb_correlator_with_background_char2}. Finally, definition \eqref{eqn:spb_complex_character_schur} was used to write the final expression in terms of an operator in the representation basis of the witness field $B$. 
\end{proof}
Combining equation \eqref{eqn:spb_corr_as_delta_func} with lemma \ref{lem:single-matrix_schur_basis_lemma} above, the correlator in Schur basis simply reduces to
\begin{equation}
\label{eqn:spb_corr_result_in_schur_basis}
\boxed{ 
\langle \mathcal{O}_{R}(Z) \left(\mathcal{O}_{S}(Z) \right)^{\dagger} \rangle = \frac{n!\delta_{RS} }{d_{R}} \mathcal{O}_{S}(B) \,.
} 
\end{equation}
This result states that the correlator of two Schur polynomials/operators is orthogonal and proportional to a Schur polynomial/operator constructed purely from the witness fields, in agreement with recent ideas on the super-integrability of matrix models \cite{Mironov:2017och,Mironov:2019pij,Mironov:2022fsr}.
Additionally, by taking $B = \mathbb{I}$ (the $N \times N$ identity matrix) in equation \eqref{eqn:spb_corr_result_in_schur_basis}, the original result of \cite{Corley:2001zk} is obtained
\begin{equation}
\label{eqn:spb_corr_result_witness_set_to_identity}
\langle \mathcal{O}_{R}(Z) \left(\mathcal{O}_{S}(Z) \right)^{\dagger} \rangle = \frac{ n!\delta_{RS}}{d_{R}} \mathcal{O}_{S}(\mathbb{I}) = \frac{n!\delta_{RS} }{d_{R}}\frac{1}{n!}\sum_{\sigma \in S_{n}}\chi_{S}(\sigma)N^{C_{\sigma}} = \text{Dim}_{N}(S)\frac{n!\delta_{RS} }{d_{R}}\,,
\end{equation}
using the identity $\text{Dim}_{N}(S) = \frac{1}{n!}\sum_{\sigma \in S_{n}} \chi_{S}(\sigma)N^{C_{\sigma}}$ where $\text{Dim}_{N}(S)$ is the dimension of representation $S$ of the unitary group and $C_{\sigma}$ is the total number of cycles in $\sigma$.


\subsection{Gauge invariant functions (observables)  of quantum and classical fields }
\label{ss:back_fields_in_observables}

The previous sections derived the single matrix correlator result by establishing the witness matrix as a coupling in the action. An alternative approach is to define the operators themselves as containing some classical field. Start with a matrix field defined as 
\begin{equation}
\label{eqn:bmf_in_ob_mat_definition}
(ZA)^{i}_{j} = \sum_{k} Z^{i}_{k} A^{k}_{j} \,.
\end{equation}
where $A$ is the classical witness matrix, and $Z$ is the (quantum) matrix integration variable. Then define the gauge invariant operator 
\begin{equation}
\label{eqn:bmf_in_ob_obs_definition}
\mathcal{O}_{\sigma}(ZA) = \text{Tr}_{V^{\otimes n}_{N}} \left( (ZA)^{\otimes n}\cL_{\sigma} \right) = (ZA)^{i_{1}}_{i_{\sigma(1)}} \dots(ZA)^{i_{n}}_{i_{\sigma(n)}}  
\end{equation}
and its Hermitian conjugate
\begin{equation}
\label{eqn:bmf_in_ob_obs_definition_herm}
\left( \mathcal{O}_{\sigma}(ZA) \right)^{\dagger} = \left( \text{Tr}_{V^{\otimes n}_{N}} \left( (ZA)^{\otimes n}\cL_{\sigma}\right)  \right)^{\dagger} = (A^{\dagger} Z^{\dagger})^{j_{1}}_{j_{\sigma^{-1}(1)}} \dots(A^{\dagger} Z^{\dagger})^{j_{n}}_{j_{\sigma^{-1}(n)}} = \mathcal{O}_{\sigma^{-1}}(A^{\dagger} Z^{\dagger}) \,.
\end{equation}
The path integral, now without a coupling matrix field in the action, is
\begin{equation}
\label{eqn:amob_no_bmf_partition_func}
\Sigma[0] = \int  [dZ] e^{ - \text{Tr}\left(ZZ^{\dagger} \right)}
\end{equation}
which produces basic correlator
\begin{equation}
 \langle Z^{i}_{j} (Z^{\dagger})^{k}_{l} \rangle = \frac{1}{\Sigma[0]}  \int [dZ] Z^{i}_{j} (Z^{\dagger})^{k}_{l} e^{-\text{Tr}(Z Z^{\dagger})}  =\delta^{i}_{l} \delta^{k}_{j} \,.
\end{equation}
Therefore the correlator of these GIOs is
\begin{multline}
\langle  \mathcal{O}_{\sigma_{1}}(ZA)
(\mathcal{O}_{\sigma_{2}} (ZA))^{\dagger} \rangle 
= \sum_{I,J,K,L}  A^{k_{1}}_{i_{\sigma_{1}(1)}} \dots A^{k_{n}}_{i_{\sigma_{1}(n)}}(A^{\dagger})^{j_{1}}_{l_{1}} \dots (A^{\dagger})^{j_{n}}_{l_{n}} 
\\
\times \frac{1}{\Sigma[0]} \int [dZ] Z^{i_{1}}_{k_{1}} \dots Z^{i_{n}}_{k_{n}} (Z^{\dagger})^{l_{1}}_{j_{\sigma^{-1}_{2}(1)}} \dots (Z^{\dagger})^{l_{n}}_{j_{\sigma^{-1}_{2}(n)}}e^{-\text{Tr}(Z Z^\dagger)} \,,
\end{multline}
where the sum over $I,J,K,L$ represents a sum over all matrix indices. For notational convenience we define a function in the witness fields
\begin{equation}
\label{eqn:bmf_f_func_in_back_mat}
f(A, A^{\dagger}; \vec{i}, \vec{j}, \vec{k},\vec{l}; \sigma_{1} ) = A^{k_{1}}_{i_{\sigma_{1}(1)}} \dots A^{k_{n}}_{i_{\sigma_{1}(n)}}(A^{\dagger})^{j_{1}}_{l_{1}} \dots (A^{\dagger})^{j_{n}}_{l_{n}}\,,
\end{equation}
where $\vec{i}$, for example, is a vector denoting all possible $i$ indices for a witness matrix. Using this definition, the correlator is written as
\begin{align}
\langle  \mathcal{O}_{\sigma_{1}}(ZA){}&
(\mathcal{O}_{\sigma_{2}} (ZA))^{\dagger} \rangle \nonumber
\\
\begin{split}
{}&= \sum_{I,J,K,L} f(A, A^{\dagger}; \vec{i}, \vec{j}, \vec{k},\vec{l}; \sigma_{1} )  
\\
&\qquad \qquad \times \frac{1}{\Sigma[0]} \int [dZ] Z^{i_{1}}_{k_{1}} \dots Z^{i_{n}}_{k_{n}} (Z^{\dagger})^{l_{1}}_{j_{\sigma^{-1}_{2}(1)}} \dots (Z^{\dagger})^{l_{n}}_{j_{\sigma^{-1}_{2}(n)}}e^{-\text{Tr}(Z Z^\dagger)} \label{eqn:bmf_in_ob_corr}
\end{split}
\\
&=  \sum_{I,J,K,L} f(A, A^{\dagger}; \vec{i}, \vec{j}, \vec{k},\vec{l}; \sigma_{1} ) \langle Z^{i_{1}}_{k_{1}} \dots Z^{i_{n}}_{k_{n}} (Z^{\dagger})^{l_{1}}_{j_{\sigma_{2}^{-1}(1)}} \dots (Z^{\dagger})^{l_{n}}_{j_{\sigma_{2}^{-1}(n)}} \rangle \label{eqn:bmf_isolated_z_var_op}
\\
&=  \sum_{\gamma \in S_{n}} \sum_{I,J,K,L} f(A, A^{\dagger}; \vec{i}, \vec{j}, \vec{k},\vec{l}; \sigma_{1} ) \delta^{i_{1}}_{j_{\gamma \sigma_{2}^{-1}(1)}} \dots \delta^{i_{n}}_{j_{\gamma \sigma_{2}^{-1}(n)}}  \delta^{l_{\gamma(1)}}_{k_{1}} \dots \delta^{l_{\gamma(n)}}_{k_{n}}
\\
&=  \sum_{\gamma \in S_{n}} \sum_{J,K}   (A^{\dagger})^{j_{\gamma(1)}}_{k_{1}} \dots (A^{\dagger})^{j_{\gamma(n)}}_{k_{n}}   A^{k_{1}}_{j_{\sigma_{1}\gamma \sigma_{2}^{-1}(1)}} \dots A^{k_{n}}_{i_{\sigma_{1}\gamma \sigma_{2}^{-1}(n)}}
\\
&= \sum_{\gamma \in S_{n}} \sum_{j_{1}, \dots, j_{n}} \left(A^{\dagger} A\right)_{j_{\gamma^{-1} \sigma_{1} \gamma \sigma_{2}^{-1}(1)}}^{j_{1}} \dots \left(A^{\dagger} A\right)_{j_{\gamma^{-1} \sigma_{1} \gamma \sigma_{2}^{-1}(n)}}^{j_{n}} \label{eqn:bmf_in_ob_corr_explicit}
\\
&= \sum_{\gamma \in S_{n}} \sum_{j_{1}, \dots , j_{n}} B_{j_{\gamma^{-1} \sigma_{1} \gamma \sigma_{2}^{-1}(1)}}^{j_{1}} \cdots B_{j_{\gamma^{-1} \sigma_{1} \gamma \sigma_{2}^{-1}(n)}}^{j_{n}} \label{eqn:bmf_in_ob_corr_b_matrices} 
\\
&= \sum_{\gamma \in S_{n}} \sum_{\sigma_{3} \in S_{n}} \prod_{i=1}^{n} \left[\text{Tr}(B^{i}) \right]^{C_{i}(\sigma_{3})} \delta(\sigma_{3}^{-1} \gamma^{-1} \sigma_{1} \gamma \sigma_{2}^{-1}) \label{eqn:bmf_in_ob_corr_trace_res} 
\\
&= \sum_{\gamma \in S_{n}}\sum_{p_{3} \vdash n} \left[\text{Tr}(B^{i}) \right]^{C_{i}\left(\sigma^{(p_{3})}\right)} \delta \left(\sum_{\alpha \in \mathbf{C}_{p_{3}}} \alpha^{-1} \gamma^{-1} \sigma_{1} \gamma \sigma_{2}^{-1} \right) \label{eqn:bmf_in_ob_corr_sums} 
\\
&=\sum_{p_{3} \vdash n} \mathcal{O}_{\sigma^{(p_{3})}}(B) \sum_{\gamma \in S_{n}}\delta \left(T_{p_{3}} \gamma^{-1} \sigma_{1} \gamma \sigma_{2}^{-1} \right) 
\label{eqn:bmf_in_ob_corr_conj_class_obs_label_p} 
\\
&= \sum_{p_{3} \vdash n} \frac{n!|T_{p_{3}} |}{|T_{p_{1}} | |T_{p_{2}}|} C^{p_{3}}_{p_{1} p_{2}} \mathcal{O}_{\sigma^{(p_{3})}}(B) \,, \label{eqn:bmf_in_ob_corr_struct_const} 
\end{align}
Equation \eqref{eqn:bmf_in_ob_corr} introduces the $f$ function of \eqref{eqn:bmf_f_func_in_back_mat}, then Wick contraction, followed by the redistribution of permutations using Kronecker equivariance, occurs between \eqref{eqn:bmf_isolated_z_var_op}-\eqref{eqn:bmf_in_ob_corr_explicit}. Equation \eqref{eqn:bmf_in_ob_corr_b_matrices} sets $B = A^{\dagger}A$ to match the notation of equation \eqref{eqn:bmf_exp_val_gen5}. Equation \eqref{eqn:bmf_in_ob_corr_sums} splits the $\sigma_{3}$ sum into a sum over conjugacy classes/partitions (labelled $p_{3}$) and a sum over the elements in each class (labelled $\alpha$), while \eqref{eqn:bmf_in_ob_corr_conj_class_obs_label_p}  defines the GIO in witness field $B$ and inserts PCA element $T_{p_{3}} = \sum_{\alpha \in \mathbf{C}_{p_{3}}} \alpha^{-1} \equiv \sum_{\alpha \in \mathbf{C}_{p_{3}}} \alpha$. In \eqref{eqn:bmf_in_ob_corr_struct_const}, lemma \ref{lem:single-matrix_lemma} has been used to write the correlator in terms of structure constants.
In summary, the correlator is 
\begin{equation}
\langle  \mathcal{O}_{\sigma_{1}}(ZA)
(\mathcal{O}_{\sigma_{2}} (ZA))^{\dagger} \rangle   = \sum_{p_{3} \vdash n} \frac{n!|T_{p_{3}} |}{|T_{p_{1}} | |T_{p_{2}}|} C^{p_{3}}_{p_{1} p_{2}} \mathcal{O}_{\sigma^{(p_{3})}}(B) \,.
\end{equation} 
Using $\mathcal{O}_{\sigma_{1}} \equiv \mathcal{O}_{\sigma^{\left(p_{1}\right)}}$, $\mathcal{O}_{\sigma_{2}} \equiv \mathcal{O}_{\sigma^{\left(p_{2}\right)}}$ and equation \eqref{eqn:aomb_op_oerm_to_pca_relation}, we can rewrite this result in combinatorial basis, i.e. in terms of partitions/classes on which the operators depend
\begin{equation}\label{BoxedEquation3} 
\boxed{
\langle  \mathcal{O}_{p_{1}}(ZA)
(\mathcal{O}_{p_{2}} (ZA))^{\dagger} \rangle   = n! \sum_{p_{3} \vdash n} C^{p_{3}}_{p_{1} p_{2}} \mathcal{O}_{p_{3}}(B)
}
\end{equation} 
This shows that using either classical matrices or coupling matrices as the witness fields, the same correlator result is obtained. A box operator diagram is given in Figure \ref{fig:two_perm_correlator_bmf_within_obs_diagram} that portrays this classical matrix construction of the two-point function.

\begin{figure}[htb!]
\begin{center}
\includegraphics[width=14.5cm, height=7cm]{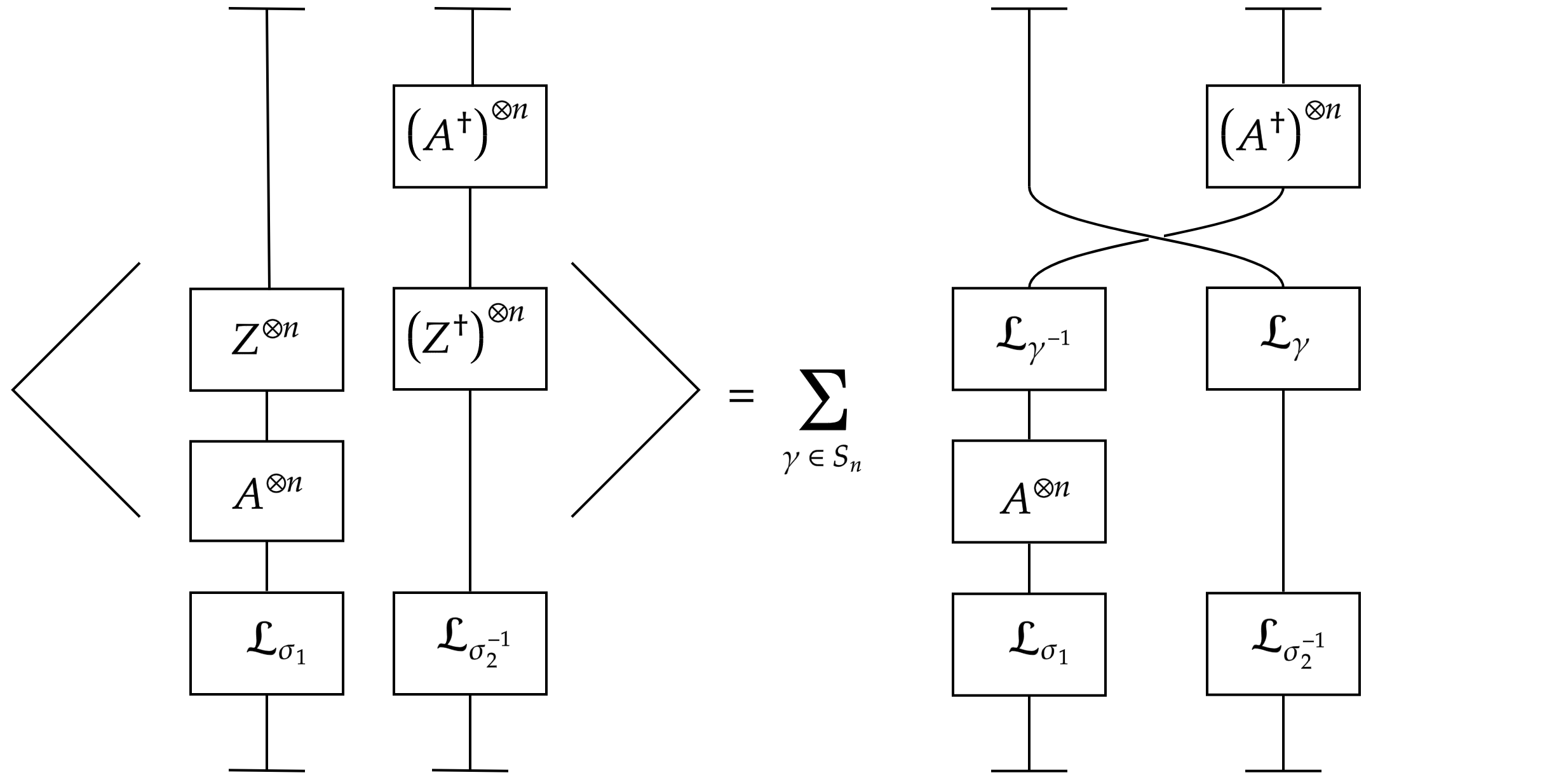}
\end{center}
\caption{The diagrammatic representation of the one-matrix, two-GIO correlator where the witness field is a classical matrix. This matrix is introduced as $A$ (along with its conjugate $A^{\dagger}$) and subsequently used to define $B = A^{\dagger}A$. The right hand diagram shows how upon Wick contracting, the $\mathcal{L}_{\gamma}, \mathcal{L}_{\gamma^{-1}}$, operators are incorporated and the indices are swapped.}
\label{fig:two_perm_correlator_bmf_within_obs_diagram}
\end{figure}

\newpage

\subsection{Fourier basis for two-point function with classical fields }
\label{ss:schur_basis_back_fields_in_observables}

Just as the Schur basis correlator was calculated with the coupling witness field in \S \ref{ss:single_matrix_schur_basis}, here the analog is derived for the classical witness field. Given the description of Schur operators in terms of permutation parameterised operators, as well as the equivalence of the combinatorial basis correlator using either coupling or classical witness fields (as shown by the results of \S\ref{ss:background_matrix_field} and \S\ref{ss:back_fields_in_observables}), this extension is straightforward. Define the classical field GIO in Schur basis as 
\begin{equation}
\label{eqn:spbback_complex_character_schur}
\mathcal{O}_{R}(ZA) = \frac{1}{n!}\sum_{\sigma \in S_{n}} \chi_{R}(\sigma)\text{Tr}_{V^{\otimes n}_{N}}\left( (ZA)^{\otimes n} \cL_{\sigma} \right) = \frac{1}{n!}\sum_{\sigma \in S_{n}} \chi_{R}(\sigma)\mathcal{O}_{\sigma}(ZA)\,,
\end{equation}
and the conjugate operator
\begin{equation}
\label{eqn:spbback_complex_character_schur_conj}
\left(\mathcal{O}_{R}(ZA)\right)^{\dagger} = \frac{1}{n!}\sum_{\sigma \in S_{n}} \left(\chi_{R}(\sigma)\right)^{*}\left(\text{Tr}_{V^{\otimes n}_{N}}\left( (ZA)^{\otimes n} \cL_{\sigma} \right) \right)^{\dagger} = \frac{1}{n!} \sum_{\sigma \in S_{n}} \chi_{R}(\sigma)\left(\mathcal{O}_{\sigma}(ZA)\right)^{\dagger}\,,
\end{equation}
where again, $(\chi_{R}(\sigma))^{*} = \chi_{R}(\sigma^{-1}) = \chi_{R}(\sigma)$ has been used and $R$ corresponds to irrep $R$. The correlator is therefore
\begin{align}
\label{eqn:spbback_schur_corr}
\langle \mathcal{O}_{R}(ZA) \left(\mathcal{O}_{S}(ZA)\right)^{\dagger} \rangle 
&= \frac{1}{(n!)^2} \sum_{\sigma \in S_{n}} \sum_{\tau \in S_{n}} \chi_{R}(\sigma) (\chi_{S}(\tau)) \langle \mathcal{O}_{\sigma}(ZA) \left(\mathcal{O}_{\tau}(ZA)\right)^{\dagger} \rangle 
\end{align}
The right hand side features the permutation parameterised correlator for classical witness fields of equation \eqref{eqn:bmf_in_ob_corr_trace_res}, restated here for convenience 
\begin{equation}
\label{spbback_comb_basis_corr}
\langle \mathcal{O}_{\sigma}(ZA)
(\mathcal{O}_{\tau} (ZA))^{\dagger} \rangle =\sum_{\rho \in S_{n}} \sum_{\gamma \in S_{n}} \prod_{i=1}^{n} \delta(\rho^{-1} \gamma^{-1} \sigma \gamma \tau^{-1}) \left[\text{Tr}(B^{i}) \right]^{C_{i}(\rho)} \,,
\end{equation}
where $B = A^{\dagger}A$. This can be substituted into \eqref{eqn:spbback_schur_corr} to simplify the Schur basis two-point function 
\begin{align}
\label{eqn:spbback_correlator_with_background}
\langle \mathcal{O}_{R}(ZA) \left(\mathcal{O}_{S}(ZA) \right)^{\dagger} \rangle &=  
\frac{1}{(n!)^2} \sum_{\substack{\sigma, \tau, \rho, \gamma \\  \in S_{n}}}\prod_{i=1}^{n} \chi_{R}(\sigma) \chi_{S}(\tau)  \delta(\rho^{-1} \gamma^{-1} \sigma \gamma \tau^{-1}) \left[\text{Tr}(B^{i}) \right]^{C_{i}(\rho)} \,.
\end{align}
We now use the Lemma \ref{lem:single-matrix_schur_basis_lemma} to conclude that 
\begin{equation} 
\label{BoxedResult1}
\boxed{ 
\langle \mathcal{O}_{ R }(ZA)
(\mathcal{O}_{ S } (ZA))^{\dagger} \rangle = \frac{\delta_{RS} n!}{d_{R}} \mathcal{O}_{S}(A^{ \dagger} A ) = \frac{\delta_{RS} n!}{d_{R}} \mathcal{O}_{S}(B ) 
} 
\end{equation} 
which reproduces the outcome of equation \eqref{eqn:spb_corr_result_in_schur_basis}.

\clearpage

\section{Algebras and two-matrix correlators with two-matrix witnesses }
\label{s:alg_two_mat_bmf}
Partition functions of complex  two-matrix models involve integration over complex matrices $ X , Y $. Correlators of holomorphic and anti-holomorphic polynomial functions of $ X , Y $ are of interest in connection with  the quarter-BPS sector of $\cN=4$ SYM theory (see \cite{Lewis-Brown:2020nmg} and references therein).  Invariants of degree $m $ in $X$ and degree $n$ in $Y$ can be constructed using a permutation parameterisation generalizing \eqref{eqn:aomb_gio_definition}. We define gauge invariant observables parametrised by a permutation $\sigma $ in the symmetric group $S_{ m+n} $ of all permutations of $\{ 1,2, \cdots , m+n \}$:  
\begin{equation}
\label{eqn:atmb_gio_def}
\mathcal{O}_{\sigma}(X,Y) = \text{Tr}_{V_{N}^{\otimes m+n}}(X^{\otimes m} \otimes Y^{\otimes n} \cL_{\sigma}) = X^{i_{1}}_{i_{\sigma(1)}} \dots X^{i_{m}}_{i_{\sigma(m)}}Y^{i_{m+1}}_{i_{\sigma(m+1)}} \dots Y^{i_{m+n}}_{i_{\sigma(m+n)}} \,.
\end{equation}
The indices $i_1, \cdots , i_{ m+n} $ are summed from $1$ to $N$, and operator $\mathcal{L}_{\sigma}$ has action
\begin{equation}
\label{eqn:atmb_lin_op_action}
\mathcal{L}_{\sigma} \ket{e_{i_{1}} \otimes \dots \otimes e_{i_{m+n}}} = \ket{e_{i_{\sigma(1)}} \otimes \dots \otimes e_{i_{\sigma(m+n)}}} \,.
\end{equation}
When $\gamma $ is any element of the subgroup $ ( S_m \times S_n )  \subset S_{ m+n}$ consisting of permutations which map  the subset  $ \{ 1, \cdots , m \} $ to itself and the subset  $ \{ m+1 , \cdots , m + n \} $ to itself, then by re-ordering the $X$ among each other and the $Y$ among each other, it can be shown that 
\begin{equation}
\mathcal{O}_{\gamma \sigma \gamma^{-1}}(X,Y) = \mathcal{O}_{\sigma}(X,Y)
\end{equation}
Thus the parameterising permutations are in $S_{ m+n } $ while the gauge permutations are in $S_m \times S_n \subset S_{ m+n } $ \cite{Bhattacharyya:2008rb}. The permutation centralizer algebra (PCA) $\cA ( m , n ) $ \cite{Mattioli:2016eyp} relevant to this 2-matrix problem is thus based on the equivalence classes : 
\begin{equation}
\label{eqn:atmb_equiv_rel}
\sigma \sim \gamma \sigma \gamma^{-1} \,,
\end{equation}
with $\sigma \in S_{m+n}$, $\gamma \in S_{m} \times S_{n}$. We will refer to ${\mathcal{A}} ( m , n ) $ as the  ``Necklace PCA''. It is defined as the sub-algebra of $\mathbb{C}[S_{m+n}]$ that commutes with $\mathbb{C}[S_{m} \times S_{n}]$.  There is a basis of the sub-algebra labelled by the equivalence classes (or orbits)  in $S_{m+n}$ generated by  the conjugation action by $\gamma \in S_m \times S_n$. Taking a label $p$ to run over the orbits we denote the orbits as $\text{Orb}_{\mathcal{A}}(p) $. Choosing a representative $\sigma^{(p)} $, the automorphism group $\text{Aut}_{S_{m} \times S_{n}} \left(\sigma^{\left(p\right)}\right)$ is  the subgroup of $S_{m} \times S_{n}$ which leaves  $\sigma^{\left(p\right)} \in S_{m+n}$ invariant under the action of conjugation by $\gamma \in S_{m} \times S_{n}$. The order of the automorphism group is  independent of the choice of representative $\sigma^{(p)}$ in the orbit $p$. We refer to this order as  $|\text{Aut}_{\mathcal{A}}(p)|$. By the orbit stabiliser theorem, the size of the orbit (or equivalence class)  labelled by $p$ is 
\bea 
|\text{Orb}_{\mathcal{A}}(p)| = {    m! n! \over  |\text{Aut}_{\mathcal{A}}(p)| } \, . 
\eea
Applying this notation\footnote{These $T^{\mathcal{A}}_{p}$ are referred to as ``Necklaces" in their original description. See \cite{Mattioli:2016eyp} for further explanation on the analogy.}, the  PCA basis elements for each orbit take the form
\begin{equation}
\label{eqn:atmb_algebra_element_necklace}
T^{\mathcal{A}}_{p} = \frac{1}{|\text{Aut}_{\mathcal{A}}(p)|} \sum_{\gamma \in S_{m} \times S_{n} } \gamma \sigma^{\left(p\right)} \gamma^{-1} = \sum_{\tau \in \text{Orb}_{\mathcal{A}}(p)} \tau \quad \in \mathcal{A}(m,n) \,,
\end{equation}
where the $\mathcal{A}$ label on $T^{\mathcal{A}}_{p}$, $\text{Aut}_{\mathcal{A}}(p)$ and $\text{Orb}_{\mathcal{A}}(p)$, indicate that these objects are associated with the $\mathcal{A}(m,n)$ PCA. Thus, $T^{\mathcal{A}}_{p}$ are sums over permutations in the same equivalence class/orbit, governed by relation \eqref{eqn:atmb_equiv_rel}. A notable feature of this PCA is that, unlike $\mathcal{Z}[\mathbb{C}[S_{n}]]$ in the single matrix case, it is non-commutative. We refer to these basis elements associated with orbits, and the combinatorics of group multiplications in $S_{ m+n}$, as combinatorial basis elements for $\cA ( m , n ) $.  

The two-matrix correlators using classical and coupling witness fields are now derived in both combinatorial and representation basis, following the same order as section \S\ref{s:alg_one_mat_bmf}.


\subsection{Two-point function of general operators with matrix couplings for two-matrix case}
\label{ss:gen_gauge_obs_two_mat_with_bmf}

The two-matrix analog of the one-matrix, two-point function with coupling fields, features initial coupling matrices $A_{x}$ for $X$ and $A_{y}$ for $Y$
\begin{equation}
\label{eqn:tm_bmf_path_integral_two_mat}
\Sigma[0] =  \int [dX][dY] e^{-\text{Tr}(X A_{x} X^{\dagger}) -\text{Tr}(Y A_{y} Y^{\dagger})} \,.
\end{equation}
Following a similar procedure to that of Appendix \ref{app:background_matrix_field_correlator}, the partition function can be written with vector variables and source fields, so that the basic correlators can be derived by taking derivatives with respect to said sources.
These correlators of the field variables can then be evaluated as
\begin{equation}
\label{eqn:tm_bmf_correlators_part_funcs_x}
\langle X^{i}_{j} (X^{\dagger})^{k}_{l} \rangle 
= \frac{1}{\Sigma[0]} \int [dX][dY] X^{i}_{j} (X^{\dagger})^{k}_{l} e^{-\text{Tr}(X A_{x} X^{\dagger}) -\text{Tr}(Y A_{y} Y^{\dagger})} = \delta^{i}_{l} (A_{x}^{-1})^{k}_{j} =  \delta^{i}_{l} (B_{x})^{k}_{j}
\end{equation}
\begin{equation}
\label{eqn:tm_bmf_correlators_part_funcs_y}
\langle Y^{i}_{j} (Y^{\dagger})^{k}_{l} \rangle 
= \frac{1}{\Sigma[0]} \int [dX][dY] Y^{i}_{j} (Y^{\dagger})^{k}_{l} e^{-\text{Tr}(X A_{x} X^{\dagger}) -\text{Tr}(Y A_{y} Y^{\dagger})} = \delta^{i}_{l} (A_{y}^{-1})^{k}_{j}= \delta^{i}_{l} (B_{y})^{k}_{j} 
\end{equation}
where again, $B_{x}$ and $B_{y}$ will be referred to as the coupling matrices in the subsequent discussion. Using the general permutation parameterised gauge-invariant operator \eqref{eqn:atmb_gio_def}
and its Hermitian conjugate
\begin{align}
\begin{split}
(\mathcal{O}_{\sigma}(X,Y))^{\dagger} 
&= \text{Tr}_{V_{N}^{\otimes m+n}}((X^{\dagger})^{\otimes m} \otimes (Y^{\dagger})^{\otimes n} \cL_{\sigma^{-1}})
\\
&= (X^{\dagger})^{i_{1}}_{i_{\sigma^{-1}(1)}} \dots (X^{\dagger})^{i_{m}}_{i_{\sigma^{-1}(m)}} (Y^{\dagger})^{i_{m+1}}_{i_{\sigma^{-1}(m+1)}} \dots (Y^{\dagger})^{i_{m+n}}_{i_{\sigma^{-1}(m+n)}} 
\\
& = \mathcal{O}_{\sigma^{-1}}(X^{\dagger}, Y^{\dagger}) \,,
\end{split}
\end{align}
the two-point function for the two matrix model can be constructed as follows
\begin{align}
\label{eqn:tm_gio_with_bmf1}
\langle {}&  \mathcal{O}_{\sigma_{1}}(X, Y) (\mathcal{O}_{\sigma_{2}}(X, Y))^{\dagger} \rangle \nonumber
\\
{}&=  \left\langle  \text{Tr}_{V_{N}^{\otimes m+n}} \left( X^{\otimes m} \otimes Y^{\otimes n} \cL_{\sigma_{1}} \right)  \left(\text{Tr}_{V_{N}^{\otimes m+n}} \left( X^{\otimes m} \otimes Y^{\otimes n} \cL_{\sigma_{2}}\right)\right)^{\dagger} \right\rangle  
\\
\begin{split}
{}&= \sum_{\substack{i_{1}, \dots, i_{m+n} \\ j_{1}, \dots, j_{m+n}}} \langle X^{i_{1}}_{i_{\sigma_{1}(1)}} \cdots X^{i_{m}}_{i_{\sigma_{1}(m)}} Y^{i_{m+1}}_{i_{\sigma_{1}(m+1)}} \cdots Y^{i_{m+n}}_{i_{\sigma_{1}(m+n)}}  
\\
{}& \qquad \qquad \qquad \times (X^{\dagger})^{j_{1}}_{j_{\sigma_{2}^{-1}(1)}} \cdots (X^{\dagger})^{j_{m}}_{j_{\sigma_{2}^{-1}(m)}} (Y^{\dagger})^{j_{m+1}}_{j_{\sigma_{2}^{-1}(m+1)}} \cdots (Y^{\dagger})^{j_{m+n}}_{j_{\sigma_{2}^{-1}(m+n)}}\rangle
\end{split}
\\
\begin{split}
{}&= \sum_{\substack{i_{1}, \dots, i_{m+n} \\ j_{1}, \dots, j_{m+n}}} \sum_{\gamma \in S_{m} \times S_{n}}  \delta^{i_{1}}_{j_{\gamma \sigma_{2}^{-1}(1)}} \cdots  \delta^{i_{m}}_{j_{\gamma \sigma_{2}^{-1}(m)}} \delta^{i_{m+1}}_{j_{\gamma \sigma_{2}^{-1}(m+1)}} \cdots \delta^{i_{m+n}}_{j_{\gamma \sigma_{2}^{-1}(m+n)}}
\\ 
{}& \qquad \qquad \qquad \times  (B_{x})^{j_{1}}_{i_{\gamma^{-1} \sigma_{1}(1)}} \cdots  (B_{x})^{j_{m}}_{i_{\gamma^{-1} \sigma_{1}(m)}} (B_{y})^{j_{m+1}}_{i_{\gamma^{-1} \sigma_{1}(m+1)}} \cdots (B_{y})^{j_{m+n}}_{i_{\gamma^{-1} \sigma_{1}(m+n)}} \,. \label{eqn:tm_gamma_exp}
\end{split}
\end{align}
Here a sum over $\gamma \in S_{m} \times S_{n}$ is introduced and the permutations applied to the indices, as required by Wick's theorem. $\gamma$ maps the set $\{1, \dots, m \}$ to itself, as well as $\{m+1, \dots, m+n\}$ to itself. Applying Kronecker equivariance
\begin{align}
\langle {}&  \mathcal{O}_{\sigma_{1}}(X, Y) (\mathcal{O}_{\sigma_{2}}(X, Y))^{\dagger} \rangle \nonumber
\\
\begin{split}
{}&= \sum_{\substack{i_{1}, \dots, i_{m+n} \\ j_{1}, \dots, j_{m+n}}} \sum_{\gamma \in S_{m} \times S_{n}}  \delta^{i_{\gamma^{-1} \sigma_{1}(1)}}_{j_{\gamma^{-1} \sigma_{1}\gamma \sigma_{2}^{-1}(1)}} \cdots  \delta^{i_{\gamma^{-1} \sigma_{1}(m)}}_{j_{\gamma^{-1} \sigma_{1} \gamma \sigma_{2}^{-1}(m)}} \delta^{i_{\gamma^{-1} \sigma_{1}(m+1)}}_{j_{\gamma^{-1} \sigma_{1}\gamma \sigma_{2}^{-1}(m+1)}} \cdots \delta^{i_{\gamma^{-1} \sigma_{1}(m+n)}}_{j_{\gamma^{-1} \sigma_{1}\gamma \sigma_{2}^{-1}(m+n)}} 
\\ 
{}& \qquad \qquad \qquad \times (B_{x})^{j_{1}}_{i_{\gamma^{-1} \sigma_{1}(1)}} \cdots  (B_{x})^{j_{m}}_{i_{\gamma^{-1} \sigma_{1}(m)}} (B_{y})^{j_{m+1}}_{i_{\gamma^{-1} \sigma_{1}(m+1)}} \cdots (B_{y})^{j_{m+n}}_{i_{\gamma^{-1} \sigma_{1}(m+n)}}  \label{eqn:tm_moved_sigma_2}
\end{split}
\\
\begin{split}
{}&=  \sum_{j_{1},\dots, j_{m+n}} \sum_{\gamma \in S_{m} \times S_{n}}  (B_{x})^{j_{1}}_{j_{\gamma^{-1} \sigma_{1}\gamma \sigma_{2}^{-1}(1)}} \cdots  (B_{x})^{j_{m}}_{j_{\gamma^{-1} \sigma_{1}\gamma \sigma_{2}^{-1}(m)}}  
\\
{}&  \qquad \qquad \qquad \qquad \qquad \qquad \times  (B_{y})^{j_{m+1}}_{j_{\gamma^{-1} \sigma_{1}\gamma \sigma_{2}^{-1}(m+1)}} \cdots (B_{y})^{j_{m+n}}_{j_{\gamma^{-1} \sigma_{1}\gamma \sigma_{2}^{-1}(m+n)}}\,. \label{eqn:tm_bmf_same_indices_single_line_perms} 
\end{split}
\end{align}
Therefore 
\begin{equation}
\label{eqn:tm_bmf_trace_single_line_perms}
\langle \mathcal{O}_{\sigma_{1}}(X, Y) (\mathcal{O}_{\sigma_{2}}(X, Y))^{\dagger} \rangle = \sum_{\gamma \in S_{m} \times S_{n}} \text{Tr}_{V_{N}^{\otimes m+n}}\left(B_{x}^{\otimes m} \otimes B_{y}^{\otimes n} \cL_{\gamma^{-1} \sigma_{1}\gamma \sigma_{2}^{-1}} \right) \,.
\end{equation}
The $i$ indices are contracted in \eqref{eqn:tm_bmf_same_indices_single_line_perms}. Following the trace notation of \eqref{eqn:tm_bmf_trace_single_line_perms}, the result of the calculation can be represented using the diagram of Figure \ref{fig:tm_two_perm_bmf_correlator_diagram}, recalling the composition property of the linear operators: $\cL_{\sigma}\cL_{\tau} = \cL_{\sigma \tau}$. 
\begin{figure}[htb!]
\begin{center}
\includegraphics[width=10.5cm, height=12cm]{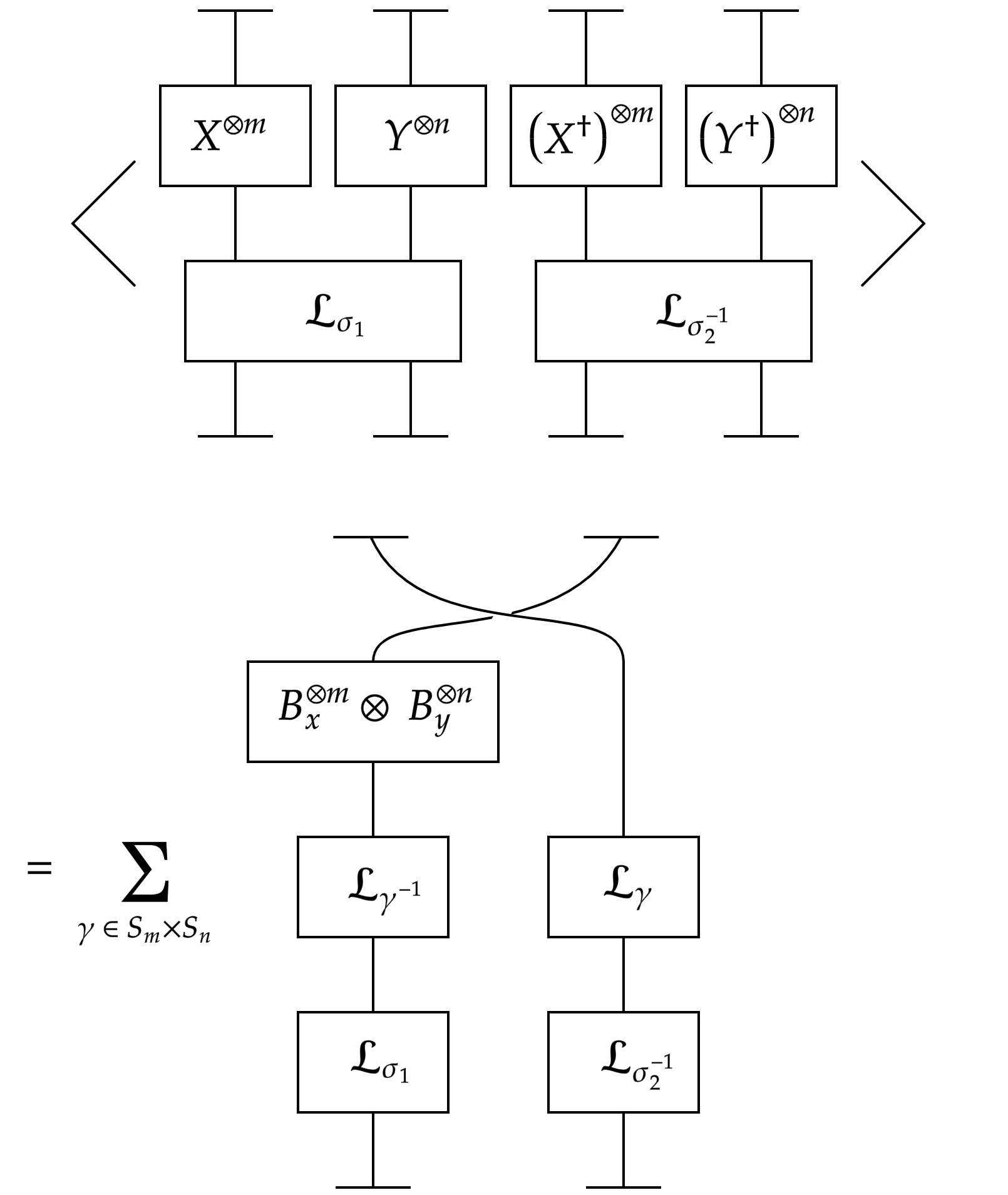}
\end{center}
\caption{The diagrammatic representation of the general two-matrix correlator with coupling matrix fields. As in the one-matrix case, this shows that the correlator can be written as a sum over permutations with a swap in the indices. Note that here $\gamma$ is an element of $S_{m} \times S_{n}$ (the direct product of symmetric groups of $m$ and $n$ objects), as required by the appropriate equivalence relation.}
\label{fig:tm_two_perm_bmf_correlator_diagram}
\end{figure}
A delta function can now be inserted and the $\sigma_{3} \in S_{m+n}$ sum decomposed into sums over equivalence classes as well as the elements therein
\begin{align}
\langle \mathcal{O}_{\sigma_{1}}(X, Y)& (\mathcal{O}_{\sigma_{2}}(X, Y))^{\dagger} \rangle \nonumber 
\\
&= \sum_{\gamma \in S_{m} \times S_{n}} \sum_{\sigma_{3} \in S_{m+n}} \text{Tr}_{V_{N}^{\otimes m+n}}\left(B_{x}^{\otimes m} \otimes B_{y}^{\otimes n} \cL_{\sigma_{3}} \right) \delta(\sigma_{3}^{-1} \gamma^{-1} \sigma_{1} \gamma \sigma_{2}^{-1}) \label{eqn:tm_bmf_trace_and_delta}  
\\
&= \sum_{\gamma \in S_{m} \times S_{n}} \sum_{\sigma_{3} \in S_{m+n}} \mathcal{O}_{\sigma_{3}}(B_{x}, B_{y}) \delta( \sigma_{3}^{-1} \gamma^{-1} \sigma_{1} \gamma \sigma_{2}^{-1}) \label{eqn:tm_bmf_trace_and_delta2_5}   
\\
&= \sum_{\gamma \in S_{m} \times S_{n}} \sum_{p_{3} } \left( \sum_{\alpha \in \text{Orb}_{\mathcal{A}}(p_{3})} \mathcal{O}_{\alpha}(B_{x}, B_{y}) \delta(\alpha^{-1}\gamma^{-1} \sigma_{1} \gamma \sigma_{2}^{-1} ) \right) \label{eqn:tm_bmf_trace_and_delta2}  
 \\
&=  \sum_{p_{3} }  \mathcal{O}_{\sigma^{\left(p_{3}\right)}}(B_{x}, B_{y})\sum_{\gamma \in S_{m} \times S_{n}} \delta\left( \sum_{\alpha \in \text{Orb}_{\mathcal{A}}(p_{3})} \alpha^{-1} \gamma^{-1} \sigma_{1} \gamma \sigma_{2}^{-1}  \right)\label{eqn:tm_bmf_trace_and_delta3}  
\\
&=  \sum_{p_{3} }  \mathcal{O}_{\sigma^{\left(p_{3}\right)}}(B_{x}, B_{y})\sum_{\gamma \in S_{m} \times S_{n}} \delta \left( T^{\mathcal{A}}_{p'_{3}}  \gamma^{-1} \sigma_{1} \gamma \sigma_{2}^{-1}  \right) \label{eqn:tm_bmf_trace_and_delta4} \,,
\end{align}
where $p_{3}$ labels the equivalence classes/orbits of the $\sigma_{3}$ sum decomposition. Here we set $  \text{Tr}_{V_{N}^{\otimes m+n}}\left(B_{x}^{\otimes m} \otimes B_{y}^{\otimes n} \cL_{\sigma_{3}} \right)=\mathcal{O}_{\sigma_{3}}(B_{x}, B_{y})$ as per equation \eqref{eqn:atmb_gio_def} and the definition of $T^{\mathcal{A}}_{p'_{3}} \in \mathcal{A}(m,n)$ was used in \eqref{eqn:tm_bmf_trace_and_delta4}. Note that given the permutation is $\alpha^{-1}$ and the sum is over $\alpha \in \text{Orb}_{\mathcal{A}}(p_{3})$, we attach a prime to the partition/orbit/class label to indicate that $\text{Orb}(p'_{3})$ is the orbit constructed from the inverse permutations of $\text{Orb}(p_{3})$.\footnote{While in the one-matrix case of \S\ref{ss:background_matrix_field}, permutations and their inverses share the same conjugacy class, meaning the prime label can be dropped, for the two-matrix and multi-matrix models, this is not generally the case.} In equation \eqref{eqn:tm_bmf_trace_and_delta3}, the $\alpha$ label on the GIO has been replaced by $\sigma^{\left(p_{3}\right)}$ to help indicate that it is a function of the $p_{3}$ partition/class. To identify how the PCA structure constants emerge as part of the correlator, the following lemma is provided.
\begin{lemma} 
\label{lem:two-matrix_lemma}
For $\gamma \in S_{m} \times S_{n}$, $\sigma_{i} \in S_{m+n}$ and $T^{\mathcal{A}}_{p_{i}} \in \mathcal{A}(m,n)$ the following equality holds 
\begin{equation}
\sum_{\gamma \in S_{m} \times S_{n}} \delta \left( T^{\mathcal{A}}_{p'_{3}} \gamma^{-1} \sigma_{1} \gamma \sigma_{2}^{-1} \right)
=\frac{m! n!|T^{\mathcal{A}}_{p'_{3}}|}{|T^{\mathcal{A}}_{p_{1}}| |T^{\mathcal{A}}_{p'_{2}}|} C^{p'_{3}; \mathcal{A}}_{ p_{1} p'_{2}} 
\end{equation}
where $C^{p'_{3}; \mathcal{A}}_{ p_{1} p'_{2}} $ is a structure constant of the $\mathcal{A}(m,n)$ PCA and $\sigma_{1}$, $\sigma_{2}^{-1}$ belong to equivalence classes labelled by $p_{1}$, $p'_{2}$ respectively.
\end{lemma}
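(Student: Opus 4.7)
The strategy is to replay the proof of Lemma~\ref{lem:single-matrix_lemma} with the commutative algebra $\mathcal{Z}[\mathbb{C}[S_n]]$ replaced by the non-commutative PCA $\mathcal{A}(m,n)$. The essential algebraic input that still applies is the defining centralising property of the PCA: by construction $\mu T^{\mathcal{A}}_{p'_3}\mu^{-1} = T^{\mathcal{A}}_{p'_3}$ for every $\mu \in S_m \times S_n$, so the manipulations in the single-matrix proof that relied on centrality remain valid here, because the summation variable $\gamma$ is already restricted to the gauge subgroup $S_m \times S_n$.

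Concretely, I would relabel $\gamma \to \mu_1 \mu_2$ with $\mu_1,\mu_2 \in S_m \times S_n$, insert $e = \mu_2^{-1}\mu_2$ inside the delta, and exploit the cyclic property of $\delta$ (which extracts the coefficient of the identity and therefore satisfies $\delta(g_1 g_2) = \delta(g_2 g_1)$) to move the left $\mu_2^{-1}$ all the way to the right. Using $\mu_2 T^{\mathcal{A}}_{p'_3}\mu_2^{-1} = T^{\mathcal{A}}_{p'_3}$ then eliminates the dependence of the PCA element on $\mu_2$, so averaging over $\mu_2 \in S_m \times S_n$ with a compensating factor of $1/(m!n!)$ comes for free. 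Gathering the independent sums over $\mu_1$ and $\mu_2$ inside the delta produces the orbit-averaged expressions
\[
\sum_{\mu_1 \in S_m \times S_n}\mu_1^{-1}\sigma_1 \mu_1 = |\text{Aut}_{\mathcal{A}}(p_1)|\, T^{\mathcal{A}}_{p_1}, \qquad \sum_{\mu_2 \in S_m \times S_n}\mu_2 \sigma_2^{-1}\mu_2^{-1} = |\text{Aut}_{\mathcal{A}}(p'_2)|\, T^{\mathcal{A}}_{p'_2},
\]
reducing the left-hand side to $\tfrac{|\text{Aut}_{\mathcal{A}}(p_1)||\text{Aut}_{\mathcal{A}}(p'_2)|}{m!n!}\,\delta\bigl(T^{\mathcal{A}}_{p'_3}T^{\mathcal{A}}_{p_1}T^{\mathcal{A}}_{p'_2}\bigr)$.

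To finish, I would apply the orbit-stabiliser relation $|\text{Aut}_{\mathcal{A}}(p)| = m!n!/|T^{\mathcal{A}}_p|$ to rewrite the prefactor as $m!n!/(|T^{\mathcal{A}}_{p_1}||T^{\mathcal{A}}_{p'_2}|)$, and evaluate the remaining delta by expanding $T^{\mathcal{A}}_{p_1}T^{\mathcal{A}}_{p'_2} = \sum_{p_k}C^{p_k;\mathcal{A}}_{p_1 p'_2}\,T^{\mathcal{A}}_{p_k}$. The quantity $\delta(T^{\mathcal{A}}_{p'_3}T^{\mathcal{A}}_{p_k})$ counts pairs $(\alpha,\beta)$ with $\alpha \in \text{Orb}_{\mathcal{A}}(p'_3)$, $\beta \in \text{Orb}_{\mathcal{A}}(p_k)$ and $\alpha\beta = e$. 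Since $\{\alpha^{-1}:\alpha \in \text{Orb}_{\mathcal{A}}(p'_3)\}$ is itself a single orbit, only one $p_k$ contributes, and the contribution is $|T^{\mathcal{A}}_{p'_3}|$, producing the advertised prefactor $m!n!\,|T^{\mathcal{A}}_{p'_3}|/(|T^{\mathcal{A}}_{p_1}||T^{\mathcal{A}}_{p'_2}|)$ times a single PCA structure constant.

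The main obstacle relative to the single-matrix case is the careful bookkeeping of primed versus unprimed orbit labels. In $\mathcal{Z}[\mathbb{C}[S_n]]$ a permutation and its inverse automatically share a conjugacy class, so primes can be dropped; in $\mathcal{A}(m,n)$ the $(S_m \times S_n)$-orbit of $\sigma^{-1}$ is genuinely a different object from that of $\sigma$, and the distinction must be propagated through every delta manipulation and into the final identity-coefficient evaluation, where the pairing $\alpha \leftrightarrow \alpha^{-1}$ is precisely what decides which structure constant survives. Once this accounting is organised consistently, no new ideas are needed beyond those in Lemma~\ref{lem:single-matrix_lemma}.
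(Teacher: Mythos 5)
Your proposal is correct and follows essentially the same route as the paper's own proof: relabel $\gamma = \mu_1\mu_2$, use cyclicity of $\delta$ together with the invariance $\mu_2 T^{\mathcal{A}}_{p'_3}\mu_2^{-1} = T^{\mathcal{A}}_{p'_3}$, average over $\mu_2 \in S_m \times S_n$, collect the two conjugation sums into $|\text{Aut}_{\mathcal{A}}|$-weighted $T^{\mathcal{A}}$ elements, apply orbit--stabiliser, and finish with the structure-constant expansion of $\delta\left(T^{\mathcal{A}}_{p'_3}T^{\mathcal{A}}_{p_1}T^{\mathcal{A}}_{p'_2}\right)$. The one point to make explicit is your final step: your pairing $\alpha \leftrightarrow \alpha^{-1}$ correctly shows that the surviving $p_k$ is the orbit of inverses of $\text{Orb}_{\mathcal{A}}(p'_3)$ with weight $|T^{\mathcal{A}}_{p'_3}|$ (the paper compresses this into $\delta(T^{\mathcal{A}}_{p'_3}T^{\mathcal{A}}_{p_k}) = \delta_{p_k p'_3}|T^{\mathcal{A}}_{p_k}|$), so you should state explicitly which structure-constant label this yields under the prime convention in order to land exactly on the $C^{p'_3;\mathcal{A}}_{p_1 p'_2}$ written in the lemma.
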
 
\begin{proof}

\begin{align}
\sum_{\gamma \in S_{m} \times S_{n}} &\delta(T^{\mathcal{A}}_{p'_{3}} \gamma^{-1} \sigma_{1} \gamma \sigma_{2}^{-1})  \nonumber
\\
&= \sum_{\mu_{1} \in S_{m} \times S_{n}} \delta\left(T^{\mathcal{A}}_{p'_{3}} (\mu_{1} \mu_{2})^{-1} \sigma_{1} (\mu_{1} \mu_{2}) \sigma_{2}^{-1}\right) \label{eqn:tmat_lem_relabel_gam}
\\
&= \sum_{\mu_{1} \in S_{m} \times S_{n}} \delta\left((\mu_{2}^{-1} \mu_{2})T^{\mathcal{A}}_{p'_{3}} \mu_{2}^{-1} \mu_{1}^{-1} \sigma_{1} \mu_{1} \mu_{2} \sigma_{2}^{-1}\right) \label{eqn:tmat_lem_insert_id}
\\
&=  \sum_{\mu_{1} \in S_{m} \times S_{n}} \delta \left( \underbrace{\mu_{2}T^{\mathcal{A}}_{p'_{3}} \mu_{2}^{-1}}_{=T^{\mathcal{A}}_{p'_{3}}} \mu_{1}^{-1} \sigma_{1} \mu_{1} \mu_{2} \sigma_{2}^{-1} \mu_{2}^{-1}\right)\label{eqn:tmat_lem_cycles_mu2inv}
\\
&=  \frac{1}{m!n!} \sum_{\mu_{1},\mu_{2} \in S_{m} \times S_{n}} \delta \left( T^{\mathcal{A}}_{p'_{3}} (\mu_{1}^{-1} \sigma_{1} \mu_{1}) (\mu_{2} \sigma_{2}^{-1} \mu_{2}^{-1}) \right)\label{eqn:tmat_lem_sum_over_mu2_added}
\\
&=  \frac{1}{m!n!}  \delta \left( T^{\mathcal{A}}_{p'_{3}} \left(\sum_{\mu_{1} \in S_{m} \times S_{n}}\mu_{1}^{-1} \sigma_{1} \mu_{1}\right) \left(\sum_{\mu_{2} \in S_{m} \times S_{n}} \mu_{2} \sigma_{2}^{-1} \mu_{2}^{-1}\right) \right) \label{eqn:tmat_lem_collecting_sums_in_delta}
\\
&= \frac{| \text{Aut}_{\mathcal{A}}\left(p_{1}\right)|| \text{Aut}_{\mathcal{A}}\left(p'_{2}\right)|}{m!n!}  \delta \left( T^{\mathcal{A}}_{p'_{3}} T^{\mathcal{A}}_{p_{1}} T^{\mathcal{A}}_{p'_{2}} \right) \label{eqn:tmat_lem_eles_and_autos_included}
\\
&= \frac{m!n!}{|T^{\mathcal{A}}_{p_{1}}||T^{\mathcal{A}}_{p'_{2}}|}  \delta \left( T^{\mathcal{A}}_{p'_{3}} T^{\mathcal{A}}_{p_{1}} T^{\mathcal{A}}_{p'_{2}} \right)
\label{eqn:tmat_lem_all_pca_eles_prime_removed} \,.
\end{align}
In the above, equation \eqref{eqn:tmat_lem_relabel_gam} replaces the sum over $\gamma$ for a sum over $\mu_{1}$ with $\gamma \to \mu_{1}\mu_{2}$. The identity element is added in \eqref{eqn:tmat_lem_insert_id}, followed by a rearranging of $\mu_{2}^{-1}$ and the identification of $ T^{\mathcal{A}}_{p'_{3}} = \mu_{2} T^{\mathcal{A}}_{p'_{3}} \mu_{2}^{-1}$ . A sum over $\mu_{2} \in S_{m} \times S_{n}$ is introduced in \eqref{eqn:tmat_lem_sum_over_mu2_added} along with required factor $1/(m!n!)$. Finally, between equations \eqref{eqn:tmat_lem_collecting_sums_in_delta} and \eqref{eqn:tmat_lem_all_pca_eles_prime_removed}, the sums are brought inside the delta function, the PCA elements and automorphism group size factors are introduced, and the orbit-stabiliser theorem is used to simplify the expression. Again, the prime labels are used to represent the partitions/classes of which an inverse permutation belongs, and we have chosen that $\sigma_{1}$ and $\sigma_{2}^{-1}$ belong to partitions/equivalence classes labelled by $p_{1}$ and $p'_{2}$ respectively.

The remaining delta function can be reduced using the multiplication properties of the PCA
\begin{equation}
\label{eqn:tm_delta_struct_cons_rel}
\delta(T^{\mathcal{A}}_{p'_{3}} T^{\mathcal{A}}_{p_{1}} T^{\mathcal{A}}_{p'_{2}}) =  \sum_{p_{k} }C^{p_{k};\mathcal{A}}_{p_{1}p'_{2}} \delta(T^{\mathcal{A}}_{p'_{3}} T^{\mathcal{A}}_{p_{k}} ) =   \sum_{p_{k}}C^{p_{k};\mathcal{A}}_{p_{1}p'_{2}} \delta_{p_{k} p'_{3}} |T^{\mathcal{A}}_{p_{k}}| =  |T^{\mathcal{A}}_{p'_{3}}|C^{p'_{3}; \mathcal{A}}_{p_{1}p'_{2}} \,.
\end{equation}
Plugging this back into equation \eqref{eqn:tmat_lem_all_pca_eles_prime_removed} produces the required result
\begin{equation}
\sum_{\gamma \in S_{m} \times S_{n}} \delta(T^{\mathcal{A}}_{p'_{3}} \gamma^{-1} \sigma_{1} \gamma \sigma_{2}^{-1}) = \frac{m!n!}{|T^{\mathcal{A}}_{p_{1}}||T^{\mathcal{A}}_{p'_{2}}|}  |T^{\mathcal{A}}_{p'_{3}}|C^{p'_{3}; \mathcal{A}}_{p_{1}p'_{2}} \,.
\end{equation}
\end{proof}
Combining the lemma above and equation \eqref{eqn:tm_bmf_trace_and_delta4}, the correlator is
\begin{equation}
\langle \mathcal{O}_{\sigma_{1}}(X, Y) (\mathcal{O}_{\sigma_{2}}(X, Y))^{\dagger} \rangle = \sum_{p_{3}} \frac{m!n!|T^{\mathcal{A}}_{p'_{3}}|}{|T^{\mathcal{A}}_{p_{1}}||T^{\mathcal{A}}_{p'_{2}}|}  C^{p'_{3}; \mathcal{A}}_{p_{1}p'_{2}} \mathcal{O}_{\sigma^{\left(p_{3}\right)}}(B_{x}, B_{y})\,.
\end{equation}
Finally, using the facts that $|T^{\mathcal{A}}_{p'_{i}}| = |T^{\mathcal{A}}_{p_{i}}|$, $\mathcal{O}_{\sigma_{1}} \equiv \mathcal{O}_{\sigma^{\left(p_{1}\right)}}$ and $\mathcal{O}_{\sigma_{2}^{-1}} \equiv \mathcal{O}_{\sigma^{(p'_{2})}}$, then rearranging the orbit size factors and applying $|T_{p_{i}}^{\mathcal{A}}|\mathcal{O}_{\sigma^{\left( p_{i}\right)}} = \mathcal{O}_{T^{\mathcal{A}}_{p_{i}}} \equiv \mathcal{O}_{p_{i}}$, we achieve the combinatorial basis representation of the correlator
\begin{equation}
\label{eqn:tm_bmf_simplified_final_corr_boxed}
\boxed{
\langle \mathcal{O}_{p_{1}}(X, Y) (\mathcal{O}_{p_{2}}(X, Y))^{\dagger} \rangle = m!n!\sum_{p_{3}}  C^{p'_{3}; \mathcal{A}}_{p_{1}p'_{2}} \mathcal{O}_{p_{3}}(B_{x}, B_{y})}
\end{equation}
Akin to the result of the one-matrix correlator with coupling matrix field of \eqref{eqn:aomb_correlator_result_boxed}, this two-matrix result shows that the insertion of combinatorial basis GIOs with fluctuating/quantum fields $X$ and $Y$, is equal to a linear combination of operators in the associated witness fields $B_{x}$ and $B_{y}$. As such, the structure constants can be evaluated by choosing the basis labels $(p_{1}, p'_{2}, p'_{3})$ for the quantum and witness field operators. It is worth noting that the structure constants $C^{p'_{3}; \mathcal{A}}_{p_{1}p'_{2}}$ are integer valued.


\subsection{Fourier/\texorpdfstring{$Q$}{Q}-basis for the two-matrix, two-point function}
\label{ss:two_matrix_schur_basis}	
As explained at the start of this section, the two-matrix observables can be enumerated using equivalence classes of permutations $ \sigma \in S_{ m+n} $ under the equivalence relation   $ \sigma \sim \gamma \sigma \gamma^{-1}$ where  $\gamma \in S_{m} \times S_{n} \subset S_{ m+n} $.
Following \cite{Balasubramanian:2002sa,Balasubramanian:2004nb,Bhattacharyya:2008rb,Pasukonis:2013ts,Ramgoolam:2016ciq,Mattioli:2016eyp,}, it will be useful to consider the decomposition of the irreducible representation $V_{R}^{(S_{m+n})} $ of $S_{ m+n}$ labelled by the Young diagram $R$ having $m+n$ boxes as a direct sum of  irreps $V_{R_{1}}^{(S_{m})} \otimes V_{R_{2}}^{(S_{n})}$ of the product group $S_{ m } \times S_n \subset S_{ m+n}$, where $R_1, R_2 $ have $m $ and $n$ boxes respectively.  This takes the form 
\begin{equation}
\label{eqn:spb_vector_decomp}
V_{R}^{(S_{m+n})} = \bigoplus_{R_{1}, R_{2}} V_{R_{1}}^{(S_{m})} \otimes V_{R_{2}}^{(S_{n})} \otimes V_{R_{1}, R_{2}}^{R}
\end{equation}
where $V_{R_{1},R_{2}}^{R}$ is the ``multiplicity space" with dimension equal to the number of times 
$V_{R_{1}}^{(S_{m})} \otimes V_{R_{2}}^{(S_{n})} $ appears in the decomposition. This dimension is the 
 Littlewood-Richardson coefficient denoted $ g ( R_1 , R_2 , R ) $. The states of this subgroup basis are labelled by
\begin{equation}
\label{eqn:spb_subgroup_basis_state}
\ket{R_{1}, R_{2}, m_{1}, m_{2}; \nu}
\end{equation}
where the $m_{1}$ and $m_{2}$ label the states of $V_{R_{1}}^{(S_{m})}$ and $V_{R_{2}}^{(S_{n})}$ respectively, while $\nu$ is the index which runs over a basis for the multiplicity  space $V_{R_{1}, R_{2}}^{R}$. As in the single matrix case, Schur polynomial operators can be built using characters. For the two-matrix case specifically, these operators are known as \textit{restricted Schur polynomials} \cite{Bhattacharyya:2008rb,deMelloKoch:2007rqf,Kimura:2008ac,Bhattacharyya:2008xy,phdthesis_steph,Bekker:2007ea,Collins:2008gc} due to the choice of subgroup basis and formed using the \textit{restricted character}:
\begin{equation}
\chi^{R}_{R_{1},R_{2},\mu,\nu}(\sigma)
\end{equation}
where again, the integers $\mu,\nu$ run over the multiplicity $g(R_1, R_2, R)$ of the branching $R \rightarrow R_1 \otimes R_2$:
$ 1 \leq \nu_{1}; \nu_{2} \leq g(R_1, R_2, R)$, and there is an associated Young diagram for each representation $R\,,R_{1}$ and $R_{2}$. The restricted character is analogous to $\chi_{R}(\sigma)$ in \eqref{eqn:spb_complex_character_schur}, but equipped with the necessary indices inherited from the decomposition of \eqref{eqn:spb_vector_decomp}. Using the projector-like operator 
\begin{equation}
\label{eqn:spb_projection_interwining_op}
P^{R}_{R_{1},R_{2},\mu,\nu} = \sum_{m_{1}}^{d_{R_{1}}}  \sum_{m_{2}}^{d_{R_{2}}} \ket{R_{1}, R_{2}, m_{1}, m_{2}; \mu} \bra{R_{1}, R_{2}, m_{1}, m_{2}; \nu}	
\end{equation}
where $d_{R_{i}}$ is the dimension of irrep $R_{i}$, the restricted characters are explicitly defined as
\begin{equation}
\label{eqn:spb_restricted_char_def}
\chi^{R}_{R_{1}, R_{2}, \mu, \nu}(\sigma)  = \text{Tr}_{V^{(R_{1},R_{2})}}\left[ P^{R}_{R_{1},R_{2},\mu,\nu} \left(D^{R}(\sigma)\right) \right] \,.
\end{equation}
Here, $D^{R}(\sigma)$ is the representation matrix of $\sigma$ in representation $R$ and the trace is taken over the subspace corresponding to the subduction $(R_{1}, R_{2})$ of $R$.
The projection-like operator's components in turn may be written in terms of branching coefficients as
\begin{equation}
\left(P^{R}_{R_{1},R_{2},\mu, \nu}\right)^{ij} = \sum_{m_{1},m_{2}} B^{R; i}_{R_{1}, R_{2}, \mu ;m_{1}, m_{2}} B^{R; j}_{R_{1}, R_{2}, \nu; m_{1}, m_{2}} 
\end{equation}
where the branching coefficients, $B^{R; i}_{R_{a}, R_{b}, \nu;m_{a}, m_{b}}$, are defined as the components of the vector $\ket{R_{a},R_{b},m_{a},m_{b}; \nu}$ in any given orthogonal basis for $R$: $B^{R; i}_{R_{a}, R_{b}, \nu ; m_{a}, m_{b}} = \braket{R;i|R_{a}, R_{b}, m_{a}, m_{b}; \nu}$.

The final step in producing the Schur polynomial operators, comes via the Wedderburn-Artin theorem, which states that the PCA $\mathcal{A}(m,n)$ may be decomposed as a sum of matrix algebras
\begin{equation}
\mathcal{A}(m,n) = \bigoplus_{\substack{R \\ R_{1},R_{2}}} \text{Span}\{Q^{R}_{R_{1},R_{2},\mu,\nu}; \mu, \nu 	 \}
\end{equation}
where these $Q$-basis elements, using the above terminology, are defined as 
\begin{align}
\label{eqn:q_basis_tm_definition}
Q^{R}_{R_{1},R_{2},\mu,\nu} = \frac{d_{R}}{(m+n)!}\sum_{\sigma \in S_{m+n}} \chi^{R}_{R_{1},R_{2},\mu,\nu} (\sigma) \sigma^{-1} \,,
\end{align}
where $d_{R}$ is the dimension of representation $R$ of $S_{m+n}$, and they have the property that they multiply as matrices in the multiplicity indices
\begin{equation}
\label{eqn:q_basis_tm_multiplication_rule}
Q^{R}_{R_{1},R_{2},\mu_{1},\nu_{1}}Q^{R}_{R_{1},R_{2},\mu_{2},\nu_{2}} =  \delta^{RS}\delta_{R_{1} S_{1}} \delta_{R_{2} S_{2}} \delta_{\nu_{1}\mu_{2}}  Q^{R}_{R_{1},R_{2},\mu_{1},\nu_{2}} \,.
\end{equation}
This is derived by generalising the multi-matrix model result in Appendix \ref{app:redone_wa_q_basis}, to the two-matrix model. Therefore, the associated restricted Schur polynomial operators, which shall be equivalently called $Q$-basis/Fourier basis operators throughout, can be formed by Fourier transforming the permutation parameterised operators of \eqref{eqn:atmb_gio_def} as follows
\begin{equation}
\label{eqn:q_basis_tm_operator_definition}
\mathcal{O}^{R}_{R_{1},R_{2},\mu, \nu}(X,Y)  = \sum_{\sigma \in S_{m+n}} \delta\left(Q^{R}_{R_{1},R_{2},\mu,\nu} \sigma^{-1} \right)\mathcal{O}_{\sigma}(X,Y) \,.
\end{equation}
Its Hermitian conjugate is 
\begin{align}
\label{eqn:q_basis_tm_operator_definition_conj}
\left(\mathcal{O}^{R}_{R_{1},R_{2},\mu, \nu}(X,Y) \right)^{\dagger}  &=  \sum_{\sigma\in S_{m+n}} \delta\left(Q^{R}_{R_{1},R_{2},\nu,\mu} \sigma \right)\mathcal{O}_{\sigma^{-1}}(X^{\dagger},Y^{\dagger})
\\
&=  \sum_{\tilde{\sigma}\in S_{m+n}} \delta\left(Q^{R}_{R_{1},R_{2},\nu,\mu} \tilde{\sigma}^{-1} \right)\mathcal{O}_{\tilde{\sigma}}(X^{\dagger},Y^{\dagger})
\\
&= \mathcal{O}^{R}_{R_{1},R_{2},\nu, \mu}(X^{\dagger},Y^{\dagger})
\,,
\end{align}
where properties $(Q^{R}_{R_{1},R_{2},\mu,\nu})^{\dagger} = Q^{R}_{R_{1},R_{2},\nu,\mu}$ (see Appendix \ref{app:redone_wa_q_basis}) and $\sigma^{\dagger} = \sigma^{-1}$ were used, and $\sigma \to \tilde{\sigma} = \sigma^{-1}$ was applied via sum invariance. The corresponding two-point function is hence
\begin{align}
{}&\langle \mathcal{O}^{R}_{R_{1},R_{2},\mu_{1}, \nu_{1}}(X,Y) \left(\mathcal{O}^{S}_{S_{1},S_{2},\mu_{2}, \nu_{2}}(X,Y) \right)^{\dagger}  \rangle  \nonumber
\\
{}&= \langle \mathcal{O}^{R}_{R_{1},R_{2},\mu_{1}, \nu_{1}}(X,Y) \mathcal{O}^{S}_{S_{1},S_{2},\nu_{2}, \mu_{2}}(X^{\dagger},Y^{\dagger})  \rangle 
\\
{}&= \sum_{\sigma,\tau \in S_{m+n}} \delta(Q^{R}_{R_{1},R_{2},\mu_{1}, \nu_{1}} \sigma^{-1}) \delta(Q^{S}_{S_{1},S_{2},\nu_{2}, \mu_{2}} \tau) \langle \mathcal{O}_{\sigma}(X,Y) \mathcal{O}_{\tau^{-1}}(X^{\dagger},Y^{\dagger}) \rangle 
\\
{}&= \sum_{\sigma,\tau, \rho \in S_{m+n}} \sum_{\gamma \in S_{m}\times S_{n}} \delta(Q^{R}_{R_{1},R_{2},\mu_{1}, \nu_{1}} \sigma^{-1}) \delta(Q^{S}_{S_{1},S_{2},\nu_{2}, \mu_{2}} \tau)   \mathcal{O}_{\rho}(B_{x},B_{y}) \delta(\rho^{-1} \gamma^{-1} \sigma \gamma \tau^{-1})\label{eqn:q_basis_tm_corr_with_perm_operators}
\end{align}
Equation \eqref{eqn:q_basis_tm_corr_with_perm_operators} introduces the permutation parameterised correlator result from \eqref{eqn:tm_bmf_trace_and_delta2_5}. Here we introduce a lemma to simplify the above expression
\begin{lemma}
\label{lem:two-matrix_lemma_fourier}
For $\sigma,\tau,\rho \in S_{m+n}$, $\gamma \in S_{m} \times S_{n}$, irreps $R,R_{1}, R_{2}$ and multiplicity indices $\mu_{1,2}, \nu_{1,2}$, the following expression holds 
\begin{align}
{}& \sum_{\sigma,\tau, \rho \in S_{m+n}} \sum_{\gamma \in S_{m}\times S_{n}} \delta(Q^{R}_{R_{1},R_{2},\mu_{1}, \nu_{1}} \sigma^{-1}) \delta(Q^{S}_{S_{1},S_{2},\nu_{2}, \mu_{2}} \tau)   \mathcal{O}_{\rho}(B_{x},B_{y}) \delta(\rho^{-1} \gamma^{-1} \sigma \gamma \tau^{-1}) \nonumber
\\
{}&= m!n! \delta^{RS}\delta_{R_{1}S_{1}}\delta_{R_{2}S_{2}} \delta_{\nu_{1}\nu_{2}} \mathcal{O}^{R}_{R_{1},R_{2},\mu_{1},\mu_{2}}(B_{x},B_{y})
\end{align}
\end{lemma}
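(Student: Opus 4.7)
The plan is to reduce the four-fold sum to a single algebraic identity in the PCA $\mathcal{A}(m,n)$ by using linearity of $\mathcal{O}_{(\cdot)}(B_x,B_y)$ in the group algebra $\mathbb{C}[S_{m+n}]$, together with the characterizing property of the Q-basis as a Wedderburn-Artin basis. First, I would carry out the $\rho$ sum by applying the final delta, which collapses it and produces $\mathcal{O}_{\gamma^{-1}\sigma\gamma\tau^{-1}}(B_x,B_y)$. Then, I would re-index the $\tau$ sum by $\tau \mapsto \tau^{-1}$ so that both Q-basis delta factors take the form $\delta(Q\,\sigma^{-1})$, which simply extract the coefficient of the relevant permutation in a group algebra element.

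Next, I would invoke linearity of $\mathcal{O}_\bullet(B_x,B_y): \mathbb{C}[S_{m+n}] \to \mathbb{C}$, using the elementary observation that for any $A \in \mathbb{C}[S_{m+n}]$ one has $\sum_\sigma \delta(A\sigma^{-1})\,\mathcal{O}_\sigma(B_x,B_y) = \mathcal{O}_A(B_x,B_y)$. Applied bilinearly in $\sigma$ and the relabelled $\tau$ variable, this turns the double sum into a single evaluation
\begin{equation*}
\sum_{\gamma \in S_m \times S_n}\mathcal{O}_{\gamma^{-1} Q^{R}_{R_1,R_2,\mu_1,\nu_1}\,\gamma\, Q^{S}_{S_1,S_2,\nu_2,\mu_2}}(B_x,B_y) \, .
\end{equation*}
Here the crucial input is that the Q-basis elements live in the PCA $\mathcal{A}(m,n) \subset \mathbb{C}[S_{m+n}]$, which by its very definition commutes with $\mathbb{C}[S_m \times S_n]$. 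Consequently $\gamma^{-1} Q^{R}_{R_1,R_2,\mu_1,\nu_1} \gamma = Q^{R}_{R_1,R_2,\mu_1,\nu_1}$, and the $\gamma$ sum is trivial, contributing an overall factor of $|S_m \times S_n| = m!\,n!$.

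Finally, I would apply the Wedderburn-Artin multiplication rule \eqref{eqn:q_basis_tm_multiplication_rule}. With the index assignment appearing in the statement, namely the second Q-basis element being $Q^{S}_{S_1,S_2,\nu_2,\mu_2}$, the product reads
\begin{equation*}
Q^{R}_{R_1,R_2,\mu_1,\nu_1}\,Q^{S}_{S_1,S_2,\nu_2,\mu_2} = \delta^{RS}\delta_{R_1 S_1}\delta_{R_2 S_2}\delta_{\nu_1 \nu_2}\,Q^{R}_{R_1,R_2,\mu_1,\mu_2}\,,
\end{equation*}
which upon inserting into $\mathcal{O}_{(\cdot)}(B_x,B_y)$ and using definition \eqref{eqn:q_basis_tm_operator_definition} delivers precisely $m!n!\,\delta^{RS}\delta_{R_1S_1}\delta_{R_2S_2}\delta_{\nu_1\nu_2}\,\mathcal{O}^{R}_{R_1,R_2,\mu_1,\mu_2}(B_x,B_y)$. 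I do not anticipate a genuine obstacle; the only subtle bookkeeping is ensuring that the $\tau \mapsto \tau^{-1}$ relabelling correctly tracks which multiplicity indices pair up under the matrix product, so that the final Kronecker deltas come out as $\delta_{\nu_1\nu_2}$ (matching the second index of the first $Q$ with the first index of the second $Q$) rather than $\delta_{\nu_1\mu_2}$.
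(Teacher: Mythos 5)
Your proposal is correct and follows essentially the same route as the paper: collapse the Q-basis delta sums (equivalently, use linearity to replace $\sigma$ and $\tau^{-1}$ by the Q-elements inside $\mathcal{O}_{(\cdot)}(B_x,B_y)$), use the PCA property $\gamma^{-1}Q\gamma=Q$ to trivialize the $\gamma$ sum into a factor $m!n!$, then apply the Wedderburn--Artin multiplication rule \eqref{eqn:q_basis_tm_multiplication_rule} and the definition \eqref{eqn:q_basis_tm_operator_definition}. The only difference from the paper's presentation is the order in which the sums are performed (you do the $\rho$ sum first), and your bookkeeping of the $\tau\mapsto\tau^{-1}$ relabelling and the resulting $\delta_{\nu_1\nu_2}$ is handled correctly.
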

\begin{proof}
\begin{align}
{}& \sum_{\sigma,\tau, \rho \in S_{m+n}} \sum_{\gamma \in S_{m}\times S_{n}} \delta(Q^{R}_{R_{1},R_{2},\mu_{1}, \nu_{1}} \sigma^{-1}) \delta(Q^{S}_{S_{1},S_{2},\nu_{2}, \mu_{2}} \tau)   \mathcal{O}_{\rho}(B_{x},B_{y}) \delta(\rho^{-1} \gamma^{-1} \sigma \gamma \tau^{-1}) \label{eqn:q_basis_tm_sub_in_perm_corr}
\\
{}&= \sum_{\rho\in S_{m+n}} \sum_{\gamma \in S_{m}\times S_{n}}  \mathcal{O}_{\rho}(B_{x},B_{y}) \delta(\rho^{-1} \gamma^{-1} Q^{R}_{R_{1},R_{2},\mu_{1}, \nu_{1}} \gamma Q^{S}_{S_{1},S_{2},\nu_{2}, \mu_{2}}) \label{eqn:q_basis_tm_sum_over_q_deltas}
\\
{}&= m!n! \sum_{\rho\in S_{m+n}} \mathcal{O}_{\rho}(B_{x},B_{y}) \delta(\rho^{-1} Q^{R}_{R_{1},R_{2},\mu_{1}, \nu_{1}} Q^{S}_{S_{1},S_{2},\nu_{2}, \mu_{2}}) \label{eqn:q_basis_tm_conj_invar}
\\
{}&= m!n! \delta^{RS}\delta_{R_{1}S_{1}}\delta_{R_{2}S_{2}} \delta_{\nu_{1}\nu_{2}} \left[ \sum_{\rho\in S_{m+n}}  \delta(Q^{R}_{R_{1},R_{2},\mu_{1}, \mu_{2}}\rho^{-1} )\mathcal{O}_{\rho}(B_{x},B_{y})\right] \label{eqn:q_basis_tm_basis_multiplication}
\\
{}&= m!n! \delta^{RS}\delta_{R_{1}S_{1}}\delta_{R_{2}S_{2}} \delta_{\nu_{1}\nu_{2}} \mathcal{O}^{R}_{R_{1},R_{2},\mu_{1},\mu_{2}}(B_{x},B_{y}) \label{eqn:q_basis_tm_q_operator_emerges} \,.
\end{align}
Equation \eqref{eqn:q_basis_tm_sum_over_q_deltas} computed the sum over the $Q$-basis delta functions, replacing the $\sigma$ and $\tau^{-1}$ in the final delta function. Equation \eqref{eqn:q_basis_tm_conj_invar} makes use of the invariance of $Q$-basis elements under conjugation by $\gamma \in S_{m} \times S_{n}$, while \eqref{eqn:q_basis_tm_basis_multiplication} is obtained by utilising their multiplication property \eqref{eqn:q_basis_tm_multiplication_rule}. Finally, \eqref{eqn:q_basis_tm_q_operator_emerges} yields a $Q$-basis operator in witness fields, following definition \eqref{eqn:q_basis_tm_operator_definition}. 
\end{proof}

Applying lemma \ref{lem:two-matrix_lemma_fourier} to equation \eqref{eqn:q_basis_tm_corr_with_perm_operators}, the two-matrix generalisation to that of the one-matrix outcome in Section \ref{ss:single_matrix_schur_basis}, is therefore
\begin{equation}
	\boxed{
	\begin{aligned}
\label{eqn:two_matrix_schur_basis_corr2}
\left\langle  \mathcal{O}^{R}_{R_{1}, R_{2}, \mu_{1}, \nu_{1}}(X, Y) \right. & \left. (\mathcal{O}^{S}_{S_{1}, S_{2}, \mu_{2}, \nu_{2}}(X, Y))^{\dagger} \right\rangle =
\\
&  m!n! \delta^{RS}\delta_{R_{1}S_{1}}\delta_{R_{2}S_{2}} \delta_{\nu_{1}\nu_{2}} \mathcal{O}^{R}_{R_{1},R_{2},\mu_{1},\mu_{2}}(B_{x},B_{y})
	\end{aligned}
	}
\end{equation}
This shows that the correlator of two Fourier basis operators in two matrices $X$ and $Y$, is orthogonal in its representations, and is proportional to a Fourier basis operator made of the witness fields, $B_{x}$ and $B_{y}$. 


\subsection{Observable functions of quantum and classical fields for two-matrix case}
\label{ss:back_fields_in_observables_for_tm}

If the observables are defined to include classical witness fields, and the action defined to have no coupling, then the same result can be acquired. Define the gauge invariant operator as
\begin{align}
\begin{split}
\label{eqntm_:bmf_in_ob_obs_definition}
\mathcal{O}_{\sigma}(XA_{x}, YA_{y}) &= 
\text{Tr}_{V^{\otimes m+n}_{N}}\left((XA_{x})^{\otimes m} \otimes (YA_{y})^{\otimes n} \cL_{\sigma} \right)
\\
&=(XA_{x})^{i_{1}}_{i_{\sigma(1)}} \dots(XA_{x})^{i_{m}}_{i_{\sigma(m)}}  (YA_{y})^{i_{m+1}}_{i_{\sigma(m+1)}} \dots (YA_{y})^{i_{m+n}}_{i_{\sigma(m+n)}}
\end{split}
\end{align}
and its Hermitian conjugate
\begin{align}
\begin{split}
\label{eqn:tm_bmf_in_ob_obs_definition_herm}
(\mathcal{O}_{\sigma}&(XA_{x}, YA_{y}))^{\dagger}
\\ 
&= \left(\text{Tr}_{V^{\otimes m+n}_{N}}\left( (XA_{x})^{\otimes m} \otimes (YA_{y})^{\otimes n} \cL_{\sigma}\right) \right)^{\dagger}
\\
&= (A_{x}^{\dagger} X^{\dagger})^{i_{1}}_{i_{\sigma^{-1}(1)}} \dots (A_{x}^{\dagger} X^{\dagger})^{i_{m}}_{i_{\sigma^{-1}(m)}}  (A_{y}^{\dagger} Y^{\dagger})^{i_{m+1}}_{i_{\sigma^{-1}(m+1)}} \dots (A_{y}^{\dagger} Y^{\dagger})^{i_{m+n}}_{i_{\sigma^{-1}(m+n)}} 
\\
&= \mathcal{O}_{\sigma^{-1}}(A_{x}^{\dagger}X^{\dagger}, A_{y}^{\dagger}Y^{\dagger})
\,.
\end{split}
\end{align}
where $A_{x}$ and $A_{y}$ are the initially defined classical witness fields. Using the partition function 
\begin{equation}
\label{eqn:tmob_no_bmf_partition_func}
\Sigma[0] = \int [dX][dY] e^{ - \text{Tr}\left(XX^{\dagger} \right) -\text{Tr}\left(YY^{\dagger} \right)}
\end{equation}
as well as the basic field correlators
\begin{align}
\begin{split}
\langle X^{i}_{j} (X^{\dagger})^{k}_{l} \rangle &= \frac{1}{\Sigma[0]}\int [dX][dY]  X^{i}_{j} (X^{\dagger})^{k}_{l} e^{-\text{Tr}(XX^{\dagger})-\text{Tr}(YY^{\dagger})} = \delta^{i}_{l} \delta^{k}_{j}\,,
\\
\langle Y^{i}_{j} (Y^{\dagger})^{k}_{l} \rangle &= \frac{1}{\Sigma[0]}\int [dX][dY]  Y^{i}_{j} (Y^{\dagger})^{k}_{l} e^{-\text{Tr}(XX^{\dagger})-\text{Tr}(YY^{\dagger})} = \delta^{i}_{l} \delta^{k}_{j}\,,
\end{split}
\end{align}
where all other basic field correlators vanish, one may calculate that the correlator of these GIOs is
\begin{align}
\langle {}& \mathcal{O}_{\sigma_{1}}(XA_{x}, YA_{y}) (\mathcal{O}_{\sigma_{2}}(XA_{x}, YA_{y}))^{\dagger}  \rangle \nonumber
\\
{}&= \frac{1}{\Sigma[0]} \int [dX][dY] \mathcal{O}_{\sigma_{1}}(XA_{x},YA_{y}) \left(\mathcal{O}_{\sigma_{2}}(XA_{x},YA_{y})\right)^{\dagger} e^{ - \text{Tr}\left(XX^{\dagger} \right) -\text{Tr}\left(YY^{\dagger} \right)} 
\\
\begin{split}
{}&= \sum_{I,J,K,L} (A_{x}^{\dagger})^{j_{1}}_{l_{1}} (A_{x})^{k_{1}}_{i_{\sigma_{1}(1)}}  \dots (A_{x}^{\dagger})^{j_{m}}_{l_{m}}(A_{x})^{k_{m}}_{i_{\sigma_{1}(m)}} ~~ (A_{y}^{\dagger})^{j_{m+1}}_{l_{m+1}}  (A_{y})^{k_{m+1}}_{i_{\sigma_{1}(m+1)}}  \dots  (A_{y}^{\dagger})^{j_{m+n}}_{l_{m+n}} (A_{y})^{k_{m+n}}_{i_{\sigma_{1}(m+n)}} 
\\
{}& \quad \times  \frac{1}{\Sigma[0]} \int [dX][dY] X^{i_{1}}_{k_{1}} \dots  X^{i_{m}}_{k_{m}}  Y^{i_{m+1}}_{k_{m+1}}  \dots Y^{i_{m+n}}_{k_{m+n}}  
\\
{}& \quad \times (X^{\dagger})^{l_{1}}_{j_{\sigma_{2}^{-1}(1)}} \dots (X^{\dagger})^{l_{m}}_{j_{\sigma_{2}^{-1}(m)}}  (Y^{\dagger})^{l_{m+1}}_{j_{\sigma_{2}^{-1}(m+1)}} \dots (Y^{\dagger})^{l_{m+n}}_{j_{\sigma_{2}^{-1}(m+n)}} 
e^{ - \text{Tr}\left(XX^{\dagger} \right) -\text{Tr}\left(YY^{\dagger} \right)} 
\end{split} 
\label{eqn:tm_bmf_in_ob_corr}
\end{align}
where the sum over $I,J,K,L$ represents a sum over all associated matrix indices. For notational convenience we define a function in the witness fields
\begin{multline}
f_{x,y}(A, A^{\dagger}; \vec{i}, \vec{j}, \vec{k},\vec{l}; \sigma_{1} ) = (A_{x}^{\dagger})^{j_{1}}_{l_{1}} (A_{x})^{k_{1}}_{i_{\sigma_{1}(1)}}  \dots (A_{x}^{\dagger})^{j_{m}}_{l_{m}}(A_{x})^{k_{m}}_{i_{\sigma_{1}(m)}} 
\\
\times (A_{y}^{\dagger})^{j_{m+1}}_{l_{m+1}}  (A_{y})^{k_{m+1}}_{i_{\sigma_{1}(m+1)}} \dots 
 (A_{y}^{\dagger})^{j_{m+n}}_{l_{m+n}} (A_{y})^{k_{m+n}}_{i_{\sigma_{1}(m+n)}} \,,
\end{multline}
where, for example, all possible $i$ matrix indices are denoted as vector $\vec{i}$ components. With this definition, the correlator becomes
\begin{align}
\langle {}& \mathcal{O}_{\sigma_{1}}(XA_{x}, YA_{y}) (\mathcal{O}_{\sigma_{2}}(XA_{x}, YA_{y}))^{\dagger}  \rangle \nonumber
\\
\begin{split}
{}&= \sum_{I,J,K,L} f_{x,y}(A, A^{\dagger}; \vec{i}, \vec{j}, \vec{k},\vec{l}; \sigma_{1} )  \frac{1}{\Sigma[0]} \int [dX][dY] X^{i_{1}}_{k_{1}} \dots  X^{i_{m}}_{k_{m}}  Y^{i_{m+1}}_{k_{m+1}}  \dots Y^{i_{m+n}}_{k_{m+n}} 
\\
{}& \quad \times (X^{\dagger})^{l_{1}}_{j_{\sigma_{2}^{-1}(1)}} \dots    (X^{\dagger})^{l_{m}}_{j_{\sigma_{2}^{-1}(m)}}   (Y^{\dagger})^{l_{m+1}}_{j_{\sigma_{2}^{-1}(m+1)}} \dots (Y^{\dagger})^{l_{m+n}}_{j_{\sigma_{2}^{-1}(m+n)}} 
e^{ - \text{Tr}\left(XX^{\dagger} \right) -\text{Tr}\left(YY^{\dagger} \right)}
\end{split} \label{eqn:tm_bmf_back_class_f_func_and_part_func}
\\
\begin{split}
{}&= \sum_{I,J,K,L} f_{x,y}(A, A^{\dagger}; \vec{i}, \vec{j}, \vec{k},\vec{l}; \sigma_{1} ) \langle  X^{i_{1}}_{k_{1}} \dots  X^{i_{m}}_{k_{m}}  Y^{i_{m+1}}_{k_{m+1}}  \dots Y^{i_{m+n}}_{k_{m+n}} 
\\
{}& \qquad \qquad \qquad \qquad \times (X^{\dagger})^{l_{1}}_{j_{\sigma_{2}^{-1}(1)}} \dots   (X^{\dagger})^{l_{m}}_{j_{\sigma_{2}^{-1}(m)}} (Y^{\dagger})^{l_{m+1}}_{j_{\sigma_{2}^{-1}(m+1)}} \dots (Y^{\dagger})^{l_{m+n}}_{j_{\sigma_{2}^{-1}(m+n)}}  \rangle 
\end{split} 
\\
\begin{split}
{}&= \sum_{I,J,K,L}  f_{x,y}(A, A^{\dagger}; \vec{i}, \vec{j}, \vec{k},\vec{l}; \sigma_{1} ) \sum_{\gamma \in S_{m} \times S_{n}} \delta^{i_{1}}_{j_{\gamma \sigma_{2}^{-1}(1)}} \delta^{l_{1}}_{k_{\gamma^{-1}(1)}} \dots \delta^{i_{m+n}}_{j_{\gamma \sigma_{2}^{-1}(m+n)}} \delta^{l_{m+n}}_{k_{\gamma^{-1}(m+n)}}
\end{split}
\\
\begin{split}
{}&= \sum_{J,K}  \sum_{\gamma \in S_{m} \times S_{n}}  (A_{x}^{\dagger})^{j_{1}}_{k_{\gamma^{-1}(1)}} (A_{x})^{k_{1}}_{j_{\sigma_{1} \gamma \sigma_{2}^{-1}(1)}}  \dots (A_{x}^{\dagger})^{j_{m}}_{k_{\gamma^{-1}(m)}}(A_{x})^{k_{m}}_{j_{\sigma_{1} \gamma \sigma_{2}^{-1}(m)}} 
\\
{}& \qquad \qquad \qquad \qquad \times (A_{y}^{\dagger})^{j_{m+1}}_{k_{\gamma^{-1}(m+1)}}  (A_{y})^{k_{m+1}}_{j_{\sigma_{1} \gamma \sigma_{2}^{-1}(m+1)}} \dots (A_{y}^{\dagger})^{j_{m+n}}_{k_{\gamma^{-1}(m+n)}} (A_{y})^{k_{m+n}}_{j_{\sigma_{1} \gamma \sigma_{2}^{-1}(m+n)}} 
\end{split} 
\\
\begin{split}
{}&= \sum_{j_{1}, \dots, j_{n}} \sum_{\gamma \in S_{m} \times S_{n}} \left(A^{\dagger}_{x} A_{x}\right)_{j_{\gamma^{-1} \sigma_{1} \gamma \sigma_{2}^{-1}(1)}}^{j_{1}} \dots \left(A^{\dagger}_{x} A_{x}\right)_{j_{\gamma^{-1} \sigma_{1} \gamma \sigma_{2}^{-1}(m)}}^{j_{m}} 
\\
{}& \qquad  \qquad  \qquad  \qquad \qquad  \qquad  \times \left(A^{\dagger}_{y} A_{y} \right)_{j_{\gamma^{-1} \sigma_{1} \gamma \sigma_{2}^{-1}(m+1)}}^{j_{m+1}} \dots \left(A^{\dagger}_{y} A_{y}\right)_{j_{\gamma^{-1} \sigma_{1} \gamma \sigma_{2}^{-1}(m+n)}}^{j_{m+n}} 
\end{split}  \label{eqn:tm_bmf_back_class_aadag}
\\
\begin{split}
{}&= \sum_{j_{1}, \dots, j_{n}} \sum_{\gamma \in S_{m} \times S_{n}} \left(B_{x}\right)_{j_{\gamma^{-1} \sigma_{1} \gamma \sigma_{2}^{-1}(1)}}^{j_{1}} \dots \left(B_{x}\right)_{j_{\gamma^{-1} \sigma_{1} \gamma \sigma_{2}^{-1}(m)}}^{j_{m}} 
\\
{}& \qquad  \qquad  \qquad  \qquad \qquad \qquad \qquad \times \left(B_{y}\right)_{j_{\gamma^{-1} \sigma_{1} \gamma \sigma_{2}^{-1}(m+1)}}^{j_{m+1}} \dots \left(B_{y}\right)_{j_{\gamma^{-1} \sigma_{1} \gamma \sigma_{2}^{-1}(m+n)}}^{j_{m+n}}  \label{eqn:tm_bmf_back_class_aadag_to_b}
\end{split} 
\\
{}&= \sum_{\gamma \in S_{m} \times S_{n}} \sum_{\sigma_{3} \in S_{m+n}}\mathcal{O}_{\sigma_{3}}(B_{x}, B_{y}) \delta(\sigma_{3}^{-1} \gamma^{-1} \sigma_{1} \gamma \sigma_{2}^{-1})
\label{eqn:tm_bmf_back_class_trace_result} 
\\
{}&= \sum_{\gamma \in S_{m} \times S_{n}} \sum_{p_{3} } \left( \sum_{\alpha \in \text{Orb}_{\mathcal{A}}(p_{3})} \mathcal{O}_{\alpha}(B_{x}, B_{y})\delta( \alpha^{-1} \gamma^{-1} \sigma_{1} \gamma \sigma_{2}^{-1}) \right) \label{eqn:tm_bmf_trace_and_delta2_pt2} 
\\
{}&=  \sum_{p_{3} }  \mathcal{O}_{\sigma^{\left(p_{3}\right)}}(B_{x}, B_{y})\sum_{\gamma \in S_{m} \times S_{n}} \delta\left( \sum_{\alpha \in \text{Orb}_{\mathcal{A}}(p_{3})} \alpha^{-1} \gamma^{-1} \sigma_{1} \gamma \sigma_{2}^{-1} \right)\label{eqn:tm_bmf_trace_and_delta3_pt2}  
\\
{}&=  \sum_{p_{3}}  \mathcal{O}_{\sigma^{\left(p_{3}\right)}}(B_{x}, B_{y})\sum_{\gamma \in S_{m} \times S_{n}} \delta\left( T^{\mathcal{A}}_{p'_{3}} \gamma^{-1} \sigma_{1} \gamma \sigma_{2}^{-1} \right) \label{eqn:tm_bmf_trace_and_delta4_pt2}
\\ 
{}&= \sum_{p_{3} } \frac{m! n!|T^{\mathcal{A}}_{p'_{3}}|}{|T^{\mathcal{A}}_{p_{1}}| |T^{\mathcal{A}}_{p'_{2}}|} C^{p'_{3};\mathcal{A}}_{p_{1} p'_{2}} \mathcal{O}_{\sigma^{\left(p_{3}\right)}}(B_{x}, B_{y})\,. \label{eqn:tm_struct_const_and_op_classical}
\end{align}
Initial steps from \eqref{eqn:tm_bmf_back_class_f_func_and_part_func} to \eqref{eqn:tm_bmf_back_class_aadag} use Wick's theorem, Kronecker equivariance and contraction of indices. In \eqref{eqn:tm_bmf_back_class_aadag_to_b} we set the classical matrices as $A^{\dagger}_{x} A_{x} = B_{x}$ and $A^{\dagger}_{y} A_{y} = B_{y}$ to match the notation of equation \eqref{eqn:tm_bmf_same_indices_single_line_perms}, while a delta function and the GIO definition from \eqref{eqn:atmb_gio_def} were used in \eqref{eqn:tm_bmf_back_class_trace_result}. Equations \eqref{eqn:tm_bmf_trace_and_delta2_pt2}-\eqref{eqn:tm_bmf_trace_and_delta4_pt2} split the $\sigma_{3}$ sum into a sum over equivalence classes/partitions, labelled by $p_{3}$, and set a PCA element in the delta function. Note that $\alpha$ and $\sigma^{(p_{3})}$ belong in the same orbit/equivalence class, $\text{Orb}_{\mathcal{A}}(p_{3})$, since $\sigma^{\left(p_{3}\right)}$ is any permutation from the orbit. Lemma \ref{lem:two-matrix_lemma} was used to reach the final line \eqref{eqn:tm_struct_const_and_op_classical}. Using $|T_{p_{i}}^{\mathcal{A}}|\mathcal{O}_{\sigma^{\left( p_{i}\right)}} \equiv |T_{p_{i}}^{\mathcal{A}}|\mathcal{O}_{\sigma_{i}} = \mathcal{O}_{p_{i}}$ and rearranging orbit size factors, we achieve the same combinatorial basis result as equation \eqref{eqn:tm_bmf_simplified_final_corr_boxed} for the coupling matrix derivation
\begin{equation}
\boxed{
\langle \mathcal{O}_{p_{1}}(XA_{x}, YA_{y}) (\mathcal{O}_{p_{2}}(XA_{x}, YA_{y}))^{\dagger}  \rangle = m!n!\sum_{p_{3} } C^{p'_{3};\mathcal{A}}_{p_{1} p'_{2}} \mathcal{O}_{p_{3}}(B_{x}, B_{y})
}
\end{equation}
This two-matrix correlator using classical witness fields has an associated diagram given in Figure \ref{fig:two_mat_corr_bmf_wihtin_obs_png}. The swap in indices occurs now for both $A_{x}$ and $A_{y}$ matrices when the correlator is expanded in terms of a sum over permutations.

\begin{figure}[htb!]
\begin{center}
\includegraphics[width=13cm, height=15cm]{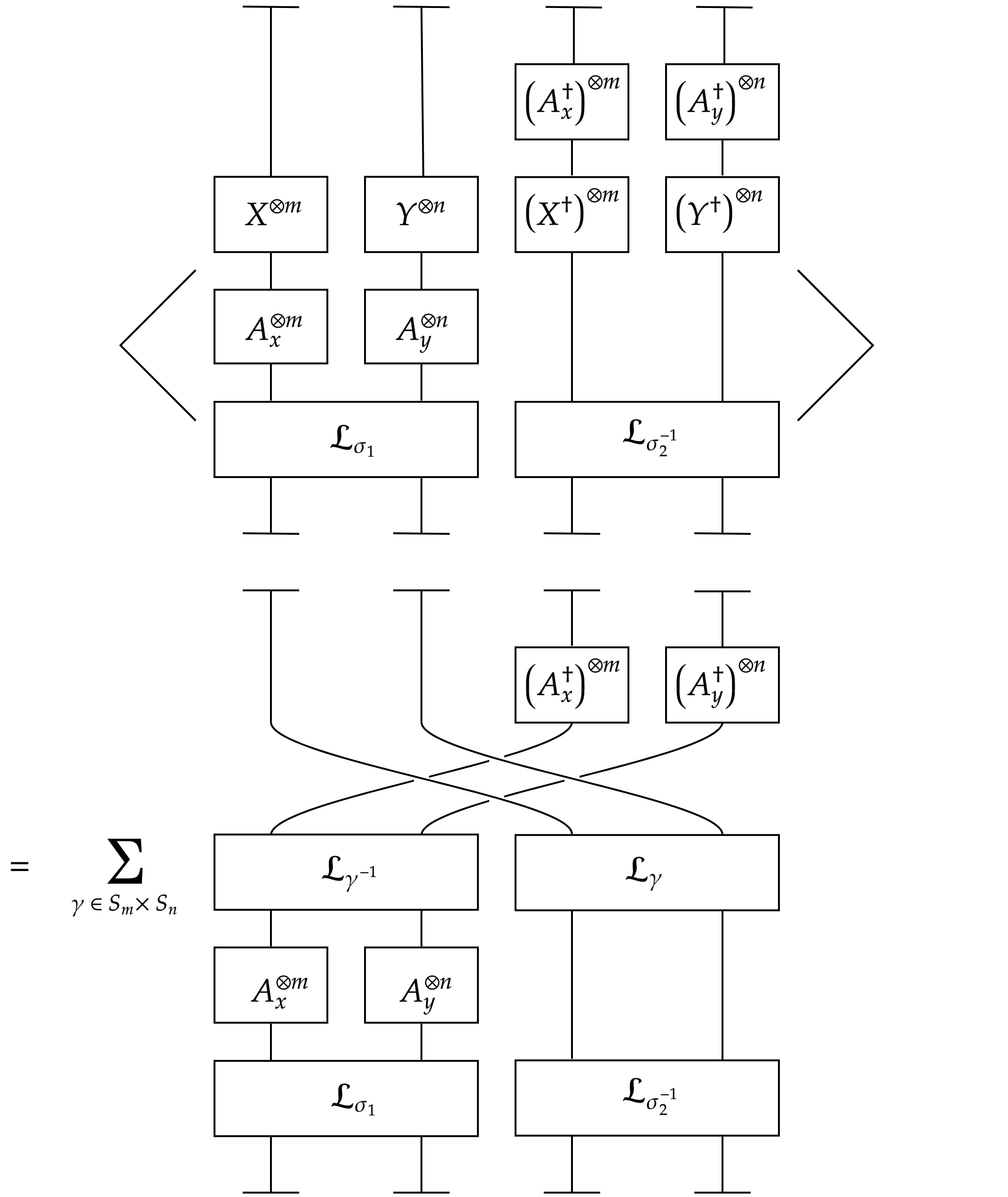}
\end{center}
\caption{The diagrammatic representation of the general two-matrix correlator with classical matrix fields. The presence of the two index swaps in the lower diagram is to account for the two classical matrices $A_{x}$ and $A_{y}$, appearing after applying Wick's theorem.}
\label{fig:two_mat_corr_bmf_wihtin_obs_png}
\end{figure}

\subsection{Fourier/\texorpdfstring{$Q$}{Q} basis for two-matrix, two-point function with classical fields}
\label{ss:schur_basis_back_fields_in_observables_two_mat}

To adopt the Fourier basis for the classical witness field result, define the gauge invariant Fourier operator
\begin{equation}
\label{eqn:spb_class_back_two_mat_schur_gen_obs_}
\mathcal{O}^{R}_{R_{1},R_{2},\mu, \nu}(XA_{x},YA_{y}) = \sum_{\sigma \in S_{m+n}} \delta\left(Q^{R}_{R_{1},R_{2},\mu,\nu} \sigma^{-1} \right)\mathcal{O}_{\sigma}(XA_{x},YA_{y})
\end{equation}
with Hermitian conjugate 
\begin{align}
\label{eqn:spb_class_back_two_mat_schur_gen_conj}
\left(\mathcal{O}^{R}_{R_{1},R_{2},\mu, \nu}(XA_{x},YA_{y}) \right)^{\dagger}  &= \sum_{\sigma\in S_{m+n}} \delta\left(Q^{R}_{R_{1},R_{2},\nu,\mu} \sigma \right)\mathcal{O}_{\sigma^{-1}}(A_{x}^{\dagger}X^{\dagger},A_{y}^{\dagger}Y^{\dagger})
\\
&= \sum_{\tilde{\sigma}\in S_{m+n}} \delta\left(Q^{R}_{R_{1},R_{2},\nu,\mu} \tilde{\sigma}^{-1} \right)\mathcal{O}_{\tilde{\sigma}}(A_{x}^{\dagger}X^{\dagger},A_{y}^{\dagger}Y^{\dagger})
\\
&= \mathcal{O}^{R}_{R_{1},R_{2},\nu, \mu}(A_{x}^{\dagger}X^{\dagger},A_{y}^{\dagger}Y^{\dagger})
\,.
\end{align}
where again, we used $(Q^{R}_{R_{1},R_{2},\mu,\nu})^{\dagger} = Q^{R}_{R_{1},R_{2},\nu,\mu}$, $\sigma^{\dagger} = \sigma^{-1}$ and $ \mathcal{O}_{\sigma}(XA_{x}, YA_{y})$ is the permutation parameterised operator, defined previously in \eqref{eqntm_:bmf_in_ob_obs_definition}.
Using these Fourier basis operators, the correlator is therefore
\begin{align}
\label{eqn:spb_class_back_corr1}
\langle {}& \mathcal{O}^{R}_{R_{1},R_{2},\mu_{1}, \nu_{1}}(XA_{x},YA_{y}) \left(\mathcal{O}^{R}_{R_{1},R_{2},\mu_{2}, \nu_{2}}(XA_{x},YA_{y}) \right)^{\dagger} \rangle \nonumber 
\\
{}&= \sum_{\substack{ \sigma, \tau \\ \in S_{m+n}}} \delta(Q^{R}_{R_{1}, R_{2}, \mu_{1}, \nu_{1}}\sigma^{-1}) \delta(Q^{S}_{S_{1}, S_{2}, \nu_{2},\mu_{2}}\tau) \langle \mathcal{O}_{\sigma}(XA_{x}, YA_{y}) \left(\mathcal{O}_{\tau}(XA_{x}, YA_{y})\right)^{\dagger} \rangle
\end{align}
A convenient form of the permutation parameterised correlator seen on the right hand side of the above equation is
\begin{align}
\begin{split}
\langle \mathcal{O}_{\sigma}(XA_{x}, YA_{y}) & \left(\mathcal{O}_{\tau}(XA_{x}, YA_{y})\right)^{\dagger} \rangle 
\\
&=\sum_{\gamma \in S_{m} \times S_{n}} \sum_{\rho \in S_{m+n}}\mathcal{O}_{\rho}(B_{x},B_{y}) \delta(\rho^{-1} \gamma^{-1} \sigma \gamma \tau^{-1}) \,.
\end{split}
\end{align}
This was achieved in \eqref{eqn:tm_bmf_back_class_trace_result} and sets $B_{x} = A_{x}^{\dagger} A_{x}$ and $B_{y} = A_{y}^{\dagger} A_{y}$. Plugging this into \eqref{eqn:spb_class_back_corr1} and then utilising lemma \ref{lem:two-matrix_lemma_fourier}, the final correlator expression is once again obtained
\begin{equation}
	\boxed{
	\begin{aligned}
\left\langle  \mathcal{O}^{R}_{R_{1}, R_{2}, \mu_{1}, \nu_{1}}(XA_{x}, YA_{y}) \right. & \left. (\mathcal{O}^{S}_{S_{1}, S_{2}, \mu_{2}, \nu_{2}}(XA_{x}, YA_{y}))^{\dagger} \right\rangle =
\\
& m!n! \delta^{RS} \delta_{R_{1}S_{1}} \delta_{R_{2}S_{2}} 
 \delta_{\nu_{1}\nu_{2}}  \mathcal{O}^{R}_{R_{1}, R_{2}, \mu_{1}, \mu_{2} }(B_{x}, B_{y})
	\end{aligned}
	}
\end{equation}
The result of this classical field witness case has the exact same form as the coupling witness field case \eqref{eqn:two_matrix_schur_basis_corr2}.

\newpage

\clearpage
\section{Algebras and multi-matrix correlators with multi-matrix witnesses }
\label{s:multi_matrix_model}

The extension to an arbitrary number of witness matrix fields takes gauge invariants of the form
\begin{align}
\begin{split}
\label{eqn:mmm_observable}
\mathcal{O}_{\sigma} (X_{l})   & = \text{Tr}_{V^{\otimes \mathbf{m}}_{N} }\left( X_{1}^{\otimes m_{1}} \otimes  X_{2}^{\otimes m_{2}} \cdots  X_{l-1}^{\otimes m_{l-1}} \otimes X_{l}^{\otimes m_{l}}  \cL_{\sigma} \right) \\
& = \text{Tr}_{V^{\otimes \mathbf{m}}_{N}}\left( \bigotimes_{\alpha = 1}^{l} X_{\alpha}^{\otimes m_{\alpha}}  \cL_{\sigma}\right) \,.
\end{split}
\end{align}
Here $l$ labels the unique/distinguishable matrices, ${\bf{m}} = m_{1} + m_{2} + \dots + m_{l}$ and $\cL_{\sigma}$ has action 
\begin{equation}
\label{eqn:mmm_lin_op_action}
\mathcal{L}_{\sigma} \ket{e_{i_{1}} \otimes \dots \otimes e_{i_{\mathbf{m}}}} = \ket{e_{i_{\sigma(1)}} \otimes \dots \otimes e_{i_{\sigma(\mathbf{m})}}} \,.
\end{equation}
The invariance of these operators is captured by equivalence relation 
\begin{equation}
\label{eqn:mmm_equiv_rel}
\sigma \sim \gamma \sigma \gamma^{-1}
\end{equation}
where $\sigma \in S_{\sum_{i=1}^{l} m_{i}} \equiv S_{{\bf{m}}}$ and $\gamma \in S_{m_{1}} \times \cdots \times S_{m_{l}}$, such that
\begin{equation}
\label{eqn:mmm_obs_invariance_multi_matrix}
\mathcal{O}_{\sigma}(X_{l}) = \mathcal{O}_{\gamma \sigma \gamma^{-1}}(X_{l}) \,.
\end{equation}
The associated PCA stemming from such equivalence is an extension to $\mathcal{A}(m,n)$ (the Necklace PCA) and hence it is denoted as: $\mathcal{A}(m_{1}, m_{2}, \dots, m_{l}) := \mathcal{A}(\vec{m})$. This $\mathcal{A}(\vec{m})$ is defined as the sub-algebra of $\mathbb{C}[S_{\mathbf{m}}]$ that commutes with $\mathbb{C}[S_{m_{1}} \times \dots \times S_{m_{l}}]$ and to express its elements, we introduce the following notation. $\text{Aut}_{\mathcal{\vec{A}}}(p) \equiv \text{Aut}_{S_{m_{1}} \times \dots \times  S_{m_{l}}}\left(\sigma^{\left(p\right)}\right)$ is the subgroup of $S_{m_{1}} \times \dots \times S_{m_{l}}$ comprised of elements that leave $\sigma^{\left(p\right)} \in S_{\mathbf{m}}$ invariant under conjugation by $\gamma \in S_{m_{1}} \times \dots \times S_{m_{l}}$, and has size $|\text{Aut}_{\mathcal{\vec{A}}}(p)|$. The orbit, $\text{Orb}_{\mathcal{\vec{A}}}(p) \equiv \text{Orbit}\left(\sigma^{\left(p\right)}, S_{m_{1}} \times \dots \times S_{m_{l}}\right)$, is the set of elements from $S_{\mathbf{m}}$ obtained by acting on $\sigma^{\left(p\right)} \in S_{\mathbf{m}}$ with $\gamma \in S_{m_{1}} \times \cdots \times S_{m_{l}}$ by conjugation, i.e. $\gamma \sigma^{\left(p\right)} \gamma^{-1}$, and the size of each orbit/equivalence class is denoted by $|T_{p}^{\mathcal{\vec{A}}}| = |\text{Orb}_{\mathcal{\vec{A}}}(p)| = \prod_{\beta=1}^{l} m_{\beta}!/|\text{Aut}_{\mathcal{\vec{A}}}(p)|$. For clarity, $\sigma^{\left(p\right)} \in S_{\mathbf{m}}$ is any permutation within the orbit $\text{Orb}_{\mathcal{\vec{A}}}(p)$. As such, an element of $\mathcal{A}(\vec{m})$ is defined as
\begin{equation}
\label{eqn:mmm_algebra_element_mult_mat}
T^{\mathcal{\vec{A}}}_{p} = \frac{1}{|\text{Aut}_{\mathcal{\vec{A}}} (p)|} \sum_{\gamma \in S_{m_{1}} \times \dots \times S_{m_{l}} } \gamma \sigma^{\left(p\right)} \gamma^{-1} = \sum_{\alpha \in \text{Orb}_{\mathcal{\vec{A}}}(p)} \alpha \quad \in \mathcal{A}(\vec{m})\,,
\end{equation}
where the label $\mathcal{\vec{A}}$ on $T^{\mathcal{\vec{A}}}_{p}$, $\text{Aut}_{\mathcal{\vec{A}}} (p)$ and $\text{Orb}_{\mathcal{\vec{A}}}(p)$ implies that these quantities are associated to the $\mathcal{A}(\vec{m})$ PCA. Such PCA elements are therefore sums over elements within the same orbit/class, obtained from the equivalence relation of \eqref{eqn:mmm_equiv_rel}.
 
Applying similar techniques as previously encountered, the following sections explore the combinatorial and Schur/Fourier basis correlator results for multi-matrix models with classical and coupling witness fields.


\subsection{Two point function of general operators with multi-matrix-couplings}
\label{ss:gen_gauge_obs_multi_mat_with_bmf}

The partition function with coupling witness matrices for the multi-matrix model is
\begin{equation}
\label{eqn:mmm_bmf_path_integral_two_mat}
\Sigma[0] =  \prod_{\eta=1}^{l}\int [dX_{\eta}] e^{-\text{Tr}(X_{\eta} A_{\eta} X_{\eta}^{\dagger})} \,,
\end{equation}
where in what follows, $(A_{\eta})^{-1} = B_{\eta}$  defines the coupling matrix associated to variable $X_{\eta}$ and the $[0]$ in $\Sigma[0]$ implies a sourceless partition function. By introducing sources in the same manner as Appendix \ref{app:background_matrix_field_correlator}, the two-point functions for the $X_{\eta}$ field variable can be derived 
\begin{equation}
\label{eqn:mmm_correlators_with_bmf}
\langle (X_{\alpha})^{i}_{j} (X^{\dagger}_{\beta})^{k}_{l}  \rangle  =  \prod_{\eta=1}^{l} \frac{1}{\Sigma[0]}\int [dX_{\eta}] (X_{\alpha})^{i}_{j} (X^{\dagger}_{\beta})^{k}_{l} e^{-\text{Tr}(X_{\eta} A_{\eta} X_{\eta}^{\dagger})} = \delta_{\alpha \beta} \delta^{i}_{l} ((A_{\alpha})^{-1})^{k}_{j} = \delta_{\alpha \beta} \delta^{i}_{l} (B_{\alpha})^{k}_{j} \,.
\end{equation}
Using the permutation parameterised GIO
\begin{equation}
\label{eqn:mmm_observable_2}
\mathcal{O}_{\sigma} (X_{l}) = \text{Tr}_{V^{\otimes \mathbf{m}}_{N}}\left( X_{1}^{\otimes m_{1}} \otimes  X_{2}^{\otimes m_{2}} \otimes \cdots \otimes  X_{l-1}^{\otimes m_{l-1}} \otimes X_{l}^{\otimes m_{l}}  \cL_{\sigma} \right) 
\end{equation}
and its Hermitian conjugate
\begin{align}
\begin{split}
\left(\mathcal{O}_{\sigma}(X_{l})\right)^{\dagger} 
&= \text{Tr}_{V^{\otimes \mathbf{m}}_{N}}\left( (X_{1}^{\dagger})^{\otimes m_{1}} \otimes (X_{2}^{\dagger})^{\otimes m_{2}} \otimes \cdots \otimes (X_{l-1}^{\dagger})^{\otimes m_{l-1}} \otimes (X_{l}^{\dagger})^{\otimes m_{l}}  \cL_{\sigma^{-1}} \right) 
\\
&= \mathcal{O}_{\sigma^{-1}}(X^{\dagger}_{l}) \,,
\end{split}
\end{align}
the correlator is then constructed as follows
\begin{align}
\langle {}&\mathcal{O}_{\sigma_{1}} (X_{l}) \left(\mathcal{O}_{\sigma_{2}}(X_{l}) \right)^{\dagger} \rangle \nonumber
\\
{}&=  \left\langle \text{Tr}_{V^{\otimes \mathbf{m}}_{N}}\left(  X_{1}^{\otimes m_{1}} \otimes \cdots \otimes X_{l}^{\otimes m_{l}} \cL_{\sigma_{1}} \right)  \text{Tr}_{V^{\otimes \mathbf{m}}_{N}}\left( (X_{1}^{\dagger})^{\otimes m_{1}} \otimes \cdots \otimes (X_{l}^{\dagger})^{\otimes m_{l}}  \cL_{\sigma_{2}^{-1}} \right)\right\rangle \label{eqn:mmm_gio_with_bmf1}
\\
\begin{split}
{}&= \sum_{\substack{i_{1}, \dots, i_{{\mathbf{m}}} \\ j_{1}, \dots, j_{\mathbf{m}} }} \left\langle \right. (X_{1})^{i_{1}}_{i_{\sigma_{1}(1)}} \cdots (X_{1})^{i_{m_{1}}}_{i_{\sigma_{1}(m_{1})}} \cdots \cdots (X_{l})^{i_{\mathbf{m} - m_{l}}}_{i_{\sigma_{1}(\mathbf{m} - m_{l})}} \cdots (X_{l})^{i_{\mathbf{m}}}_{i_{\sigma_{1}(\mathbf{m})}}
 \\ 
{}& \qquad \qquad \times(X_{1}^{\dagger})^{j_{1}}_{j_{\sigma_{2}^{-1}(1)}} \cdots (X_{1}^{\dagger})^{j_{m_{1}}}_{j_{\sigma_{2}^{-1}(m_{1})}} 
	\cdots \cdots (X_{l}^{\dagger})^{j_{\mathbf{m} - m_{l}}}_{j_{\sigma_{2}^{-1}(\mathbf{m} - m_{l})}} \cdots (X_{l}^{\dagger})^{j_{\mathbf{m}}}_{j_{\sigma_{2}^{-1}(\mathbf{m})}} \left. \right\rangle 
\end{split}
\\
\begin{split}
&{}= \sum_{\gamma \in S_{m_{1}} \times \dots \times S_{m_{l}}}\sum_{\substack{i_{1}, \dots, i_{\mathbf{m}} \\ j_{1}, \dots, j_{\mathbf{m}} }} \delta^{i_{1}}_{j_{\gamma \sigma_{2}^{-1}(1)}} \cdots \delta^{i_{m_{1}}}_{j_{\gamma \sigma_{2}^{-1}(m_{1})}} 
\cdots \cdots \delta^{i_{\mathbf{m} - m_{l}}}_{j_{ \gamma \sigma_{2}^{-1}(\mathbf{m} - m_{l})}} \cdots \delta^{i_{\mathbf{m}}}_{j_{\gamma\sigma_{2}^{-1}(\mathbf{m})}} 
\\
&{} \qquad \qquad \times (B_{1})^{j_{1}}_{i_{\gamma^{-1}\sigma_{1}(1)}} \cdots (B_{1})^{j_{m_{1}}}_{i_{\gamma^{-1}\sigma_{1}(m_{1})}} 
	\cdots \cdots (B_{l})^{j_{\mathbf{m} - m_{l}}}_{i_{\gamma^{-1}\sigma_{1}(\mathbf{m} - m_{l})}} \cdots (B_{l})^{j_{\mathbf{m}}}_{i_{\gamma^{-1}\sigma_{1}(\mathbf{m})}}  \label{eqn:mmm_gio_with_bmf2}
\end{split}
\\
\begin{split}
{}&= \sum_{\gamma \in S_{m_{1}} \times \dots \times S_{m_{l}}}\sum_{\substack{i_{1}, \dots, i_{\mathbf{m}} \\ j_{1}, \dots, j_{\mathbf{m}} }} \delta^{i_{\gamma^{-1} \sigma_{1}(1)}}_{j_{\gamma^{-1} \sigma_{1} \gamma \sigma_{2}^{-1}(1)}} \cdots \delta^{i_{\gamma^{-1} \sigma_{1}(m_{1})}}_{j_{\gamma^{-1} \sigma_{1}\gamma \sigma_{2}^{-1}(m_{1})}} \cdots \cdots 
\delta^{i_{\gamma^{-1} \sigma_{1}(\mathbf{m} - m_{l})}}_{j_{\gamma^{-1} \sigma_{1} \gamma \sigma_{2}^{-1}(\mathbf{m} - m_{l})}} \cdots \delta^{i_{\gamma^{-1} \sigma_{1}(\mathbf{m})}}_{j_{\gamma^{-1} \sigma_{1}\gamma\sigma_{2}^{-1}(\mathbf{m})}} 
\\
{}& \qquad \qquad \times(B_{1})^{j_{1}}_{i_{\gamma^{-1}\sigma_{1}(1)}} \cdots (B_{1})^{j_{m_{1}}}_{i_{\gamma^{-1}\sigma_{1}(m_{1})}} 
	\cdots \cdots (B_{l})^{j_{\mathbf{m} - m_{l}}}_{i_{\gamma^{-1}\sigma_{1}(\mathbf{m} - m_{l})}} \cdots (B_{l})^{j_{\mathbf{m}}}_{i_{\gamma^{-1}\sigma_{1}(\mathbf{m})}} \label{eqn:mmm_gio_with_bmf2_pt5}
\end{split}
\\
\begin{split}
{}&= \sum_{\gamma \in S_{m_{1}} \times \dots \times S_{m_{l}}} \sum_{j_{1}, \dots j_{\mathbf{m}}} (B_{1})^{j_{1}}_{j_{\gamma^{-1}\sigma_{1} \gamma \sigma_{2}^{-1}(1)}} \dots (B_{1})^{j_{m_{1}}}_{j_{\gamma^{-1}\sigma_{1} \gamma \sigma_{2}^{-1}(m_{1})}}  \cdots
\\
{}& \qquad \qquad \qquad \qquad \qquad \qquad \cdots \times  (B_{l})^{j_{(\mathbf{m}-m_{l})}}_{j_{\gamma^{-1}\sigma_{1} \gamma \sigma_{2}^{-1}(\mathbf{m}-m_{l})}} \cdots (B_{l})^{j_{(\mathbf{m})}}_{j_{\gamma^{-1}\sigma_{1} \gamma \sigma_{2}^{-1}(\mathbf{m})}} \,. \label{eqn:mmm_gio_with_bmf3}
\end{split} 
\end{align}
Between equations \eqref{eqn:mmm_gio_with_bmf2} and \eqref{eqn:mmm_gio_with_bmf3} Kronecker equivariance was applied and the set of $i$ indices were contracted. Further simplification leads to 
\begin{align}
\langle \mathcal{O}_{\sigma_{1}}(X_{l})& \left(\mathcal{O}_{\sigma_{2}}(X_{l}) \right)^{\dagger} \rangle  \nonumber
\\
&= \sum_{\substack{ \sigma_{3} \in S_{\mathbf{m}} \\ \gamma \in S_{m_{1}} \times \dots \times S_{m_{l}} }} \text{Tr}_{V_{N}^{\otimes \mathbf{m}}}(B_{1}^{\otimes m_{1}} \otimes \dots \otimes B_{l}^{\otimes m_{l}} \cL_{\sigma_{3}}) \delta(\sigma_{3}^{-1} \gamma^{-1} \sigma_{1} \gamma \sigma_{2}^{-1})\label{eqn:mmm_2_point_delta_func}
\\
&=  \sum_{\substack{ \sigma_{3} \in S_{\mathbf{m}} \\ \gamma \in S_{m_{1}} \times \dots \times S_{m_{l}} }} \mathcal{O}_{\sigma_{3}}(B_{l}) \delta(\sigma_{3}^{-1} \gamma^{-1} \sigma_{1} \gamma \sigma_{2}^{-1}) \label{eqn:mmm_2_point_result}
\\
&=  \sum_{p_{3} } \sum_{\gamma \in S_{m_{1}} \times \dots \times S_{m_{l}}}  \left(  \sum_{\alpha \in \text{Orb}_{\mathcal{\vec{A}}}(p_{3})}\mathcal{O}_{\alpha}(B_{l}) \delta(\alpha^{-1} \gamma^{-1} \sigma_{1} \gamma \sigma_{2}^{-1}) \right) \label{eqn:mmm_sum_over_equiv_classes}
\\
&=   \sum_{p_{3} } \mathcal{O}_{\sigma^{\left(p_{3}\right)}}(B_{l})\sum_{\gamma \in S_{m_{1}} \times \dots \times S_{m_{l}}} \delta(T^{\mathcal{\vec{A}}}_{p'_{3}} \gamma^{-1} \sigma_{1} \gamma \sigma_{2}^{-1}) \label{eqn:mmm_sum_over_equiv_classes2} \,,
\end{align}
where $p_{3}$ labels equivalence classes/orbits. In \eqref{eqn:mmm_2_point_delta_func} trace notation was adopted and a delta function was used. This is followed by setting the witness field GIO in equation \eqref{eqn:mmm_2_point_result}, using the definition of equation \eqref{eqn:mmm_observable}. Splitting the sum over $\sigma_{3}$ into a sum over equivalence classes/orbits labelled by $p_{3}$ is completed in \eqref{eqn:mmm_sum_over_equiv_classes}, and finally the sum over $\alpha$  -- the sum over elements in the equivalence class/orbit $\text{Orb}_{\mathcal{\vec{A}}}(p_{3})$ of which $\sigma^{\left(p_{3}\right)}$ and $\alpha$ belong -- is brought inside the delta function to produce the PCA element $T_{p'_{3}}^{\mathcal{\vec{A}}}$ in \eqref{eqn:mmm_sum_over_equiv_classes2}. The prime label indicates an orbit associated to an inverse permutation. 
\begin{lemma} 
\label{lem:multi-matrix_lemma}
For $\gamma \in S_{m_{1}} \times \dots \times  S_{m_{l}}$, $\sigma_{i} \in S_{\mathbf{m}}$ and $T^{\mathcal{\vec{A}}}_{p_{i}} \in \mathcal{A}(\vec{m})$, the following equality holds 
\begin{equation}
\sum_{\gamma \in S_{m_{1}} \times \dots \times S_{m_{l}}} \delta(T^{\mathcal{\vec{A}}}_{p'_{3}} \gamma^{-1} \sigma_{1} \gamma \sigma_{2}^{-1})
=\frac{|T^{\mathcal{\vec{A}}}_{p'_{3}}|\prod_{\alpha=1}^{l}m_{\alpha}!}{|T^{\mathcal{\vec{A}}}_{p_{1}}| |T^{\mathcal{\vec{A}}}_{p'_{2}}|} C^{p'_{3};\mathcal{\vec{A}}}_{p_{1} p'_{2}} 
\end{equation}
where $C^{p'_{3};\mathcal{\vec{A}}}_{p_{1} p'_{2}}$ is the PCA structure constant of $\mathcal{A}(\vec{m})$ and $\sigma_{1}$, $\sigma_{2}^{-1}$ belong to equivalence classes labelled by $p_{1}$, $p'_{2}$ respectively.
\end{lemma}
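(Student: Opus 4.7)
The plan is to mirror the arguments of Lemma~\ref{lem:single-matrix_lemma} and Lemma~\ref{lem:two-matrix_lemma} essentially verbatim. The only structural modifications are (i) replacing the gauge subgroup by $S_{m_1}\times\cdots\times S_{m_l}$, of order $\prod_{\alpha=1}^{l} m_\alpha!$; (ii) working in the ambient group $S_{\mathbf{m}}$; and (iii) using the PCA $\mathcal{A}(\vec{m})$ with combinatorial basis $\{T^{\mathcal{\vec{A}}}_p\}$. No new conceptual ingredient is required; the work lies in tracking the multi-factor product $\prod_\alpha m_\alpha!$ consistently where $n!$ or $m!n!$ appeared in the previous lemmas.

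First I would substitute $\gamma=\mu_1\mu_2$ for arbitrary but fixed $\mu_2\in S_{m_1}\times\cdots\times S_{m_l}$, converting the sum over $\gamma$ into a sum over $\mu_1$ in the same group. Inserting $e=\mu_2^{-1}\mu_2$ to the left of $T^{\mathcal{\vec{A}}}_{p'_3}$ inside the delta function and cycling $\mu_2^{-1}$ to the far right produces a conjugation $\mu_2 T^{\mathcal{\vec{A}}}_{p'_3}\mu_2^{-1}$, which equals $T^{\mathcal{\vec{A}}}_{p'_3}$ because $T^{\mathcal{\vec{A}}}_{p'_3}\in\mathcal{A}(\vec{m})$ commutes with every element of $\mathbb{C}[S_{m_1}\times\cdots\times S_{m_l}]$ by the defining property of the PCA. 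Since the resulting expression is independent of $\mu_2$, I would symmetrise by inserting $\sum_{\mu_2}$ together with the compensating factor $1/\prod_\alpha m_\alpha!$. Bringing the two independent sums inside the delta function and using~\eqref{eqn:mmm_algebra_element_mult_mat} identifies
\begin{equation}
\sum_{\mu_1}\mu_1^{-1}\sigma_1\mu_1=|\text{Aut}_{\mathcal{\vec{A}}}(p_1)|\,T^{\mathcal{\vec{A}}}_{p_1},\qquad \sum_{\mu_2}\mu_2\sigma_2^{-1}\mu_2^{-1}=|\text{Aut}_{\mathcal{\vec{A}}}(p'_2)|\,T^{\mathcal{\vec{A}}}_{p'_2}.
\end{equation}
The orbit--stabiliser relation $|\text{Aut}_{\mathcal{\vec{A}}}(p)|=\prod_\alpha m_\alpha!/|T^{\mathcal{\vec{A}}}_p|$ then converts these automorphism-group factors into the prefactor $\prod_\alpha m_\alpha!/(|T^{\mathcal{\vec{A}}}_{p_1}||T^{\mathcal{\vec{A}}}_{p'_2}|)$ displayed in the claim.

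What remains is to evaluate $\delta(T^{\mathcal{\vec{A}}}_{p'_3}T^{\mathcal{\vec{A}}}_{p_1}T^{\mathcal{\vec{A}}}_{p'_2})$. Expanding $T^{\mathcal{\vec{A}}}_{p_1}T^{\mathcal{\vec{A}}}_{p'_2}=\sum_{p_k}C^{p_k;\mathcal{\vec{A}}}_{p_1 p'_2}\,T^{\mathcal{\vec{A}}}_{p_k}$ in the combinatorial basis and applying the delta-on-PCA-products identity already used in the two-matrix case, namely $\delta(T^{\mathcal{\vec{A}}}_{p'_3}T^{\mathcal{\vec{A}}}_{p_k})=\delta_{p_k p'_3}|T^{\mathcal{\vec{A}}}_{p'_3}|$, yields precisely the stated structure-constant expression. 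I do not foresee any genuine obstacle: every ingredient used in the two-matrix proof has an immediate multi-matrix analogue, and the only care required is the consistent bookkeeping of the gauge-group order $\prod_\alpha m_\alpha!$ and the inverse-orbit primes $p'_i$ throughout.
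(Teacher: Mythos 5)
Your proposal is correct and follows essentially the same route as the paper's own proof: the substitution $\gamma=\mu_1\mu_2$, conjugation invariance of $T^{\mathcal{\vec{A}}}_{p'_3}$, symmetrisation over $\mu_2$ with the $1/\prod_\alpha m_\alpha!$ factor, identification of $T^{\mathcal{\vec{A}}}_{p_1}$ and $T^{\mathcal{\vec{A}}}_{p'_2}$ via the automorphism-group normalisation and orbit--stabiliser theorem, and the final evaluation of $\delta(T^{\mathcal{\vec{A}}}_{p'_3}T^{\mathcal{\vec{A}}}_{p_1}T^{\mathcal{\vec{A}}}_{p'_2})$ through the structure constants. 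No gaps; this matches the paper's argument step for step.
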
 
\begin{proof}
\begin{align}
&\sum_{\gamma \in S_{m_{1}} \times \dots \times S_{m_{l}}}  \delta(T^{\mathcal{\vec{A}}}_{p'_{3}} \gamma^{-1} \sigma_{1} \gamma \sigma_{2}^{-1}) \nonumber
\\
&= \sum_{\mu_{1} \in S_{m_{1}} \times \dots \times S_{m_{l}}} \delta\left( T^{\mathcal{\vec{A}}}_{p'_{3}} (\mu_{1}\mu_{2})^{-1} \sigma_{1} (\mu_{1}\mu_{2}) \sigma_{2}^{-1} \right) \label{eqn:mmm_pcalem_gamma_replaced}
\\
&= \sum_{\mu_{1} \in S_{m_{1}} \times \dots \times S_{m_{l}}} \delta\left( (\mu_{2}^{-1} \mu_{2}) T^{\mathcal{\vec{A}}}_{p'_{3}}\mu_{2}^{-1} \mu_{1}^{-1} \sigma_{1} \mu_{1} \mu_{2} \sigma_{2}^{-1} \right) \label{eqn:mmm_pcalem_identity_insert}
\\
&= \sum_{\mu_{1} \in S_{m_{1}} \times \dots \times S_{m_{l}}} \delta\left( \underbrace{\mu_{2} T^{\mathcal{\vec{A}}}_{p'_{3}}\mu_{2}^{-1}}_{T^{\mathcal{\vec{A}}}_{p'_{3}}} \mu_{1}^{-1} \sigma_{1} \mu_{1} \mu_{2} \sigma_{2}^{-1} \mu_{2}^{-1}\right) \label{eqn:mmm_pcalem_simplify_pca_conj_and_cycle}
\\
&= \frac{1}{\prod_{\alpha = 1}^{l} m_{\alpha}!}\sum_{\mu_{1},\mu_{2} \in S_{m_{1}} \times \dots \times S_{m_{l}}} \delta\left(  T^{\mathcal{\vec{A}}}_{p'_{3}} (\mu_{1}^{-1} \sigma_{1} \mu_{1}) (\mu_{2} \sigma_{2}^{-1} \mu_{2}^{-1}) \right)  \label{eqn:mmm_pcalem_mu2_sum_added}
\\
&= \frac{1}{\prod_{\alpha = 1}^{l} m_{\alpha}!}\delta\left(  T^{\mathcal{\vec{A}}}_{p'_{3}} \left(\sum_{\mu_{1} \in S_{m_{1}} \times \dots \times S_{m_{l}}} \mu_{1}^{-1} \sigma_{1} \mu_{1} \right) \left(\sum_{\mu_{2} \in S_{m_{1}} \times \dots \times S_{m_{l}}} \mu_{2} \sigma_{2}^{-1} \mu_{2}^{-1} \right) \right) \label{eqn:mmm_pcalem_sums_in_delta}
\\
&= \frac{|\text{Aut}_{\mathcal{\vec{A}}(p_{1})}||\text{Aut}_{\mathcal{\vec{A}}(p'_{2})}|}{\prod_{\alpha = 1}^{l} m_{\alpha}!} \delta\left(  T^{\mathcal{\vec{A}}}_{p'_{3}}  T^{\mathcal{\vec{A}}}_{p_{1}}  T^{\mathcal{\vec{A}}}_{p'_{2}}  \right)\label{eqn:mmm_pcalem_pca_eles_and_auto_factors}
\\
&= \frac{\prod_{\alpha = 1}^{l} m_{\alpha}!}{|T^{\mathcal{\vec{A}}}_{p_{1}} | |T^{\mathcal{\vec{A}}}_{p'_{2}}|} \delta\left(  T^{\mathcal{\vec{A}}}_{p'_{3}}  T^{\mathcal{\vec{A}}}_{p_{1}}  T^{\mathcal{\vec{A}}}_{p'_{2}}  \right) \label{eqn:mmm_pcalem_orb_stab_theorem}\,.
\end{align}
In equation \eqref{eqn:mmm_pcalem_gamma_replaced}, the sum over $\gamma$ is replaced by a sum over $\mu_{1}$, taking $\gamma \to \mu_{1}\mu_{2}$. An identity permutation ($e = \mu_{2}^{-1} \mu_{2}$) is inserted in \eqref{eqn:mmm_pcalem_identity_insert}, followed by cycling $\mu_{2}^{-1}$ in the delta function next to $\sigma_{2}^{-1}$ and identifying that $\mu_{2} T^{\mathcal{\vec{A}}}_{p'_{3}} \mu_{2}^{-1} = T^{\mathcal{\vec{A}}}_{p'_{3}}$ in \eqref{eqn:mmm_pcalem_simplify_pca_conj_and_cycle}. The subsequent steps introduce a sum over $\mu_{2}$ in \eqref{eqn:mmm_pcalem_mu2_sum_added} (including the required $1/|S_{m_{1}} \times \dots \times S_{m_{l}}| = 1/\prod_{\alpha=1}^{l}m_{\alpha}!$ factor), place the sums in the delta function to establish the PCA elements and automorphism group size factors between \eqref{eqn:mmm_pcalem_sums_in_delta} and \eqref{eqn:mmm_pcalem_pca_eles_and_auto_factors}, and finally, use the orbit-stabiliser theorem to rewrite $|\text{Aut}_{\mathcal{\vec{A}}}(p_{i})| = |S_{m_{1}} \times \dots \times S_{m_{l}}|/|T^{\mathcal{\vec{A}}}_{p_{i}}|$ in equation \eqref{eqn:mmm_pcalem_orb_stab_theorem}. Here, as in previous sections, the prime notation on the class labels imply the equivalence class of an inverse permutation, and we choose that $\sigma_{1}$ and $\sigma_{2}^{-1}$ belong to equivalence classes labelled by $p_{1}$ and $p'_{2}$ respectively.

The remaining delta function can be reduced, revealing the structure constant using the multiplication properties of the $\mathcal{A}(\vec{m})$ PCA elements
\begin{equation}
\label{eqn:mmm_delta_struct_cons_rel}
\delta\left(  T^{\mathcal{\vec{A}}}_{p'_{3}}  T^{\mathcal{\vec{A}}}_{p_{1}}  T^{\mathcal{\vec{A}}}_{p'_{2}}  \right) =  \sum_{p_{k} }C^{p_{k}; \mathcal{\vec{A}}}_{p_{1}p'_{2}} \delta( T^{\mathcal{\vec{A}}}_{p'_{3}} T^{\mathcal{\vec{A}}}_{p_{k}}) =  \sum_{p_{k} }C^{p_{k}; \mathcal{\vec{A}}}_{p_{1} p'_{2}} \delta_{p'_{3} p_{k}} |T^{\mathcal{\vec{A}}}_{p_{k}}| =  |T^{\mathcal{\vec{A}}}_{p'_{3}}|C^{p'_{3}; \mathcal{\vec{A}}}_{p_{1} p'_{2}} \,.
\end{equation}
Therefore
\begin{equation}
\sum_{\gamma \in S_{m_{1}} \times \dots \times S_{m_{l}}}  \delta(T^{\mathcal{\vec{A}}}_{p'_{3}} \gamma^{-1} \sigma_{1} \gamma \sigma_{2}^{-1})  =  \frac{\prod_{\alpha = 1}^{l} m_{\alpha}!}{|T^{\mathcal{\vec{A}}}_{p_{1}} | |T^{\mathcal{\vec{A}}}_{p'_{2}}|} |T^{\mathcal{\vec{A}}}_{p'_{3}}|C^{p'_{3}; \mathcal{\vec{A}}}_{p_{1} p'_{2}} \,.
\end{equation}
\end{proof}
Using this lemma, the correlator expression is
\begin{equation}
\langle \mathcal{O}_{\sigma_{1}}(X_{l}) \left(\mathcal{O}_{\sigma_{2}}(X_{l})\right)^{\dagger} \rangle
=  \sum_{p_{3} } \frac{|T^{\mathcal{\vec{A}}}_{p'_{3}}|\prod_{\alpha = 1}^{l} m_{\alpha}!}{|T^{\mathcal{\vec{A}}}_{p_{1}} | |T^{\mathcal{\vec{A}}}_{p'_{2}}|} C^{p'_{3}; \mathcal{\vec{A}}}_{p_{1} p'_{2}} \mathcal{O}_{\sigma^{\left(p_{3}\right)}}(B_{l}) \,.
\end{equation}
Finally, using $|T^{\mathcal{\vec{A}}}_{p'_{i}}| = |T^{\mathcal{\vec{A}}}_{p_{i}}|$, $\mathcal{O}_{\sigma_{1}} \equiv \mathcal{O}_{\sigma^{(p_{1})}} $, $\mathcal{O}_{\sigma_{2}^{-1}} \equiv \mathcal{O}_{\sigma^{(p'_{2})}} $, rearranging the orbit sizes and taking $|T^{\mathcal{\vec{A}}}_{p_{i}}|\mathcal{O}_{\sigma^{\left( p_{i}\right)}} = \mathcal{O}_{T^{\mathcal{\vec{A}}}_{p_{i}}} \equiv \mathcal{O}_{p_{i}}$ the correlator in combinatorial basis becomes
\begin{equation}
\label{eqn:mmm_final_correlator_boxed}
\boxed{
\langle \mathcal{O}_{p_{1}}(X_{l}) \left(\mathcal{O}_{p_{2}}(X_{l})\right)^{\dagger} \rangle = \prod_{\alpha=1}^{l} m_{\alpha}! \sum_{p_{3} } C^{p'_{3}; \mathcal{\vec{A}}}_{p_{1} p'_{2}} \mathcal{O}_{p_{3}}(B_{l})
}
\end{equation}

As in the previous sections, we obtain a linear combination of combinatorial basis GIOs in witness fields $B_{l}$, upon evaluating the quantum field $X_{l}$ GIO two-point function. Supplying the basis label data $(p_{1}, p'_{2}, p'_{3})$ will therefore provide the required information to reconstruct the structure constants of the permutation centraliser algebra $\mathcal{A}(\vec{m})$. Additionally, structure constant $C^{p'_{3}; \mathcal{\vec{A}}}_{p_{1} p'_{2}}$ is an integer by virtue that each orbit/equivalence class is of a set number of elements.


\subsection{Fourier/\texorpdfstring{$Q$}{Q} basis for multi-matrix correlator}
\label{ss:multi_matrix_schur_basis}	
The representation decomposition for the multi-matrix case follows much the same route as \S \ref{ss:two_matrix_schur_basis} but is generalised to a larger tensor space to account for the larger number of unique matrices\footnote{The notation in this section closely follows \cite{Pasukonis:2013ts} which provides further details/identities in their appendices A, B and D.}. Generalising for $\sigma \in S_{\mathbf{m}}$ and $\gamma \in S_{m_{1}} \times \dots \times S_{m_{l}}$ we have a decomposition of form
\begin{equation}
\label{eqn:mmm_spb_vector_decomp}
V_{R}^{(S_{\mathbf{m}})} = \bigoplus_{R_{1}, R_{2}, \dots ,R_{l}} V_{R_{1}}^{(S_{m_{1}})} \otimes V_{R_{2}}^{(S_{m_{2}})} \otimes \dots \otimes V_{R_{l}}^{(S_{m_{l}})} \otimes V_{R_{1}, R_{2}, \dots, R_{l}}^{R}.
\end{equation}
A state in such a space can be labelled by
\begin{equation}
\label{eqn:mmm_spb_state_labelling}
\ket{R; \mathbf{r}, \nu, \mathbf{t}}
\end{equation} 
where $R$ is the representation of $S_{\mathbf{m}}$ and $\mathbf{r}=(R_{1}, R_{2}, \dots, R_{l})$ which denotes the set of individual representations $R_{i}$, for $i = 1, \dots, l$. Multiplicity label $\nu$ indicates/counts how many times the tensor product of the $\mathbf{r}$ irreps appear in the decomposition, and $\mathbf{t} = (t_{1}, t_{2}, \dots, t_{l})$ are a set of labels used to denote the states of each individual representation space e.g. $V_{R_{1}}^{(S_{m_{1}})}$ has states labelled by $t_{1}$. Following on from this notation, the characters from which the GIO in Fourier basis are generated, are now established. 
Define the projection-like operator
\begin{equation}
\label{eqn:mmm_spb_projection_operator}
P^{R}_{\mathbf{r},\mu, \nu} =  \sum_{\mathbf{t}} \ket{R; \mathbf{r}, \mu, \mathbf{t}}\bra{R; \mathbf{r}, \nu, \mathbf{t}} 
\end{equation}
with components given in terms of branching coefficients\footnote{The branching coefficients $B^{R;i}_{\mathbf{r}, \nu; \mathbf{t}}$ are defined as the components of the vector $\ket{R; \mathbf{r}, \nu, \mathbf{t}}$ in any given orthogonal basis for $R$: $B^{R;i}_{\mathbf{r}, \nu; \mathbf{t}} = \braket{R;i|R; \mathbf{r}, \nu, \mathbf{t}}$.}
\begin{equation}
\left(P^{R}_{\mathbf{r},\mu, \nu}\right)^{ij} = \sum_{\mathbf{t}} B^{R;i}_{\mathbf{r}, \mu; \mathbf{t}} B^{R;j}_{\mathbf{r}, \nu; \mathbf{t}} \,,
\end{equation}
where  $\sum_{\mathbf{t}}$ implies a sum over all state spaces in $\mathbf{t} = (t_{1}, t_{2}, \dots, t_{l})$. The character is then constructed from these operators  
\begin{equation}
\chi^{R}_{\mathbf{r}, \mu, \nu}(\sigma) = \text{Tr}_{V^{\mathbf{r}}}\left[P^{R}_{\mathbf{r},\mu, \nu}\left(D^{R}(\sigma) \right) \right]
\end{equation}
which leads naturally to the $Q$-basis element definition
\begin{align}
\label{eqn:q_basis_mmm_definition}
Q^{R}_{\mathbf{r},\mu,\nu} = \frac{d_{R}}{\mathbf{m}!} \sum_{\sigma \in S_{\mathbf{m}}} \chi^{R}_{\mathbf{r},\mu,\nu} (\sigma) \sigma^{-1} \,.
\end{align}
where $d_{R}$ is the dimension of representation $R$ of $S_{\mathbf{m}}$. An important property of these basis elements is that they multiply like matrices 
\begin{equation}
\label{eqn:q_basis_mmm_mult_property}
Q^{R}_{\mathbf{r},\mu_{1},\nu_{1}} Q^{S}_{\mathbf{s},\mu_{2},\nu_{2}}  = \delta^{RS}\delta_{\mathbf{r}\mathbf{s}}  \delta_{\nu_{1}\mu_{2}}  Q^{R}_{\mathbf{r},\mu_{1},\nu_{2}} \,.
\end{equation}
See appendix \ref{app:redone_wa_q_basis}, for the derivation of this rule. The GIO in this Fourier basis for the multi-matrix model is hence defined as 
\begin{equation}
\label{eqn:mmm_spb_schur_gen_obs}
\mathcal{O}^{R}_{\mathbf{r}, \mu, \nu}(X_{l}) = \sum_{\sigma \in S_{\mathbf{m}}} \delta\left(Q^{R}_{\mathbf{r}, \mu, \nu} \sigma^{-1} \right) \mathcal{O}_{\sigma}(X_{l}) \,,
\end{equation}
with conjugate 
\begin{align}
\label{eqn:mmm_spb_schur_gen_obs_conj}
(\mathcal{O}^{R}_{\mathbf{r}, \mu, \nu}(X_{l}))^{\dagger} 
&= \sum_{\sigma \in S_{\mathbf{m}}} \delta\left(Q^{R}_{\mathbf{r}, \nu, \mu} \sigma \right) \mathcal{O}_{\sigma^{-1}}(X_{l}^{\dagger})
\\
&= \sum_{\tilde{\sigma} \in S_{\mathbf{m}}} \delta\left(Q^{R}_{\mathbf{r}, \nu, \mu} \tilde{\sigma}^{-1} \right) \mathcal{O}_{\tilde{\sigma}}(X_{l}^{\dagger})
\\
&= \mathcal{O}^{R}_{\mathbf{r}, \nu, \mu}(X_{l}^{\dagger})  \,.
\end{align}
where $(Q^{R}_{\mathbf{r}, \mu, \nu})^{\dagger} = Q^{R}_{\mathbf{r}, \nu, \mu}$ was applied and $\sigma^{\dagger} = \sigma^{-1}$. Additionally, $\mathcal{O}_{\sigma}(X_{l})$ is the permutation parameterised operator established in \eqref{eqn:mmm_observable_2}.
The correlator of these GIOs is then written as
\begin{align}
&\left\langle  \mathcal{O}^{R}_{\mathbf{r}, \mu_{1}, \nu_{1}}(X_{l}) (\mathcal{O}^{S}_{\mathbf{s},  \mu_{2}, \nu_{2}}(X_{l}))^{\dagger}  \right\rangle \nonumber 
\\
&= \sum_{\sigma, \tau \in S_{\mathbf{m}}} \delta\left(Q^{R}_{\mathbf{r}, \mu_{1}, \nu_{1}} \sigma^{-1}\right) \delta\left( Q^{S}_{\mathbf{s},\nu_{2}, \mu_{2}} \tau \right) \langle \mathcal{O}_{\sigma}(X_{l})\mathcal{O}_{\tau^{-1}}(X_{l}^{\dagger})\rangle
\\
{}&= \sum_{\sigma, \tau, \rho \in S_{\mathbf{m}}} \sum_{\gamma \in S_{m_{1}} \times \dots \times S_{m_{l}}}\delta\left(Q^{R}_{\mathbf{r}, \mu_{1}, \nu_{1}} \sigma^{-1}\right) \delta\left( Q^{S}_{\mathbf{s},\nu_{2}, \mu_{2}} \tau \right)  \mathcal{O}_{\rho}(B_{l}) \delta(\rho^{-1} \gamma^{-1} \sigma \gamma \tau^{-1}) \label{eqn:mmm_spb_corr_with_combo_basis_inserted}  \,,
\end{align}
where the result of the permutation parameterised basis, multi-matrix correlator from equation \eqref{eqn:mmm_2_point_result} was inserted in equation \eqref{eqn:mmm_spb_corr_with_combo_basis_inserted}.
\begin{lemma}
\label{lem:multi_matrix_schur_basis_lem}
For $\rho, \sigma, \tau \in S_{\mathbf{m}}$ and $\gamma \in S_{m_{1}} \times \dots \times S_{m_{l}}$ the following equality holds
\begin{align}
{}&\sum_{\sigma, \tau, \rho \in S_{\mathbf{m}}} \sum_{\gamma \in S_{m_{1}} \times \dots \times S_{m_{l}}}\delta\left(Q^{R}_{\mathbf{r}, \mu_{1}, \nu_{1}} \sigma^{-1}\right) \delta\left( Q^{S}_{\mathbf{s},\nu_{2}, \mu_{2}} \tau \right)  \mathcal{O}_{\rho}(B_{l}) \delta(\rho^{-1} \gamma^{-1} \sigma \gamma \tau^{-1}) \nonumber
\\
{}&= h \delta^{R S} \delta_{\mathbf{r} \mathbf{s}} \delta_{\nu_{1} \nu_{2}}  \mathcal{O}^{R}_{\mathbf{r}, \mu_{1}, \mu_{2}}( B_{l})\,,
\end{align}
where $h = \prod_{\alpha=1}^{l} m_{\alpha}! =|S_{m_{1}} \times \dots \times S_{m_{l}}|$.
\end{lemma}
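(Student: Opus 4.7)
The plan is to follow the same template as the two-matrix analogue (Lemma \ref{lem:two-matrix_lemma_fourier}), adapted to the larger symmetric group $S_{\mathbf{m}}$ and its subgroup $S_{m_1} \times \cdots \times S_{m_l}$. The key structural ingredients are already in place: the $Q$-basis elements lie in the PCA $\mathcal{A}(\vec{m})$, they satisfy the matrix-like multiplication rule \eqref{eqn:q_basis_mmm_mult_property}, and the permutation-labelled observables reduce to $Q$-basis observables via \eqref{eqn:mmm_spb_schur_gen_obs}.

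First I would perform the $\sigma$ and $\tau$ sums against the two $Q$-basis delta functions. The delta $\delta(Q^{R}_{\mathbf{r}, \mu_{1}, \nu_{1}} \sigma^{-1})$ acts as a selector for $\sigma$, so summing $\delta(\rho^{-1} \gamma^{-1} \sigma \gamma \tau^{-1})$ over $\sigma$ effectively substitutes $\sigma \to Q^{R}_{\mathbf{r}, \mu_{1}, \nu_{1}}$ (extended linearly) inside the remaining delta. Similarly the $\tau$ sum substitutes $\tau^{-1} \to Q^{S}_{\mathbf{s}, \nu_{2}, \mu_{2}}$. This reduces the expression to
\begin{equation}
\sum_{\rho \in S_{\mathbf{m}}}\sum_{\gamma \in S_{m_1}\times\cdots\times S_{m_l}} \mathcal{O}_{\rho}(B_l)\, \delta\bigl(\rho^{-1}\gamma^{-1} Q^{R}_{\mathbf{r}, \mu_{1}, \nu_{1}}\gamma\, Q^{S}_{\mathbf{s},\nu_{2},\mu_{2}}\bigr)\,.
\end{equation}

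Next I would invoke the defining property of the PCA $\mathcal{A}(\vec{m})$: since $Q^{R}_{\mathbf{r}, \mu_{1}, \nu_{1}} \in \mathcal{A}(\vec{m})$ commutes with every $\gamma \in S_{m_1} \times \cdots \times S_{m_l}$, we have $\gamma^{-1} Q^{R}_{\mathbf{r}, \mu_{1}, \nu_{1}} \gamma = Q^{R}_{\mathbf{r}, \mu_{1}, \nu_{1}}$. The $\gamma$-dependence then drops entirely and the sum over $\gamma$ produces a factor $h = \prod_{\alpha=1}^{l} m_{\alpha}!$. Applying the $Q$-basis multiplication rule \eqref{eqn:q_basis_mmm_mult_property} to the remaining product gives
\begin{equation}
Q^{R}_{\mathbf{r}, \mu_{1}, \nu_{1}} Q^{S}_{\mathbf{s}, \nu_{2}, \mu_{2}} = \delta^{RS}\delta_{\mathbf{r}\mathbf{s}}\delta_{\nu_{1}\nu_{2}}\, Q^{R}_{\mathbf{r}, \mu_{1}, \mu_{2}}\,.
\end{equation}

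Finally, the remaining sum $\sum_{\rho \in S_{\mathbf{m}}} \delta(Q^{R}_{\mathbf{r}, \mu_{1}, \mu_{2}} \rho^{-1}) \mathcal{O}_{\rho}(B_l)$ is, by definition \eqref{eqn:mmm_spb_schur_gen_obs}, precisely $\mathcal{O}^{R}_{\mathbf{r}, \mu_{1}, \mu_{2}}(B_l)$, yielding the claimed identity. I do not expect a serious obstacle here: the only nontrivial inputs are the $\gamma$-invariance of $Q$-elements (which is the defining property of $\mathcal{A}(\vec{m})$) and the orthogonality/matrix-unit multiplication rule for the Wedderburn-Artin $Q$-basis; both have been established earlier, so the argument is essentially a direct, step-by-step transcription of the two-matrix proof with $S_{m+n}$ replaced by $S_{\mathbf{m}}$ and $S_m \times S_n$ by $S_{m_1} \times \cdots \times S_{m_l}$.
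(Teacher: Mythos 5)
Your proposal is correct and follows essentially the same route as the paper's own proof: sum the $Q$-basis delta functions to replace $\sigma$ and $\tau^{-1}$, use the invariance of $Q^{R}_{\mathbf{r},\mu_{1},\nu_{1}}$ under conjugation by $\gamma \in S_{m_{1}} \times \dots \times S_{m_{l}}$ to extract the factor $h$, apply the matrix-unit multiplication rule \eqref{eqn:q_basis_mmm_mult_property}, and recognise the remaining sum as $\mathcal{O}^{R}_{\mathbf{r},\mu_{1},\mu_{2}}(B_{l})$ via \eqref{eqn:mmm_spb_schur_gen_obs}. No gaps.
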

\begin{proof}
\begin{align}
&\sum_{\sigma, \tau, \rho \in S_{\mathbf{m}}} \sum_{\gamma \in S_{m_{1}} \times \dots \times S_{m_{l}}}\delta\left(Q^{R}_{\mathbf{r}, \mu_{1}, \nu_{1}} \sigma^{-1}\right) \delta\left( Q^{S}_{\mathbf{s},\nu_{2}, \mu_{2}} \tau \right)  \mathcal{O}_{\rho}(B_{l}) \delta(\rho^{-1} \gamma^{-1} \sigma \gamma \tau^{-1}) \nonumber
\\
{}& = \sum_{ \rho \in S_{\mathbf{m}}}{} \sum_{\gamma \in S_{m_{1}} \times \dots \times S_{m_{l}}}\mathcal{O}_{\rho}(B_{l}) \delta(\rho^{-1} \gamma^{-1} Q^{R}_{\mathbf{r},\mu_{1},\nu_{1}} \gamma Q^{S}_{\mathbf{s},\nu_{2},\mu_{2}})  \label{eqn:mmm_spb_delta_sum}
\\
{}& = h \sum_{\rho \in S_{\mathbf{m}}} \mathcal{O}_{\rho}(B_{l}) \delta(\rho^{-1}  Q^{R}_{\mathbf{r},\mu_{1},\nu_{1}} Q^{S}_{\mathbf{s},\nu_{2},\mu_{2}} ) \label{eqn:mmm_spb_char_cyclicity}
\\
& = h \delta^{R S} \delta_{\mathbf{r} \mathbf{s}}  \delta_{\nu_{1} \nu_{2}} \left(  \sum_{ \rho \in S_{\mathbf{m}}}  \delta\left( Q^{R}_{\mathbf{r}, \mu_{1}, \mu_{2}} \rho^{-1} \right) \mathcal{O}_{\rho}(B_{l}) \right)\label{eqn:mmm_spb_deltas_in_rep_quants}
\\
& = h \delta^{R S} \delta_{\mathbf{r} \mathbf{s}}   \delta_{\nu_{1} \nu_{2}}   \mathcal{O}^{R}_{\mathbf{r}, \mu_{1}, \mu_{2}}(B_{l})
\end{align}
Equation \eqref{eqn:mmm_spb_delta_sum} sums over delta functions containing the $Q$-basis elements, inserting them in place of $\sigma$ and $\tau^{-1}$ in the final delta function. In equation \eqref{eqn:mmm_spb_char_cyclicity}, the invariance of  $Q^{R}_{\mathbf{r}, \mu_{1},\nu_{1}}$ under  conjugation by $\gamma$ was used to remove the $\gamma$ dependence, after which the sum was computed, introducing a factor of $h$. Multiplying the $Q$-basis elements as matrices following \eqref{eqn:q_basis_mmm_mult_property} produces equation \eqref{eqn:mmm_spb_deltas_in_rep_quants}. The final line simply identifies a Fourier basis operator in the witness fields, as per definition \eqref{eqn:mmm_spb_schur_gen_obs}.
\end{proof}
Using the above lemma, the correlator of \eqref{eqn:mmm_spb_corr_with_combo_basis_inserted} is 
\begin{equation}
\label{eqn:mmm_spb_corr_result_boxed}
\boxed{
\left\langle  \mathcal{O}^{R}_{\mathbf{r}, \mu_{1}, \nu_{1}}(X_{l}) (\mathcal{O}^{S}_{\mathbf{s}, \mu_{2}, \nu_{2}}(X_{l}))^{\dagger}  \right\rangle = h \delta^{R S} \delta_{\mathbf{r} \mathbf{s}}  \delta_{\nu_{1} \nu_{2}}  \mathcal{O}^{R}_{\mathbf{r}, \mu_{1} \mu_{2}}(B_{l})
}
\end{equation}
This exhibits the orthogonality relationship as observed in the one and two-matrix cases and shows the correlator is proportional to a Fourier operator composed of witness fields $B_{l}$.


\subsection{Observable functions of  multi-matrix quantum and classical fields} 
\label{ss:back_fields_in_observables_for_mmm}

Obtaining the multi-matrix correlator result through classical witness fields is possible and follows the same procedure as \S\ref{ss:back_fields_in_observables} and \S\ref{ss:back_fields_in_observables_for_tm}. 
Define the partition function
\begin{equation}
\label{eqn:mmm_bmf_part_func_no_bmf}
\Sigma[0] =  \prod_{\eta=1}^{l}\int [dX_{\eta}] e^{-\text{Tr}(X_{\eta} X_{\eta}^{\dagger})} \,.
\end{equation}
which has basic correlator 
\begin{equation}
\label{eqn:mmm_no_bmf_basic_corr}
\langle (X_{\alpha})^{i}_{j} (X_{\beta}^{\dagger})^{k}_{l} \rangle =  \prod_{\eta=1}^{l} \frac{1}{\Sigma[0]}\int [dX_{\eta}](X_{\alpha})^{i}_{j} (X_{\beta}^{\dagger})^{k}_{l} e^{-\text{Tr}(X_{\eta} X_{\eta}^{\dagger})} = \delta_{\alpha, \beta} \delta^{i}_{l} \delta^{k}_{j}\,.
\end{equation}
Then, using permutation parameterised GIO 
\begin{align}
\begin{split}
\label{eqn:mmm_back_class_obs_definition}
\mathcal{O}_{\sigma}(Z_{l}A_{l}) 
&= \text{Tr}_{V_{N}^{\otimes \mathbf{m}}} \left( (Z_{1}A_{1})^{\otimes m_{1}} \otimes \dots \otimes (Z_{l}A_{l})^{\otimes m_{l}} \cL_{\sigma} \right) 
\\
&= (Z_{1}A_{1})^{i_{1}}_{i_{\sigma(1)}} \dots(Z_{1}A_{1})^{i_{m_{1}}}_{i_{\sigma(m_{1})}}  \dots (Z_{l}A_{l})^{i_{\mathbf{m}-m_{l}}}_{i_{\sigma(\mathbf{m}-m_{l})}} \dots(Z_{l}A_{l})^{i_{\mathbf{m}}}_{i_{\sigma(\mathbf{m})}} 
\end{split}
\end{align}
and its Hermitian conjugate
\begin{align}
\begin{split}
\label{eqn:mmm_back_class_definition_herm}
\left( \mathcal{O}_{\sigma}(Z_{l}A_{l}) \right)^{\dagger} 
&= \text{Tr}_{V_{N}^{\otimes \mathbf{m}}} \left( (A_{1}^{\dagger} Z_{1}^{\dagger})^{\otimes m_{1}} \otimes \dots \otimes ( A_{l}^{\dagger}Z_{l}^{\dagger})^{\otimes m_{l}}\cL_{\sigma^{-1}} \right) 
\\
&= (A_{1}^{\dagger} Z_{1}^{\dagger})^{i_{1}}_{i_{\sigma^{-1}(1)}} \dots(A_{1}^{\dagger} Z_{1}^{\dagger})^{i_{m_{1}}}_{i_{\sigma^{-1}(m_{1})}}  \dots (A_{l}^{\dagger} Z_{l}^{\dagger})^{i_{\mathbf{m}-m_{l}}}_{i_{\sigma^{-1}(\mathbf{m}-m_{l})}} \dots(A_{l}^{\dagger} Z_{l}^{\dagger})^{i_{\mathbf{m}}}_{i_{\sigma^{-1}(\mathbf{m})}}  
\\
&= \mathcal{O}_{\sigma^{-1}}(A_{l}^{\dagger} Z_{l}^{\dagger})
 \,,
\end{split}
\end{align}

the GIO two-point function of classical witness fields is then 
\begin{align}
\langle  \mathcal{O}_{\sigma_{1}} {}& (X_{l} A_{l}) \left(\mathcal{O}_{\sigma_{2}}(X_{l }A_{l}) \right)^{\dagger}  \rangle \nonumber
\\
={}& \prod_{\eta=1}^{l} \frac{1}{\Sigma[0]} \int [dX_{\eta}] \mathcal{O}_{\sigma_{1}}(X_{l} A_{l}) \left(\mathcal{O}_{\sigma_{2}}(X_{l }A_{l}) \right)^{\dagger} e^{-\text{Tr}(X_{\eta} X_{\eta}^{\dagger})} \label{eqn:mmm_bmf_in_ob_corr}
\\
\begin{split}
={}& \sum_{I,J,K,Q} (A_{1})^{k_{1}}_{i_{\sigma_{1}(1)}} \dots (A_{1})^{k_{m_{1}}}_{i_{\sigma_{1}(m_{1})}}  \dots (A_{l})^{k_{\mathbf{m}- m_{l}}}_{i_{\sigma_{1}(\mathbf{m}- m_{l})}} \dots (A_{l})^{k_{\mathbf{m}}}_{i_{\sigma_{1}(\mathbf{m})}} (A^{\dagger}_{1})^{j_{1}}_{q_{1}} \dots (A^{\dagger}_{1})^{j_{m_{1}}}_{q_{m_{1}}} 
\\
{}& \times (A^{\dagger}_{l})^{j_{\mathbf{m} - m_{l}}}_{q_{\mathbf{m} - m_{l}}} \dots (A^{\dagger}_{l})^{j_{\mathbf{m}}}_{q_{\mathbf{m}}} \prod_{\eta=1}^{l} \frac{1}{\Sigma[0]} \int [dX_{\eta}] (Z_{1})^{i_{1}}_{k_{1}} \dots (Z_{1})^{i_{m_{1}}}_{k_{m_{1}}} \dots  (Z_{l})^{i_{\mathbf{m} - m_{l}}}_{k_{\mathbf{m} - m_{l}}} \dots (Z_{l})^{i_{\mathbf{m}}}_{k_{\mathbf{m}}} 
\\
{}& \times  (Z_{1}^{\dagger})^{q_{1}}_{j_{\sigma_{2}^{-1}(1)}} \dots (Z_{1}^{\dagger})^{q_{m_{1}}}_{j_{\sigma_{2}^{-1}(m_{1})}} \dots (Z_{l}^{\dagger})^{q_{\mathbf{m}-m_{l}}}_{j_{\sigma_{2}^{-1}(\mathbf{m}-m_{l})}} \dots (Z_{l}^{\dagger})^{q_{\mathbf{m}}}_{j_{\sigma_{2}^{-1}(\mathbf{m})}} e^{-\text{Tr}(X_{\eta} X_{\eta}^{\dagger})} \,,
\end{split}
\end{align}
where the sum over $I,J,K,Q$ covers all matrix indices. To simplify the notation, a function in the witness fields is defined
\begin{align}
\begin{split}
f_{l}(A, A^{\dagger}; \vec{i}, \vec{j}, \vec{k},\vec{q}; \sigma_{1} )
={} (A_{1})^{k_{1}}_{i_{\sigma_{1}(1)}}  \dots &(A_{1})^{k_{m_{1}}}_{i_{\sigma_{1}(m_{1})}}  \dots (A_{l})^{k_{\mathbf{m}- m_{l}}}_{i_{\sigma_{1}(\mathbf{m}- m_{l})}} \dots (A_{l})^{k_{\mathbf{m}}}_{i_{\sigma_{1}(\mathbf{m})}}
\\
& \times (A^{\dagger}_{1})^{j_{1}}_{q_{1}} \dots (A^{\dagger}_{1})^{j_{m_{1}}}_{q_{m_{1}}} \dots (A^{\dagger}_{l})^{j_{\mathbf{m} - m_{l}}}_{q_{\mathbf{m} - m_{l}}} \dots (A^{\dagger}_{l})^{j_{\mathbf{m}}}_{q_{\mathbf{m}}} \,.
\end{split}
\end{align}
Using this notation the correlator is
\begin{align}
\langle  \mathcal{O}_{\sigma_{1}} {}& (X_{l} A_{l}) \left(\mathcal{O}_{\sigma_{2}}(X_{l }A_{l}) \right)^{\dagger}  \rangle \nonumber
\\
\begin{split}
={}& \sum_{I,J,K,Q} f_{l}(A, A^{\dagger}; \vec{i}, \vec{j}, \vec{k},\vec{q}; \sigma_{1} )  \prod_{\eta=1}^{l} \frac{1}{\Sigma[0]} \int [dX_{\eta}] (Z_{1})^{i_{1}}_{k_{1}} \dots (Z_{1})^{i_{m_{1}}}_{k_{m_{1}}} \dots  (Z_{l})^{i_{\mathbf{m} - m_{l}}}_{k_{\mathbf{m} - m_{l}}}  
\\
{}& \times (Z_{l})^{i_{\mathbf{m}}}_{k_{\mathbf{m}}}  \dots (Z_{1}^{\dagger})^{q_{1}}_{j_{\sigma_{2}^{-1}(1)}} \dots (Z_{1}^{\dagger})^{q_{m_{1}}}_{j_{\sigma_{2}^{-1}(m_{1})}} \dots (Z_{l}^{\dagger})^{q_{\mathbf{m}-m_{l}}}_{j_{\sigma_{2}^{-1}(\mathbf{m}-m_{l})}} \dots (Z_{l}^{\dagger})^{q_{\mathbf{m}}}_{j_{\sigma_{2}^{-1}(\mathbf{m})}} e^{-\text{Tr}(X_{\eta} X_{\eta}^{\dagger})}
\end{split}
\\
\begin{split}
={}& \sum_{I,J,K,Q} f_{l}(A, A^{\dagger}; \vec{i}, \vec{j}, \vec{k},\vec{q}; \sigma_{1} )  \langle (Z_{1})^{i_{1}}_{k_{1}} \dots (Z_{1})^{i_{m_{1}}}_{k_{m_{1}}} \dots  (Z_{l})^{i_{\mathbf{m} - m_{l}}}_{k_{\mathbf{m} - m_{l}}} \dots (Z_{l})^{i_{\mathbf{m}}}_{k_{\mathbf{m}}}  
\\
& \qquad \qquad \qquad \times (Z_{1}^{\dagger})^{q_{1}}_{j_{\sigma_{2}^{-1}(1)}} \dots (Z_{1}^{\dagger})^{q_{m_{1}}}_{j_{\sigma_{2}^{-1}(m_{1})}} \dots (Z_{l}^{\dagger})^{q_{\mathbf{m}-m_{l}}}_{j_{\sigma_{2}^{-1}(\mathbf{m}-m_{l})}} \dots (Z_{l}^{\dagger})^{q_{\mathbf{m}}}_{j_{\sigma_{2}^{-1}(\mathbf{m})}}  \rangle \label{eqn:mmm_bmf_f_apply_wicks}
\end{split}
\\
\begin{split}
={}& \sum_{I,J,K,Q} \sum_{\gamma \in S_{m_{1}} \times \dots \times S_{m_{l}}} f_{l}(A, A^{\dagger}; \vec{i}, \vec{j}, \vec{k},\vec{q}; \sigma_{1} )\delta^{i_{1}}_{j_{\gamma \sigma_{2}^{-1}(1)}} \delta^{q_{1}}_{k_{\gamma^{-1}(1)}} \dots \dots \delta^{i_{\mathbf{m}}}_{j_{\gamma \sigma_{2}^{-1}(\mathbf{m})}} \delta^{q_{\mathbf{m}}}_{k_{\gamma^{-1}(\mathbf{m})}} \label{eqn:mmm_bmf_f_func_and_deltas}
\end{split}
\\
\begin{split}
={}&  \sum_{j_{1},  \dots, j_{\mathbf{m}}} \sum_{\gamma \in S_{m_{1}} \times \dots \times S_{m_{l}}} \left(A^{\dagger}_{1} A_{1}\right)_{j_{\gamma^{-1} \sigma_{1} \gamma \sigma_{2}^{-1}(1)}}^{j_{1}} \dots \left(A^{\dagger}_{1} A_{1}\right)_{j_{\gamma^{-1} \sigma_{1} \gamma \sigma_{2}^{-1}(m_{1})}}^{j_{m_{1}}}   \dots
\\
&{} \qquad \qquad \qquad \qquad \qquad  \dots \times \left(A^{\dagger}_{l} A_{l}\right)_{j_{\gamma^{-1} \sigma_{1} \gamma \sigma_{2}^{-1}(\mathbf{m}-m_{l})}}^{j_{\mathbf{m}-m_{l}}} \dots \left(A^{\dagger}_{l} A_{l}\right)_{j_{\gamma^{-1} \sigma_{1} \gamma \sigma_{2}^{-1}(\mathbf{m})}}^{j_{\mathbf{m}}}  \label{eqn:mmm_bmf_A_Adag_with_trace_on_j}
\end{split}
\\
\begin{split}
={}& \sum_{j_{1}, \dots j_{\mathbf{m}}} \sum_{\gamma \in S_{m_{1}} \times \dots \times S_{m_{l}}} (B_{1})^{j_{1}}_{j_{\gamma^{-1}\sigma_{1} \gamma \sigma_{2}^{-1}(1)}} \dots (B_{1})^{j_{m_{1}}}_{j_{\gamma^{-1}\sigma_{1} \gamma \sigma_{2}^{-1}(m_{1})}}  \dots 
\\
&{} \qquad \qquad \qquad \qquad \qquad \dots \times   (B_{l})^{j_{(\mathbf{m}-m_{l})}}_{j_{\gamma^{-1}\sigma_{1} \gamma \sigma_{2}^{-1}(\mathbf{m}-m_{l})}} \cdots (B_{l})^{j_{(\mathbf{m})}}_{j_{\gamma^{-1}\sigma_{1} \gamma \sigma_{2}^{-1}(\mathbf{m})}} \label{eqn:mmm_back_class_setting_a_mats_to_b}
\end{split}
\\
={}& \sum_{\gamma \in S_{m_{1}} \times \dots \times S_{m_{l}}}\text{Tr}_{V_{N}^{\otimes \mathbf{m}}}\left((B_{1})^{\otimes m_{1}} \otimes \dots \otimes (B_{l})^{\otimes m_{l}} \cL_{\gamma^{-1} \sigma_{1} \gamma \sigma_{2}^{-1}} \right)
\\
={}& \sum_{\sigma_{3} \in S_{\mathbf{m}}} \sum_{ \gamma \in S_{m_{1}} \times \dots \times S_{m_{l}} } \mathcal{O}_{\sigma_{3}}(B_{l}) \delta(\sigma_{3}^{-1} \gamma^{-1} \sigma_{1} \gamma \sigma_{2}^{-1}))\label{eqn:mmm_back_class_delta_and_gio_in_back}
\\
={}& \sum_{p_{3} } \sum_{\gamma \in S_{m_{1}} \times \dots \times S_{m_{l}}}  \left(  \sum_{\alpha \in \text{Orb}_{\mathcal{\vec{A}}}(p_{3})}\mathcal{O}_{\alpha}(B_{l}) \delta(\alpha^{-1} \gamma^{-1} \sigma_{1} \gamma \sigma_{2}^{-1}) \right) \label{eqn:mmm_back_class_equiv_class_split}
\\
={}&    \sum_{p_{3} } \mathcal{O}_{\sigma^{\left(p_{3}\right)}}(B_{l})\sum_{\gamma \in S_{m_{1}} \times \dots \times S_{m_{l}}} \delta(T^{\mathcal{\vec{A}}}_{p'_{3}} \gamma^{-1} \sigma_{1} \gamma \sigma_{2}^{-1}) \label{eqn:mmm_back_class_alg_ele_in_delta}
\\
={}& \sum_{p_{3} } \frac{|T^{\mathcal{\vec{A}}}_{p'_{3}}|\prod_{\alpha = 1}^{l} m_{\alpha}!}{|T^{\mathcal{\vec{A}}}_{p_{1}} | |T^{\mathcal{\vec{A}}}_{p'_{2}}|} C^{p'_{3}; \mathcal{\vec{A}}}_{p_{1} p'_{2}} \mathcal{O}_{\sigma^{\left(p_{3}\right)}}(B_{l})\label{eqn:mmm_back_class_corr_with_pca_struct_const} 
\end{align}
In the above, Kronecker equivariance, Wick's theorem and index contractions were implemented to arrive at \eqref{eqn:mmm_bmf_A_Adag_with_trace_on_j} from \eqref{eqn:mmm_bmf_f_apply_wicks}. $A^{\dagger}_{\eta} A_{\eta} = B_{\eta}$ was set in \eqref{eqn:mmm_back_class_setting_a_mats_to_b}, and the GIO of classical witness fields along with a delta function were used in \eqref{eqn:mmm_back_class_delta_and_gio_in_back}. Equations \eqref{eqn:mmm_back_class_equiv_class_split} and \eqref{eqn:mmm_back_class_alg_ele_in_delta} split the sum into equivalence classes/orbits (labelled $p_{3}$), and a sum over the  elements of orbit $\text{Orb}_{\mathcal{\vec{A}}}(p_{3})$ (labelled $\alpha$). The PCA element $T_{p'_{3}}^{\mathcal{\vec{A}}}$ is identified by taking the $\alpha$ sum inside the delta function, to sum over $\alpha^{-1}$. Finally, lemma \ref{lem:multi-matrix_lemma} was utilised in \eqref{eqn:mmm_back_class_corr_with_pca_struct_const} to bring the correlator to the form with manifest PCA structure constant dependence.  The prime label on $p'_{i}$ is to indicate that it is the partition/equivalence class of an inverse permutation. Employing $|T_{p_{i}}^{\mathcal{\vec{A}}}|\mathcal{O}_{\sigma^{\left( p_{i}\right)}} \equiv |T_{p_{i}}^{\mathcal{\vec{A}}}|\mathcal{O}_{\sigma_{i}} = \mathcal{O}_{p_{i}}$, and $|T_{p'_{i}}| =  |T_{p_{i}}|$, we achieve the same combinatorial basis result as equation \eqref{eqn:mmm_final_correlator_boxed} for the coupling matrix derivation after rearranging the orbit size factors
\begin{equation}
\boxed{
\langle \mathcal{O}_{p_{1}}(X_{l}A_{l}) \left(\mathcal{O}_{p_{2}}(X_{l}A_{l})\right)^{\dagger} \rangle = \prod_{\alpha=1}^{l} m_{\alpha}! \sum_{p_{3} } C^{p'_{3}; \mathcal{\vec{A}}}_{p_{1} p'_{2}} \mathcal{O}_{p_{3}}(B_{l})
}
\end{equation}
Figures may be generated for the multi-matrix case as a straightforward generalisation to both Figure \ref{fig:tm_two_perm_bmf_correlator_diagram} and \ref{fig:two_mat_corr_bmf_wihtin_obs_png}, by inserting the required number of box operators to match the desired  number of witness matrices. 


\subsection{Fourier/\texorpdfstring{$Q$}{Q} basis for multi-matrix, two-point function with quantum and classical fields}
\label{ss:schur_basis_back_fields_in_observables_mult_mat}
Define the GIO for the classical witness fields in Fourier basis as 
\begin{equation}
\label{eqn:mmm_spb_schur_gen_obs_back_class}
\mathcal{O}^{R}_{\mathbf{r}, \mu, \nu}(X_{l}A_{l}) = \sum_{\sigma \in S_{\mathbf{m}}} \delta\left(Q^{R}_{\mathbf{r}, \mu, \nu} \sigma^{-1} \right) \mathcal{O}_{\sigma}(X_{l}A_{l}) \,,
\end{equation}
with conjugate 
\begin{align}
\label{eqn:mmm_spb_schur_gen_obs_back_class_conj}
(\mathcal{O}^{R}_{\mathbf{r}, \mu, \nu}(X_{l}A_{l}))^{\dagger} 
&= \sum_{\sigma \in S_{\mathbf{m}}} \delta\left(Q^{R}_{\mathbf{r}, \nu, \mu} \sigma \right) \mathcal{O}_{\sigma^{-1}}(A_{l}^{\dagger} X_{l}^{\dagger})
\\
&= \sum_{\tilde{\sigma} \in S_{\mathbf{m}}} \delta\left(Q^{R}_{\mathbf{r}, \nu, \mu} \tilde{\sigma}^{-1} \right) \mathcal{O}_{\tilde{\sigma}}(A_{l}^{\dagger} X_{l}^{\dagger})
\\
&= \mathcal{O}^{R}_{\mathbf{r}, \nu, \mu}(A_{l}^{\dagger} X_{l}^{\dagger})  \,.
\end{align}
where $(Q^{R}_{\mathbf{r}, \mu, \nu})^{\dagger} =Q^{R}_{\mathbf{r}, \nu, \mu}$, and $\sigma^{\dagger} = \sigma^{-1}$ were applied.
Using the permutation parameterised correlator result from \eqref{eqn:mmm_back_class_delta_and_gio_in_back}, repeated below for convenience, 
\begin{align}
\langle \mathcal{O}_{\sigma}(X_{l} A_{l}) \left(\mathcal{O}_{\tau}(X_{l }A_{l}) \right)^{\dagger}  \rangle 
{}&= \langle \mathcal{O}_{\sigma}(X_{l} A_{l}) \mathcal{O}_{\tau^{-1}}(A_{l}^{\dagger}X_{l}^{\dagger})  \rangle  
\\
{}& = \sum_{\rho \in S_{\mathbf{m}}} \sum_{ \gamma \in S_{m_{1}} \times \dots \times S_{m_{l}} } \mathcal{O}_{\rho}(B_{l}) \delta(\rho^{-1} \gamma^{-1} \sigma \gamma \tau^{-1}) \,,
\end{align}
where $B_{l} = A_{l}^{\dagger} A_{l}$ was set, the classical witness field correlator in Fourier basis is 
\begin{align}
\left\langle \right. &  \mathcal{O}^{R}_{\mathbf{r}, \mu_{1}, \nu_{1}}(X_{l}A_{l})  \left(\mathcal{O}^{S}_{\mathbf{s}, \mu_{2}, \nu_{2}}(X_{l}A_{l})\right)^{\dagger}  \left. \right\rangle  \nonumber
\\
&= \sum_{\sigma, \tau \in S_{\mathbf{m}}} \delta\left(Q^{R}_{\mathbf{r},\mu_{1},\nu_{1}} \sigma^{-1}\right) \delta\left( Q^{S}_{\mathbf{s},\nu_{2},\mu_{2}}\tau \right) \langle \mathcal{O}_{\sigma}(X_{l}A_{l}) \left(\mathcal{O}_{\tau}(X_{l}A_{l} )\right)^{\dagger} \rangle
\\
&= \sum_{\sigma, \tau, \rho \in S_{\mathbf{m}}}  \sum_{\gamma \in S_{m_{1}} \times \dots \times S_{m_{l}}} \delta\left(Q^{R}_{\mathbf{r},\mu_{1},\nu_{1}} \sigma^{-1}\right) \delta\left( Q^{S}_{\mathbf{s},\nu_{2},\mu_{2}}\tau \right) \mathcal{O}_{\rho}(B_{l}) \delta(\rho^{-1} \gamma^{-1} \sigma \gamma \tau^{-1}) 
\\
&= h  \delta^{R S} \delta_{\mathbf{r} \mathbf{s}} \delta_{\nu_{1} \nu_{2}}  \mathcal{O}^{R}_{\mathbf{r}, \mu_{1}, \mu_{2}}(B_{l}) \,,
\end{align}
where lemma \ref{lem:multi_matrix_schur_basis_lem} was used to obtain the final line, and again $h = \prod_{\alpha=1}^{l
} m_{\alpha}!$. This correlator result
\begin{equation}
\label{eqn:mmm_schur_classical_corr_result_boxed}
\boxed{
\left\langle   \mathcal{O}^{R}_{\mathbf{r}, \mu_{1}, \nu_{1}}(X_{l}A_{l})  \left(\mathcal{O}^{S}_{\mathbf{s}, \mu_{2}, \nu_{2}}(X_{l}A_{l})\right)^{\dagger} \right\rangle  =  h  \delta^{R S} \delta_{\mathbf{r} \mathbf{s}} \delta_{\nu_{1} \nu_{2}}  \mathcal{O}^{R}_{\mathbf{r}, \mu_{1}, \mu_{2}}(B_{l})
}
\end{equation}

generalises the previous
findings of  \S\ref{ss:schur_basis_back_fields_in_observables_two_mat} to an arbitrary number of witness matrices, while again confirming the orthogonality relations of the two GIO correlator by virtue of the representation theoretic delta functions.

\newpage

\section{Algebras and tensor correlators with tensor witnesses }
\label{s:tensor_model}
A further extension to the witness field discussion is to introduce tensor invariants \cite{BenGeloun:2013lim,BenGeloun:2017vwn}. Tensor operators of rank $d$ are invariant under $U(N)^{\times d}$ and labelled by permutations ($\sigma_{1}, \sigma_{2}, \dots, \sigma_{d}$). As such they are constructed by contracting a given tensor, $Z^{i_{1} \dots i_{d}}$, with its complex conjugate\footnote{In this tensor case, complex conjugated quantities will feature indices down out of convention.}, $\bar{Z}_{i_{1}, \dots i_{d}}$. A general permutation parameterised GIO is then written as 
\begin{align}
\label{eqn:tens_gio_def_for_tensor}
\begin{split}
\mathcal{O}_{\sigma_{1}, \dots, \sigma_{d}} &(\bar{Z}, Z) 
\\
&= \text{Tr}_{V_{1}^{\otimes m} \otimes \dots \otimes V_{d}^{\otimes m}} \left( Z^{\otimes m}   \bar{Z}^{\otimes m}  (\cL_{ \sigma_{1} }  \otimes \dots \otimes \cL_{ \sigma_{d} } )  \right) \\
&= \sum_{I} \bar{Z}_{(i_{1})_{\sigma_{1}(1)} \dots (i_{d})_{\sigma_{d}(1)}} \dots \bar{Z}_{(i_{1})_{\sigma_{1}(m)} \dots (i_{d})_{\sigma_{d}(m)}} Z^{(i_{1})_{1} \dots (i_{d})_{1}} \dots Z^{(i_{1})_{m} \dots (i_{d})_{m}} 
\end{split}
\end{align}
where the sum over $I$ is over the full set of indices $I = \{(i_{1})_{1}, \dots, (i_{d})_{1}, \dots,  (i_{1})_{m}, \dots, (i_{d})_{m} \}$. The tensor invariants have the underlying equivalence relation
\begin{equation}
\label{eqn:tens_tensor_equiv_relation}
(\sigma_{1}, \sigma_{2}, \dots, \sigma_{d}) \sim (\mu_{1} \sigma_{1} \mu_{2}, \mu_{1} \sigma_{2} \mu_{2},\dots, \mu_{1} \sigma_{d} \mu_{2}) \,,
\end{equation}
where $\sigma_{i}\,, \mu_{1}\,, \mu_{2} \in S_{m} $.	
This is known as the equivalence under left-right diagonal action of $S_{m}$ on $S_{m}^{\times d}$, and means that equivalent permutation tuples produce the same operator
\begin{equation}
\label{eqn:tc_obs_invariance}
\mathcal{O}_{\sigma_{1}, \sigma_{2}, \dots, \sigma_{d}}(\bar{Z}, Z) = \mathcal{O}_{\mu_{1}\sigma_{1}\mu_{2}, \mu_{1}\sigma_{2}\mu_{2}, \dots, \mu_{1}\sigma_{d}\mu_{2}}(\bar{Z}, Z) \,.
\end{equation}
In keeping with the previous matrix discussion, such a relation can be linked to a permutation centralizer algebra (PCA) which may be used to characterise tensor correlators. To help produce an explicit and concise demonstration of such correlator calculations, from here on we specialise to $d=3$. The PCA associated with the tensor model under study is denoted $\mathcal{K}(m)$ and otherwise referred to as the ``Kronecker PCA". This was originally discussed in detail in \cite{BenGeloun:2017vwn} where both a gauged version $\mathcal{K}(m)$ and an un-gauged version $\mathcal{K}_{\text{un}}(m)$ were investigated for their connection to the counting of tensor invariants\footnote{$\mathcal{K}_{\text{un}}(m)$ and $\mathcal{K}(m)$ are isomorphic and can be used interchangeably, both being referred to as the Kronecker PCA.}. There, $\mathcal{K}_{\text{un}}(m)$ was defined as the vector space and sub-algebra of $\mathbb{C}(S_{m})^{\otimes 3}$ which is invariant under the left and right action of the diagonal symmetric group algebra\footnote{$\mathcal{K}_{\text{un}}(m)$ is also called the ``double coset algebra" in \cite{BenGeloun:2017vwn}.} $\text{Diag}[\mathbb{C}(S_{m})]$, i.e.
\begin{equation}
\label{eqn:tc_kn_algebra_tensor}
\mathcal{K}_{\text{un}}(m) = \text{Span}_{\mathbb{C}} \left\{ \sum_{\gamma_{1}, \gamma_{2} \in S_{m}}  \gamma_{1} \sigma_{1} \gamma_{2} \otimes \gamma_{1} \sigma_{2} \gamma_{2} \otimes \gamma_{1} \sigma_{3} \gamma_{2}\,, \quad \sigma_{1}\,,\sigma_{2}\,,\sigma_{3} \in S_{m}\right\}  \,.
\end{equation}
This is also the definition used in the previous matrix models, where the algebra elements are formed from group averages. An important relation between this PCA and graphs was at the centre of \cite{BenGeloun:2013lim,BenGeloun:2017vwn,BenGeloun:2020yau,Geloun:2022kma}, where a graph basis for the algebra was implemented. This basis is simply a re-description of the combinatorial/permutation basis so far used, by labelling the algebra in terms of unique graphs. Counting these graphs connects directly to counting the elements in $\mathcal{K}_{\text{un}}(m)$, as they both share the same equivalence relation of \eqref{eqn:tens_tensor_equiv_relation} (see \S 2 of \cite{BenGeloun:2020yau}
for more details on the construction of these graphs). 

This relation and fact \eqref{eqn:tc_obs_invariance} apply to $\mathcal{K}_{\text{un}}(m)$ specifically and in terms of graph nomenclature, $\mathcal{K}_{\text{un}}(m)$ corresponds to bipartite graphs with $m$ trivalent vertices with three incoming coloured edges and $m$ trivalent vertices with three outgoing coloured edges. Unique  graphs are hence equivalence classes of triples $(\tau_{1}, \tau_{2}, \tau_{3}) \in S_{m} \times S_{m} \times S_{m}$. The expression $\text{Col}(m)$ is used to denote the set of such equivalence classes, where $m$ labels the number of vertices, and $r \in \{1, \dots, |\text{Col}(m)| \}$ is the label for each individual graph of chosen $m$. Alluding to the counting mentioned earlier, it is  a fact that $|\text{Col}(m)| = |\mathcal{K}_{\text{un}}(m)|$, i.e. the number of graphs is equal to the dimension of the algebra. To define the PCA elements used in the subsequent discussion, we additionally specify $\text{Aut}_{\mathcal{K}}(r)$ as the subgroup of $\text{Diag}[S_{m}] \times \text{Diag}[S_{m}]$ leaving triples $(\sigma_{1}^{(r)},\sigma_{2}^{(r)},\sigma_{3}^{(r)}) \in S_{m} \times S_{m} \times S_{m}$ fixed, and $\text{Orb}_{\mathcal{K}}(r)$ as the set of unique triples of permutations obtained from the left-right diagonal action of $S_{m}$. The orbit-stabilizer theorem then allows one to connect the size of these orbits, $|\text{Orb}_{\mathcal{K}}(r)|$, to the size of the $\text{Aut}_{\mathcal{K}}(r)$ group, $|\text{Aut}_{\mathcal{K}}(r)|$. More precisely, the theorem states that there is an isomorphism such that
\begin{equation}
 \frac{1}{|\text{Orb}_{\mathcal{K}}(r)|} = \frac{|\text{Aut}_{\mathcal{K}}(r)|}{|\text{Diag}[S_{m}] \times \text{Diag}[S_{m}]|} = \frac{|\text{Aut}_{\mathcal{K}}(r)|}{(m!)^2}
\end{equation}
Having briefly introduced the background and notation, consider a graph basis element for $\mathcal{K}_{\text{un}}(m)$ as
\begin{align}
E_{r} &= \frac{1}{(m!)^2} \sum_{\mu_{1}, \mu_{2} \in S_{m}} \mu_{1} \sigma_{1}^{(r)} \mu_{2} \otimes \mu_{1} \sigma_{2}^{(r)} \mu_{2} \otimes \mu_{1} \sigma_{3}^{(r)} \mu_{2} \label{eqn:tc_tens_e_basis_1}\\
& = \frac{|\text{Aut}_{\mathcal{K}}(r)|}{(m!)^2} \sum_{a \in \text{Orb}_{\mathcal{K}}(r)} \sigma_{1}^{(r)}(a) \otimes \sigma_{2}^{(r)}(a) \otimes \sigma_{3}^{(r)}(a) \\
&= \frac{1}{|\text{Orb}_{\mathcal{K}}(r)|} \sum_{a \in \text{Orb}_{\mathcal{K}}(r)} \sigma_{1}^{(r)}(a) \otimes \sigma_{2}^{(r)}(a) \otimes \sigma_{3}^{(r)}(a) \quad \in \mathcal{K}_{\text{un}}(m) \label{eqn:tc_tens_e_basis_3}\,.
\end{align}
Here $a$ labels the distinct permutation triples that are in the same orbit, such that $(\sigma_{1}^{(r)}(a) \otimes \sigma_{2}^{(r)}(a) \otimes \sigma_{3}^{(r)}(a))$ is a representative triple of the orbit denoted $\text{Orb}_{\mathcal{K}}(r)$.  As an aside, the $T$ basis used in the previous matrix models (for example, equation \eqref{eqn:atmb_algebra_element_necklace} for the Necklace PCA) is straightforwardly related to the graph $E$ basis for the Kronecker PCA via normalisation
\begin{equation}
E_{r} = \frac{T^{\mathcal{K}}_{r}}{|T^{\mathcal{K}}_{r}|}\,, \qquad T^{\mathcal{K}}_{r} = \sum_{a \in \text{Orb}_{\mathcal{K}}(r)} \sigma_{1}^{(r)}(a) \otimes \sigma_{2}^{(r)}(a) \otimes \sigma_{3}^{(r)}(a)\,.
\end{equation}
i.e. $T^{\mathcal{K}}_{r}$ is the un-normalised basis element of $\mathcal{K}_{\text{un}}(m)$ and $|T^{\mathcal{K}}_{r}| = |\text{Orb}_{\mathcal{K}}(r)|$.

Having addressed the tensor model's algebra, the correlator of tensor GIOs with classical witness fields is calculated in \S \ref{ss:background_tensor_fields}. Note that using a coupling witness field faces difficulties due to a lack of well-defined tensor inverse, hence only classical witness fields are implemented. Additionally, since $\mathcal{K}_{\text{un}}(m)$ is semi-simple and associative, the Wedderburn-Artin theorem \cite{ram1991representation} allows us to describe it as a direct sum of matrix algebras. Using this fact, the correlator calculations can be organised in terms of representation theoretic quantities by Fourier transforming the GIOs and subsequently using the multiplication of the Fourier basis matrix elements. This serves to generalise the findings of \cite{BenGeloun:2017vwn} to include a classical tensor field in an additional basis, and is the subject of \S \ref{ss:tensor_fourier_basis}.


\subsection{Two-point function of general tensor operators with tensor quantum and classical fields}
\label{ss:background_tensor_fields}

To calculate tensor correlators with classical tensor witnesses, we define the gauge invariant operators to include both classical field, $\Lambda$, and field variable $Z$, as rank-3 tensors. The permutation $\sigma_{i} \in S_{m}$ then acts on the $i$-th index of the tensor as per definition in \eqref{eqn:tens_gio_def_for_tensor}. Explicitly, 
we will use the  gauge invariant operators : 
\begin{align}
\label{eqn:tenmod_observable_d_3}
\begin{split}
\mathcal{O}_{\sigma_{1}, \sigma_{2}, \sigma_{3}}(\bar{\Lambda}, Z)  
&= \text{Tr}_{V_{1}^{\otimes m} \otimes V_{2}^{\otimes m} \otimes  V_{3}^{\otimes m}} \left[  Z^{\otimes m}
 \bar{\Lambda}^{\otimes m}  \left( \cL_{ \sigma_{1}}  \otimes \cL_{ \sigma_{2} } \otimes \cL_{ \sigma_{3}}  \right)  \right] 
\\
& = \sum_{I,J,K} \bar{\Lambda}_{i_{\sigma_{1}(1)} j_{\sigma_{2}(1)} k_{\sigma_{3}(1)}} \cdots \bar{\Lambda}_{{i_{\sigma_{1}(m)}} j_{\sigma_{2}(m)} k_{\sigma_{3}(m)}}  Z^{i_{1} j_{1} k_{1}} \cdots Z^{i_{m} j_{m} k_{m}}  \,,
\end{split}
\end{align} 
The permutation operators have action
\begin{align}
\begin{split}
{}&(\cL_{\sigma_{1}} \otimes \cL_{\sigma_{2}}  \otimes \cL_{\sigma_{3}} )\ket{(e_{i_{1}} \otimes \dots \otimes e_{i_{m}}) \otimes (e_{j_{1}} \otimes \dots \otimes e_{j_{m}}) \otimes (e_{k_{1}} \otimes \dots \otimes e_{k_{m}}) }
\\
{}&= \ket{(e_{i_{\sigma_{1}(1)}} \otimes \dots \otimes e_{i_{\sigma_{1}(m)}}) \otimes (e_{j_{\sigma_{2}(1)}} \otimes \dots \otimes e_{j_{\sigma_{2}(m)}})\otimes (e_{k_{\sigma_{3}(1)}} \otimes \dots \otimes e_{k_{\sigma_{3}(m)}}) }
\end{split}
\end{align}
It is also convenient to write this, in shorthand notation, as : 
\bea 
{}&(\cL_{\sigma_{1}} \otimes \cL_{\sigma_{2}}  \otimes \cL_{\sigma_{3}} ) \ket { e_{ i_1 j_1 k_1 } \otimes e_{ i_2 j_2 k_2 } \otimes \cdots e_{ i_m j_m k_m } } 
= \ket { e_{ i_{ \sigma_1 ( 1 ) }  j_{ \sigma_2(1)}  k_{\sigma_3 (1)}  }  \otimes \cdots  \otimes e_{ i_{ \sigma_1( m ) }
 j_{ \sigma_2 ( m) }  k_{ \sigma_3 ( m )  } }}  \cr 
 &&
\eea
We also have, in the shorthand notation, 
\bea 
&& \bar \Lambda  \ket {  e_{ijk}  } = \bar \Lambda_{ ijk} \cr 
&& \bra{ e^{ ijk } } Z = Z^{ ijk} 
\eea
Using these we can convert the  trace in tensor space to the formula in terms of indices using 
\bea      
&& \text{Tr}_{V_{1}^{\otimes m} \otimes V_{2}^{\otimes m} \otimes  V_{3}^{\otimes m}} \left[  Z^{\otimes m}
 \bar{\Lambda}^{\otimes m}  \left( \cL_{ \sigma_{1}}  \otimes \cL_{ \sigma_{2} } \otimes \cL_{ \sigma_{3}}  \right)  \right]  \cr 
&& = \sum_{ I , J , K } \bra{ e^{ i_1 j_1 k_1} \otimes \cdots \otimes e^{ i_m j_m k_m } }  Z^{ \otimes m } \bar \Lambda^{ \otimes m  } 
 (\cL_{\sigma_{1}} \otimes \cL_{\sigma_{2}}  \otimes \cL_{\sigma_{3}} ) \ket { e_{ i_1 j_1 k_1 } \otimes \cdots \otimes  e_{ i_m j_m k_m } } 
 \cr 
 && 
\eea
This translates directly into a diagrammatic form, as given on the left side of the top diagram in Figure \ref{fig:tensor_diagram_3} (see Appendix \ref{app:corr_diagram_interpret} for more details on interpreting this diagrammatic tensor GIO). 
To get the   conjugate operator we take  $\bar{\Lambda} \to \Lambda$, $Z \to \bar{Z}$ and the inverse of the permutations:
\begin{align}
\begin{split}
\label{eqn:tenmod_observable_d_3_conj}
{}&\overline{\mathcal{O}_{\tau_{1}, \tau_{2}, \tau_{3}}(\bar{\Lambda},Z)} 
\\
{}&= \mathcal{O}_{\tau_{1}^{-1}, \tau_{2}^{-1}, \tau_{3}^{-1}}(\Lambda, \bar{Z})
\\ 
{}&= \text{Tr}_{V_{1}^{\otimes m} \otimes V_{2}^{\otimes m} \otimes  V_{3}^{\otimes m}} \left[ \Lambda^{\otimes m}  \bar{Z}^{\otimes m}  \left( \cL_{ \tau_{1}^{-1} } \otimes \cL_{ \tau_{2}^{-1} }  \otimes \cL_{ \tau_{3}^{-1} }  \right) \right] \\
{}&= \sum_{I,J,K}  \bar{Z}_{i_{\tau_{1}^{-1}(1)} j_{\tau_{2}^{-1}(1)} k_{\tau_{3}^{-1}(1)}} \cdots \bar{Z}_{i_{\tau_{1}^{-1}(m)} j_{\tau_{2}^{-1}(m)} k_{\tau_{3}^{-1}(m)}}  \Lambda^{i_{1} j_1 k_1} \cdots \Lambda^{{i_m} j_{m} k_{m}}  \,,
\end{split}
\end{align} 
where the sum over $I, J, K$ implies the sum over the full set of $i\,, j$ and $k$ indices (e.g. $I = \{i_{1}, \dots i_{m} \}$) and the bar on the fields indicates a covariant field. The conjugate operator has the diagrammatic form in the top-right of Figure \ref{fig:tensor_diagram_3}. The tensor field partition function is 
\begin{equation}
\Sigma = \int [dZ][d\bar{Z}] e^{-\frac{1}{2}\sum_{i,j,k} \bar{Z}_{ijk} Z^{ijk}} \,,
\end{equation}
and
\begin{equation}
\langle  \bar{Z}_{j_{1}j_{2}j_{3}}Z^{i_{1}i_{2}i_{3}} \rangle = \delta^{i_{1}}_{j_{1}}\delta^{i_{2}}_{j_{2}}\delta^{i_{3}}_{	j_{3}}
\end{equation}
is the two-point function of field variables $Z, \bar{Z}$.
The GIO correlator with classical tensor fields is therefore
\begin{align}
\langle {}& \mathcal{O}_{\sigma_{1}, \sigma_{2}, \sigma_{3}}(\bar{\Lambda}, Z) \overline{\mathcal{O}_{\tau_{1}, \tau_{2}, \tau_{3}}(\bar{\Lambda}, Z)} \rangle \nonumber
\\
{}&= \frac{1}{\Sigma} \int [dZ][d\bar{Z}] \mathcal{O}_{\sigma_{1}, \sigma_{2}, \sigma_{3}}(\bar{\Lambda}, Z) \overline{\mathcal{O}_{\tau_{1}, \tau_{2}, \tau_{3}}(\bar{\Lambda}, Z)}  e^{-\frac{1}{2}\sum_{i,j,k} \bar{Z}_{ijk} Z^{ijk}} \label{eqn:tenmod_two_point_func}
\\
\begin{split}
{}&= \sum_{\substack{I, J, K \\ I', J', K'}} \langle     \bar{\Lambda}_{i'_{\sigma_{1}(1)} j'_{\sigma_{2}(1)} k'_{\sigma_{3}(1)}} \cdots \bar{\Lambda}_{{i'_{\sigma_{1}(m)}} j'_{\sigma_{2}(m)} k'_{\sigma_{3}(m)}}  Z^{i'_{1} j'_{1} k'_{1}} \cdots Z^{i'_{m} j'_{m} k'_{m}} 
\\
{}& \qquad \qquad \qquad  \times \bar{Z}_{i_{\tau_{1}^{-1}(1)} j_{\tau_{2}^{-1}(1)} k_{\tau_{3}^{-1}(1)}} \cdots \bar{Z}_{i_{\tau_{1}^{-1}(m)} j_{\tau_{2}^{-1}(m)} k_{\tau_{3}^{-1}(m)}} \Lambda^{i_{1} j_1 k_1}  \cdots \Lambda^{{i_m} j_{m} k_{m}}\rangle
\end{split}
\\
\begin{split}
{}&= \sum_{\substack{I, J, K \\ I', J', K'}}  \sum_{\gamma \in S_{m}} \bar{\Lambda}_{i'_{\sigma_{1}(1)} j'_{\sigma_{2}(1)} k'_{\sigma_{3}(1)}} \cdots \bar{\Lambda}_{{i'_{\sigma_{1}(m)}} j'_{\sigma_{2}(m)} k'_{\sigma_{3}(m)}} \Lambda^{i_{1} j_1 k_1}  \cdots \Lambda^{{i_m} j_{m} k_{m}} 
\\
&{} \qquad \qquad \qquad \qquad \qquad \qquad \times \delta^{i'_{1}}_{i_{\gamma \tau_{1}^{-1}(1)}} \delta^{j'_{1}}_{j_{\gamma \tau_{2}^{-1}(1)}} \delta^{k'_{1}}_{k_{\gamma \tau_{3}^{-1}(1)}} \dots 
\delta^{i'_{m}}_{i_{\gamma \tau_{1}^{-1}(m)}} \delta^{j'_{m}}_{j_{\gamma \tau_{2}^{-1}(m)}} \delta^{k'_{m}}_{k_{\gamma \tau_{3}^{-1}(m)}} 
\end{split}
\\
\begin{split}
{}&= \sum_{ \substack{I, J, K \\ \gamma \in S_{m}}} \bar{\Lambda}_{i_{\sigma_{1}\gamma \tau_{1}^{-1}(1)} j_{\sigma_{2}\gamma \tau_{2}^{-1}(1)} k_{\sigma_{3}\gamma\tau_{3}^{-1}(1)}} \cdots \bar{\Lambda}_{{i_{\sigma_{1}\gamma \tau_{1}^{-1}(m)}} j_{\sigma_{2}\gamma \tau_{2}^{-1}(m)} k_{\sigma_{3}\gamma \tau_{3}^{-1}(m)}} 
\\
{}& \qquad \qquad \qquad  \qquad \qquad \qquad  \qquad \qquad \qquad \qquad  \qquad \qquad \qquad  \times \Lambda^{{i_{1}} j_{1} k_{1}} \cdots \Lambda^{{i_m} j_{m} k_{m}} \,.
\end{split} \label{eqn:tenmod_two_point_func_lambdas}
\end{align}
where the expectation value on field variables $Z$ and $\bar{Z}$ has been expanded into a sum over permutations $\gamma$ via Wick's theorem, and Kronecker equivariance was used to arrive at \eqref{eqn:tenmod_two_point_func_lambdas}. Therefore we have in trace notation
\begin{multline}
\langle \mathcal{O}_{\sigma_{1}, \sigma_{2}, \sigma_{3}}(\bar{\Lambda}, Z) \overline{\mathcal{O}_{\tau_{1}, \tau_{2}, \tau_{3}}(\bar{\Lambda}, Z)} \rangle= 
\\ 
\sum_{\gamma \in S_{m}} \text{Tr}_{V_{1}^{\otimes m} \otimes V_{2}^{\otimes m} \otimes  V_{3}^{\otimes m}} \left( \Lambda^{\otimes m} \bar{\Lambda}^{\otimes m}   (\cL_{ \sigma_{1} }  \otimes \cL_{ \sigma_{2} }  \otimes \cL_{ \sigma_{3} } ) (\cL_{ \gamma})^{\otimes 3} (\cL_{ \tau_{1}^{-1} } \otimes \cL_{ \tau_{2}^{-1}} \otimes \cL_{\tau_{3}^{-1}})\right)\,.
\end{multline}
A diagram representing this tensor, two-point GIO correlator and its equivalent description in terms of the $\gamma$ permutations using Wick's theorem, is given in Figure \ref{fig:tensor_diagram_3}.
\begin{figure}[htb!]
\begin{center}
\includegraphics[width=11cm, height=14cm]{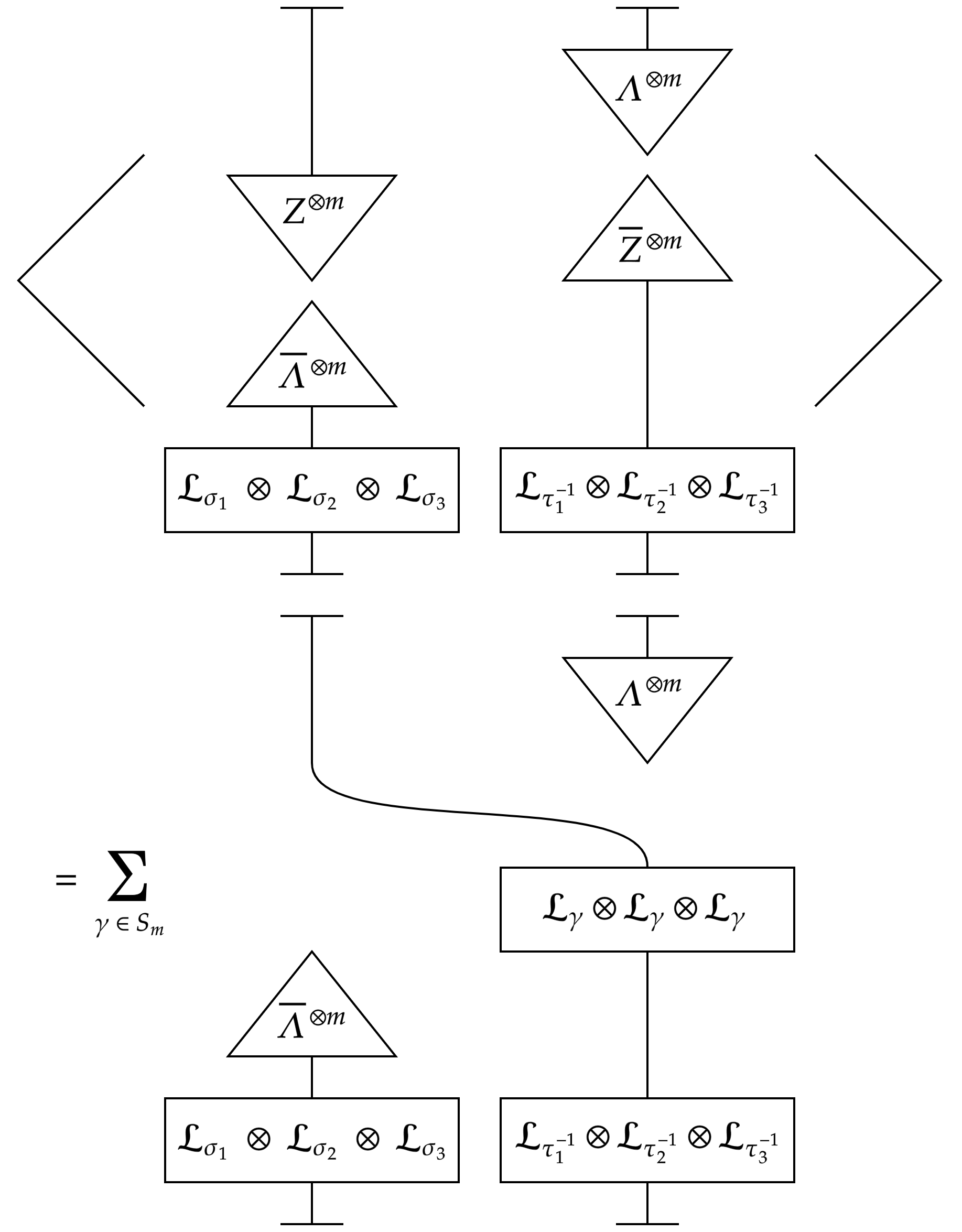}
\end{center}
\caption{The diagrammatic representation of the tensor model two-point function of gauge invariant operators with classical tensor witnesses. This example takes $\Lambda$ and $Z$ as contravariant tensors and their conjugates as covariant tensors, where all fields are rank-3 and displayed using triangles. The figure implies (via their associated linear operators) that each of the three indices on every term in the tensor product, is permuted by $\sigma_{1},\sigma_{2}$ and $ \sigma_{3}$ (or $\tau_{1}^{-1}, \tau_{2}^{-1}$ and $\tau_{3}^{-1}$) respectively. Once again, applying Wick's theorem introduces a permutation $\gamma$ that is summed over, accompanied by a twist in the indices.}
\label{fig:tensor_diagram_3}
\end{figure}
 Next we simplify \eqref{eqn:tenmod_two_point_func_lambdas} with the introduction of delta functions and elements $\alpha_{l}$ for $l = 1,2,3$,
\begin{align}
\langle \mathcal{O}_{\sigma_{1}, \sigma_{2}, \sigma_{3}}(\bar{\Lambda}, Z) {}& \overline{\mathcal{O}_{\tau_{1}, \tau_{2}, \tau_{3}}(\bar{\Lambda}, Z)} \rangle  \nonumber
\\
\begin{split}
{}&= \sum_{\substack{ I, J, K \\ \alpha_{l}, \gamma \in S_{m}}}  \bar{\Lambda}_{i_{\alpha_{1}(1)} j_{\alpha_{2}(1)} k_{\alpha_{3}(1)}} \cdots
 \bar{\Lambda}_{{i_{\alpha_{1}(m)}} j_{\alpha_{2}(m)} k_{\alpha_{3}(m)}} 
\\
{}& \qquad \qquad \qquad \qquad \qquad \qquad \times \Lambda^{i_{1} j_{1} k_{1}}  \cdots \Lambda^{i_{m} j_{m} k_{m}}  \prod_{l=1}^{3} \delta(\alpha_{l}^{-1} \sigma_{l} \gamma \tau_{l}^{-1})\,,
\end{split}
\label{eqn:tenmod_two_point_func_pt2}
\end{align}
where $\sum_{\alpha_{l}}$ implies a sum over $\alpha_{1}, \alpha_{2}$ and $\alpha_{3}$.
The sum of the classical tensor fields, $\Lambda$ and $\bar{\Lambda}$, over their index sets $I,J,K$ now takes the form of an operator as defined in equation \eqref{eqn:tenmod_observable_d_3} 
i.e.  
\begin{align}
\mathcal{O}_{\alpha_{1}, \alpha_{2}, \alpha_{3}}( \bar{\Lambda}, \Lambda) 
&= 
\sum_{I, J, K}  \bar{\Lambda}_{i_{\alpha_{1}(1)} j_{\alpha_{2}(1)} k_{\alpha_{3}(1)}} \cdots
 \bar{\Lambda}_{{i_{\alpha_{1}(m)}} j_{\alpha_{2}(m)} k_{\alpha_{3}(m)}} \Lambda^{i_{1} j_{1} k_{1}}  \cdots \Lambda^{i_{m} j_{m} k_{m}}  
\\
&= \text{Tr}_{V_{1}^{\otimes m} \otimes V_{2}^{\otimes m} \otimes  V_{3}^{\otimes m}} \left(\Lambda^{\otimes m}  \bar{\Lambda}^{\otimes m} (\mathcal{L}_{\alpha_{1}} \otimes \mathcal{L}_{\alpha_{2}} \otimes \mathcal{L}_{\alpha_{3}})   \right)
\end{align}
such that
\begin{equation}
\label{eqn:tenmod_two_point_func_pt3}
\langle \mathcal{O}_{\sigma_{1}, \sigma_{2}, \sigma_{3}}(\bar{\Lambda}, Z) \overline{\mathcal{O}_{\tau_{1}, \tau_{2}, \tau_{3}}(\bar{\Lambda}, Z)} \rangle    \\
= \sum_{ \alpha_{l}, \gamma \in S_{m}} \mathcal{O}_{\alpha_{1}, \alpha_{2}, \alpha_{3}}(\bar{\Lambda},\Lambda)
\prod_{l=1}^{3} \delta(\alpha_{l}^{-1} \sigma_{l} \gamma \tau_{l}^{-1} ) \,.
\end{equation}
The expression is further simplified by introducing $\gamma = \gamma_{1} \gamma_{2}$ and, by using the invariance of the GIO in classical tensor fields, $\mathcal{O}_{\alpha_{1},\alpha_{2},\alpha_{3}} = \mathcal{O}_{\mu_{2}\alpha_{1}\mu_{1},\mu_{2}\alpha_{2}\mu_{1},\mu_{2}\alpha_{3}\mu_{1}}$, permutations $\mu_{2}$ and $\mu_{1}$ can be included in the correlator (see Appendix \ref{app:delta_function_sums} for more details)
\begin{align}
\begin{split}
\label{eqn:tenmod_two_point_func_pt4}
\langle \mathcal{O}_{\sigma_{1}, \sigma_{2}, \sigma_{3}}(\bar{\Lambda}, Z)& \overline{\mathcal{O}_{\tau_{1}, \tau_{2}, \tau_{3}}(\bar{\Lambda}, Z)} \rangle  \\
&= \frac{1}{m!} \sum_{ \alpha_{l}, \gamma_{1}, \gamma_{2} \in S_{m}} \mathcal{O}_{\alpha_{1}, \alpha_{2}, \alpha_{3}}(\bar{\Lambda},\Lambda)
\prod_{l=1}^{3} \delta( \alpha_{l}^{-1}  \sigma_{l} \gamma_{1} \gamma_{2} \tau_{l}^{-1} ) 
\\
& = \frac{1}{(m!)^3} \sum_{ \alpha_{l}, \gamma_{1}, \gamma_{2}, \mu_{1},\mu_{2} \in S_{m}} \mathcal{O}_{\alpha_{1}, \alpha_{2}, \alpha_{3}}(\bar{\Lambda},\Lambda)
\prod_{l=1}^{3} \delta(\mu_{2} \alpha_{l}^{-1} \mu_{1} \sigma_{l} \gamma_{1} \gamma_{2} \tau_{l}^{-1} ) \,.
\end{split}
\end{align}
Using the cyclicity of permutations in the delta functions, in explicit tensor product form the correlator is
\begin{multline}
\label{eqn:tenmod_two_point_func_pt7}
\langle \mathcal{O}_{\sigma_{1}, \sigma_{2}, \sigma_{3}}(\bar{\Lambda}, Z) \overline{\mathcal{O}_{\tau_{1}, \tau_{2}, \tau_{3}}(\bar{\Lambda}, Z)} \rangle = \frac{1}{(m!)^3} \sum_{ \alpha_{l}, \gamma_{1}, \gamma_{2}, \mu_{1},\mu_{2} \in S_{m}}  \mathcal{O}_{\alpha_{1}, \alpha_{2}, \alpha_{3}}(\bar{\Lambda},\Lambda) 
\\
\times \bm{\delta}_{3} \left( (\alpha_{1}^{-1} \otimes \alpha_{2}^{-1} \otimes \alpha_{3}^{-1}) (\mu_{1}^{\otimes 3}) (\sigma_{1} \otimes \sigma_{2} \otimes \sigma_{3}) (\gamma_{1}^{\otimes 3})(\gamma_{2}^{\otimes 3}) (\tau_{1}^{-1} \otimes \tau_{2}^{-1} \otimes \tau_{3}^{-1})(\mu_{2}^{\otimes 3}) \right) \,,
\end{multline}
where
\begin{equation}
\bm{\delta}_{3}(\beta_{1} \otimes \beta_{2} \otimes \beta_{3}) \equiv \prod_{l}^{3}\delta(\beta_{l})\, \quad \text{for} \quad \beta_{l} \in S_{m} \,.
\end{equation}
The two-point function now takes the form of an operator strictly in the classical tensor fields multiplying a delta function of tensored permutations of $S_{m}$. This can be further evaluated  using the elements of the Kronecker PCA, $\mathcal{K}_{\text{un}}(m)$. There are two permutation triples with left-right $S_{m} \times S_{m}$ diagonal permutations acting on them in the delta function of \eqref{eqn:tenmod_two_point_func_pt7}. Bringing the sums over $\mu_{i}, \gamma_{i}$ inside, the correlator is equivalently rewritten as
\begin{multline}
\label{eqn:tenmod_two_point_func_alg_rib_basis}
\langle \mathcal{O}_{\sigma_{1}, \sigma_{2}, \sigma_{3}}(\bar{\Lambda}, Z) \overline{\mathcal{O}_{\tau_{1}, \tau_{2}, \tau_{3}}(\bar{\Lambda}, Z)} \rangle  = \frac{1}{(m!)^{3}} \sum_{ \alpha_{l} \in S_{m}}  \mathcal{O}_{\alpha_{1}, \alpha_{2}, \alpha_{3}}(\bar{\Lambda},\Lambda) 
\\
\times \bm{\delta}_{3} \Bigg( (\alpha_{1}^{-1} \otimes \alpha_{2}^{-1} \otimes \alpha_{3}^{-1}) 
\left[ \sum_{\mu_{1}, \gamma_{1}} \mu_{1} \sigma_{1} \gamma_{1} \otimes \mu_{1} \sigma_{2} \gamma_{1} \otimes\mu_{1} \sigma_{3} \gamma_{1}  \right] 
\\ 
\times \left[ \sum_{\mu_{2}, \gamma_{2}} \gamma_{2} \tau_{1}^{-1} \mu_{2} \otimes  \gamma_{2}  \tau_{2}^{-1} \mu_{2} \otimes \gamma_{2} \tau_{3}^{-1} \mu_{2} \right] \Bigg) \,.
\end{multline}
These square bracket terms are just the graph basis elements of \eqref{eqn:tc_tens_e_basis_1} up to the normalisation factors and choice of graph label. Since the GIO are functions of equivalence class, we can equally write an operator defined by permutations $(\sigma_{1}, \sigma_{2}, \sigma_{3})$, as an operator defined by $(\sigma^{(r)}_{1}, \sigma^{(r)}_{2}, \sigma^{(r)}_{3})$ instead. This $(r)$ notation implies any triple of permutations from the orbit $\text{Orb}_{\mathcal{K}}(r)$, i.e. from equivalence class $r$. Hence, associating the $\sigma_{i}$ and $\tau_{i}$ triples with graph labels $r$ and $s$ respectively, the correlator is now defined in reference to this basis as
\begin{align}
\label{eqn:tenmod_two_point_func_alg_rib_basis_2}
\langle \mathcal{O}_{E_{r}}(\bar{\Lambda}, Z) \overline{\mathcal{O}_{E_{s}}(\bar{\Lambda}, Z)} \rangle 
&=\langle \mathcal{O}_{\sigma_{1}, \sigma_{2}, \sigma_{3}}(\bar{\Lambda}, Z) \overline{\mathcal{O}_{\tau_{1}, \tau_{2}, \tau_{3}}(\bar{\Lambda}, Z)} \rangle 
\\
& \equiv \langle \mathcal{O}_{\sigma_{1}^{(r)}, \sigma_{2}^{(r)}, \sigma_{3}^{(r)}}(\bar{\Lambda}, Z) \overline{\mathcal{O}_{\tau_{1}^{(s)}, \tau_{2}^{(s)}, \tau_{3}^{(s)}}(\bar{\Lambda}, Z)} \rangle 
\\
&= m! \sum_{ \alpha_{l} \in S_{m}}  \mathcal{O}_{\alpha_{1}, \alpha_{2}, \alpha_{3}}( \bar{\Lambda},\Lambda) \bm{\delta}_{3} \left( (\alpha_{1}^{-1} \otimes \alpha_{2}^{-1} \otimes \alpha_{3}^{-1}) E_{r} E_{s'} \right) \,.
\end{align}
The prime notation on $E_{s'}$, is used to indicate that this basis algebra element was formed from an triple of inverse permutations, namely $(\tau_{1}^{-1}, \tau_{2}^{-1}, \tau_{3}^{-1})$. Decomposing the $\alpha_{l} \in S_{m}$ sums into sums over equivalence classes (labelled by $p$) and sums over elements in each class (labelled by $d$)\footnote{In graph nomenclature, this is decomposing the sum into a sum over all unique graphs and a sum over the elements in the same orbit of each graph.}, means the $\alpha^{-1}$ triple can also be exchanged for a Kronecker PCA basis element
\begin{align}
\MoveEqLeft[1] \langle \mathcal{O}_{E_{r}}(\bar{\Lambda}, Z) \overline{\mathcal{O}_{E_{s}}(\bar{\Lambda}, Z)} \rangle \nonumber
\\
\begin{split}
={}& m! \sum_{p} \Biggl[\sum_{d \in \text{Orb}_{\mathcal{K}}(p)} \mathcal{O}_{\alpha_{1}^{(p)}, \alpha_{2}^{(p)}, \alpha_{3}^{(p)}} (\bar{\Lambda},\Lambda)  
\\
&{} \qquad \qquad \qquad \qquad \times  \bm{\delta}_{3} \left( \left( (\alpha_{1}^{-1})^{(p')}(d) \otimes (\alpha_{2}^{-1})^{(p')}(d) \otimes (\alpha_{3}^{-1})^{(p')}(d) \right) E_{r}E_{s'}\right) \Biggr] \label{eqn:tenmod_two_point_func_alg_rib_basis_3}
\end{split}
\\
\begin{split}
={}& m! \sum_{p} \Biggl( \mathcal{O}_{\alpha_{1}^{(p)}, \alpha_{2}^{(p)}, \alpha_{3}^{(p)}} (\bar{\Lambda},\Lambda)
\\ 
& \qquad \qquad \qquad \times  \bm{\delta}_{3} \Biggl[ \sum_{d \in \text{Orb}_{\mathcal{K}}(p)} \left( (\alpha_{1}^{-1})^{(p')}(d) \otimes (\alpha_{2}^{-1})^{(p')}(d) \otimes (\alpha_{3}^{-1})^{(p')}(d) \right) E_{r}E_{s'}\Biggr] \Biggr)\,, \label{eqn:tenmod_op_in_witness_moved_outside_sum}
\end{split}
\end{align}
where the orbits/equivalence classes in this sum decomposition are labelled by $p$, and again, the prime notation has been adopted for the inverse triple $(\alpha^{-1}_{1},\alpha^{-1}_{2},\alpha^{-1}_{3})$\footnote{We may think of the sum over $d \in \text{Orb}_{\mathcal{K}}(p)$ as running over all permutation triples in a given orbit $p$. The implied effect this has on triples $(\alpha_{1}^{-1}(d),\alpha_{2}^{-1}(d),\alpha_{3}^{-1}(d))$, is to run over all triples in the $p'$ orbit, which is the orbit generated by inverse permutations: $\alpha_{i}^{-1}$.}. Hence,
\begin{equation}
\label{eqn:tenmod_tensor_delta}
\langle \mathcal{O}_{E_{r}}(\bar{\Lambda}, Z) \overline{\mathcal{O}_{E_{s}}(\bar{\Lambda}, Z)} \rangle = \sum_{p} m!|\text{Orb}_{\mathcal{K}}(p')|  \mathcal{O}_{\alpha_{1}^{(p)}, \alpha_{2}^{(p)}, \alpha_{3}^{(p)}} (\bar{\Lambda},\Lambda) \bm{\delta}_{3} \left(E_{p'}E_{r}E_{s'}\right)\,,
\end{equation}
Note that since the GIO are invariant under the equivalence relation \eqref{eqn:tc_obs_invariance}, the operator $\mathcal{O}_{\alpha_{1}^{(p)}, \alpha_{2}^{(p)}, \alpha_{3}^{(p)}}$ is the same for all $d \in \text{Orb}_{\mathcal{K}}(p)$, and so the $d$-label on the operator is omitted. For this same reason, it is free to move outside the sum over $d$ as seen in \eqref{eqn:tenmod_op_in_witness_moved_outside_sum}. From \S 2.3 of \cite{BenGeloun:2020yau}, this delta function of graph basis elements can also be simplified.  The following multiplication rule holds 
\begin{equation}
\label{eqn:tenmod_alg_mult_delta_identity}
E_{r}E_{s'} = \sum_{t = 1}^{|\text{Col}(m)|} C^{t;\mathcal{K}}_{rs'} E_{t} \quad \implies \quad \bm{\delta}_{3}(E_{p'} E_{r} E_{s'}) = \sum_{t=1}^{|\text{Col}(m)|} C^{t;\mathcal{K}}_{rs'} \bm{\delta}_{3}(E_{p'} E_{t})\,.
\end{equation}
where $C^{t;\mathcal{K}}_{rs'}$ is the $\mathcal{K}_{\text{un}}(m)$ structure constant in the graph basis. Setting $\sigma^{(p)}(a) = \sigma_{1}^{(p)}(a) \otimes \sigma_{2}^{(p)}(a) \otimes \sigma_{3}^{(p)}(a)$ for convenience, the inner product $g$ on $\mathbb{C}(S_{m})^{\otimes 3}$ can be defined as
\begin{equation}
\label{eqn:tc_inner_product}
g(\alpha^{(p)}(a),\beta^{(t)}(b)) = \bm{\delta}_{3}\left(\left(\alpha^{-1}\right)^{(p)}(a) \beta^{(t)}(b)  \right) \,,
\end{equation}
and one can identify that 
\begin{align}
\label{eqn:tc_delta_of_two_e_bases}
\bm{\delta}_{3}(E_{p'} E_{t}) 
&= \frac{1}{|\text{Orb}_{\mathcal{K}}(p')||\text{Orb}_{\mathcal{K}}(t)|}\sum_{a \in \text{Orb}_{\mathcal{K}}(p')} \sum_{b \in \text{Orb}_{\mathcal{K}}(t)} \bm{\delta}_{3}\left((\alpha^{-1})^{(p')}(a)\beta^{(t)}(b) \right) 
\\
&= \frac{1}{|\text{Orb}_{\mathcal{K}}(p')||\text{Orb}_{\mathcal{K}}(t)|}\sum_{a \in \text{Orb}_{\mathcal{K}}(p')} \sum_{b \in \text{Orb}_{\mathcal{K}}(t)} g\left(\alpha^{(p')}(a),\beta^{(t)}(b) \right) 
\\
&= \frac{1}{|\text{Orb}_{\mathcal{K}}(p')||\text{Orb}_{\mathcal{K}}(t)|}\sum_{a \in \text{Orb}_{\mathcal{K}}(p')} \sum_{b \in \text{Orb}_{\mathcal{K}}(t)} \delta_{p't}\delta_{ab}
\\
&= \frac{1}{|\text{Orb}_{\mathcal{K}}(p')|}\delta_{p't} \label{eqn:tc_delta_of_two_e_bases_final}\,.
\end{align}
Combining equations \eqref{eqn:tc_delta_of_two_e_bases_final} and \eqref{eqn:tenmod_alg_mult_delta_identity} then substituting this into equation \eqref{eqn:tenmod_tensor_delta}, the final expression for the correlator in the graph basis becomes
\begin{align}
\langle \mathcal{O}_{E_{r}}(\bar{\Lambda}, Z) \overline{\mathcal{O}_{E_{s}}(\bar{\Lambda}, Z)} \rangle 
&= \sum_{p}\sum_{t=1}^{|\text{Col}(m)|} m!|\text{Orb}_{\mathcal{K}}(p')|  \mathcal{O}_{\alpha_{1}^{(p)}, \alpha_{2}^{(p)}, \alpha_{3}^{(p)}} (\bar{\Lambda}, \Lambda)  C^{t;\mathcal{K}}_{rs'} \frac{1}{|\text{Orb}_{\mathcal{K}}(p')|}\delta_{p't}   
\\
&=  \sum_{p} m! C^{p';\mathcal{K}}_{rs'}  \mathcal{O}_{\alpha_{1}^{(p)}, \alpha_{2}^{(p)}, \alpha_{3}^{(p)}} (\bar{\Lambda},\Lambda	) 
\\
&= m! \sum_{p}  C^{p';\mathcal{K}}_{rs'}  \mathcal{O}_{E_{p}} (\bar{\Lambda},\Lambda	) \,,
\end{align}
where the final line identifies the $\alpha_{i}^{(p)}$ triple with algebra element $E_{p}$ (see equation \eqref{eqn:tenmod_two_point_func_alg_rib_basis_2} for similar application). This graph basis result 
\begin{equation}
\label{eqn:tc_boxed_comb_basis_corr}
\boxed{
\langle \mathcal{O}_{E_{r}}(\bar{\Lambda}, Z) \overline{\mathcal{O}_{E_{s}}(\bar{\Lambda}, Z)} \rangle = m! \sum_{p}  C^{p' ;\mathcal{K}}_{rs'}  \mathcal{O}_{E_{p}} (\bar{\Lambda},\Lambda	)
}
\end{equation}
is again in keeping with the previous cases: a linear combination of operators composed of the classical (tensor) fields with multiplicative structure constant factors from the corresponding PCA. Hence, given a set of basis labels $(E_{r}, E_{s}, E_{p})$, these structure constants can be computed. Note that for the graph basis of $\mathcal{K}_{\text{un}}(m)$ used, the structure constant appearing in the final expression, $C^{p' ;\mathcal{K}}_{rs'}$, is an integer as shown in \cite{BenGeloun:2020yau}.


\subsection{Fourier/\texorpdfstring{$Q$}{Q}  basis}
\label{ss:tensor_fourier_basis}
This section utilises the Fourier basis to write the correlator in terms of representation theoretic quantities. The Fourier basis we use is referred to as the ``$Q_{\text{un}}$-basis of $\mathcal{K}_{\text{un}}(m)$" in \cite{BenGeloun:2017vwn} where again, ``un" stands for ``ungauged". These $Q_{\text{un}}$ matrix basis elements, transformed from the combinatorial basis in \eqref{eqn:tc_kn_algebra_tensor}, are given by\footnote{Explicit transformation between combinatorial/permutation and representation/Fourier basis (gauged form) is given by equation (39) of \cite{BenGeloun:2017vwn}. It can also be derived by starting with the basis of algebra $\mathbb{C}(S_{m})$, then forming a tensor product, see Appendix B1 of \cite{BenGeloun:2017vwn}. Finally, direct transformation between graph and $Q$-basis is given in Appendix B of \cite{BenGeloun:2020yau}.} 
\begin{equation}
\label{eqn:tc_Qun_basis_element_tensors}
Q^{R,S,T}_{\text{un}; \lambda, \rho} = \kappa_{R,S,T} \sum_{\sigma_{l} \in S_{m}} \sum_{i_{l}, j_{l}} C^{R,S;T, \lambda}_{i_{1}, i_{2}, i_{3}} C^{R,S;T, \rho}_{j_{1}, j_{2}, j_{3}} D_{i_{1} j_{1}}^{R}(\sigma_{1}) D_{i_{2} j_{2}}^{S}(\sigma_{2}) D_{i_{3} j_{3}}^{T}(\sigma_{3}) \sigma_{1} \otimes \sigma_{2} \otimes \sigma_{3}
\end{equation}
where the sums over $\sigma_{l}, i_{l},j_{l}$ are taken to mean summing over all $l$ values: $l = 1$, $2$, $3$. Here $\kappa_{R,S,T} = \frac{d(R)d(S)d(T)}{(m!)^3}$, where $d(A) $ is the dimension of representation $A$ of $S_{m}$. In other words, the matrices $D^{A}(\sigma)$ of representation $A$ of permutation element $\sigma \in S_{m}$, are $d(A) \times d(A)$ in size. The $ C^{R,S;T, \lambda}_{i_{1}, i_{2}, i_{3}}$ terms are Clebsch-Gordon coefficients, which arise from the decomposition of a tensor product representation into the space of irreps tensored by the multiplicity space of the irreps. Indices $i_{1}, j_{1}$ run from 1 to $d(R)$, $i_{2}, j_{2}$ from 1 to $d(S)$ and $i_{3}, j_{3}$ from 1 to $d(T)$, while $\lambda, \rho \in [1, \mathsf{C}(R,S,T)]$ where $\mathsf{C}(R,S,T)$ is the Kronecker coefficient: the multiplicity of the irrep $T$ in
the tensor product of the irreps $R$ and $S$. Such a decomposition may be  written as 
\begin{equation}
\label{eqn:tc_tensor_prod_decomp_R_S}
V_{R} \otimes V_{S} = \bigoplus_{T \vdash m} V_{T} \otimes V_{\lambda}
\end{equation}
with $V_{R}$, $V_{S}$ and $V_{T}$ the vector space representations for $S_{m}$, and $V_{\lambda}$ the multiplicity space. Multiplication of two $Q_{\text{un}}$-basis elements gives
\begin{equation}
\label{eqn:tc_Qun_multiplication}
Q^{R,S,T}_{\text{un}; \lambda_{1}, \rho_{1}}  Q^{R',S',T'}_{\text{un}; \lambda_{2}, \rho_{2}} = \delta_{RR'}\delta_{SS'} \delta_{TT'}\delta_{\rho_{1} \lambda_{2}} Q^{R,S,T}_{\text{un}; \lambda_{1},\rho_{2}} \,,
\end{equation}
and the stability of the $Q_{\text{un}}$-basis elements under left and right actions of the diagonal group algebra $\text{Diag} [\mathbb{C}(S_{n})] $ elements implies\footnote{See appendix B of \cite{BenGeloun:2017vwn} for proofs.}
\begin{equation}
\label{eqn:tc_Qun_right_left_equiv}
(\gamma_{1}^{\otimes 3})Q^{R,S,T}_{\text{un}; \lambda, \rho} (\gamma_{2}^{\otimes 3}) = Q^{R,S,T}_{\text{un}; \lambda, \rho} \,.
\end{equation}
As for the operators, they take Fourier form
\begin{align}
\label{eqn:tc_fourier_gio_form_for_tensors}
\mathcal{O}^{R,S,T}_{\lambda, \rho}(\bar{\Lambda}, Z) = \sum_{\sigma_{1}, \sigma_{2}, \sigma_{3}} \bm{\delta}_{3} \left(Q^{R,S,T}_{\text{un}; \lambda, \rho} (\sigma_{1}^{-1} \otimes \sigma_{2}^{-1} \otimes \sigma_{3}^{-1}) \right) \mathcal{O}_{\sigma_{1},\sigma_{2},\sigma_{3}}(\bar{\Lambda}, Z)
\end{align}
where $\mathcal{O}_{\sigma_{1} , \sigma_{2}, \sigma_{3}}(\bar{\Lambda}, Z)$ is defined in \eqref{eqn:tenmod_observable_d_3}, while the conjugate Fourier operator is
\begin{align}
\label{eqn:tc_fourier_gio_form_for_tensors_conj}
\overline{\mathcal{O}^{R',S',T'}_{\lambda, \rho}(\bar{\Lambda}, Z)} 
{}&= \sum_{\sigma_{1}, \sigma_{2}, \sigma_{3}} \bm{\delta}_{3} \left(Q^{R',S',T'}_{\text{un}; \lambda, \rho} (\sigma_{1}^{-1} \otimes \sigma_{2}^{-1} \otimes \sigma_{3}^{-1}) \right) \overline{\mathcal{O}_{\sigma_{1},\sigma_{2},\sigma_{3}}(\bar{\Lambda}, Z)}
\\
&{}= \sum_{\sigma_{1}, \sigma_{2}, \sigma_{3}} \bm{\delta}_{3} \left(Q^{R',S',T'}_{\text{un}; \rho, \lambda} (\sigma_{1}\otimes \sigma_{2} \otimes \sigma_{3}) \right) \overline{\mathcal{O}_{\sigma_{1},\sigma_{2},\sigma_{3}}(\bar{\Lambda}, Z)}\label{eqn:tc_fourier_gio_form_mult_indices_swapped}
\end{align}
with $\overline{\mathcal{O}_{\sigma_{1},\sigma_{2},\sigma_{3}}(\bar{\Lambda}, Z)}$ defined in \eqref{eqn:tenmod_observable_d_3_conj}. Using fact $D^{R}_{ij}(\sigma) = D^{R}_{ji}(\sigma^{-1})$ in equation \eqref{eqn:tc_Qun_basis_element_tensors}, the multiplicity indices $\lambda, \rho$ have swapped positions in \eqref{eqn:tc_fourier_gio_form_mult_indices_swapped}, accompanied by the inverse permutations of the delta function. The two-point function in this basis is then
\begin{align}
{}&\left\langle  \mathcal{O}^{R,S,T}_{\lambda_{1}, \rho_{1}}(\bar{\Lambda}, Z) \overline{\mathcal{O}^{R',S',T'}_{\lambda_{2}, \rho_{2}}(\bar{\Lambda}, Z)}  \right\rangle \nonumber
\\
\begin{split}
{}&= \left\langle \sum_{\sigma_{1}, \sigma_{2}, \sigma_{3}} \bm{\delta}_{3} \left(Q^{R,S,T}_{\text{un}; \lambda_{1}, \rho_{1}} (\sigma_{1}^{-1} \otimes \sigma_{2}^{-1} \otimes \sigma_{3}^{-1}) \right) \mathcal{O}_{\sigma_{1},\sigma_{2},\sigma_{3}}(\bar{\Lambda}, Z) \right.  
\\
{}& \qquad \qquad  \qquad \qquad  \left. \times \sum_{\tau_{1}, \tau_{2}, \tau_{3}} \bm{\delta}_{3} \left(Q^{R',S','T}_{\text{un};  \rho_{2}, \lambda_{2}} (\tau_{1} \otimes \tau_{2} \otimes \tau_{3}) \right) \overline{\mathcal{O}_{\tau_{1},\tau_{2},\tau_{3}}(\bar{\Lambda}, Z)} \right\rangle \label{eqn:tc_two_pt_func_tensor_bmf_fourier_basis}
\end{split}
\\
\begin{split}
{}&= \sum_{ \substack{\sigma_{1},\sigma_{2}, \sigma_{3} \\ \tau_{1}, \tau_{2}, \tau_{3}}} \bm{\delta}_{3} \left(Q^{R,S,T}_{\text{un}; \lambda_{1}, \rho_{1}} (\sigma_{1}^{-1} \otimes \sigma_{2}^{-1} \otimes \sigma_{3}^{-1}) \right) 
\\
{}& \qquad \qquad \qquad  \times \bm{\delta}_{3} \left(Q^{R',S','T}_{\text{un}; \rho_{2}, \lambda_{2} } (\tau_{1} \otimes \tau_{2} \otimes \tau_{3}) \right) \left\langle  \mathcal{O}_{\sigma_{1},\sigma_{2},\sigma_{3}}(\bar{\Lambda}, Z) \overline{\mathcal{O}_{\tau_{1},\tau_{2},\tau_{3}}(\bar{\Lambda}, Z)} \right\rangle \label{eqn:tc_two_pt_func_tensor_bmf_fourier_basis_1_5}
\end{split}
\end{align}
From here, the result of \eqref{eqn:tenmod_two_point_func_pt7}, repeated below for convenience,
\begin{multline}
\label{eqn:tenmod_two_point_func_pt7_2nd}
\langle \mathcal{O}_{\sigma_{1}, \sigma_{2}, \sigma_{3}}(\bar{\Lambda}, Z) \overline{\mathcal{O}_{\tau_{1}, \tau_{2}, \tau_{3}}(\bar{\Lambda}, Z)} \rangle = \frac{1}{(m!)^3} \sum_{ \alpha_{l}, \gamma_{1}, \gamma_{2}, \mu_{1},\mu_{2} \in S_{m}}  \mathcal{O}_{\alpha_{1}, \alpha_{2}, \alpha_{3}}(\bar{\Lambda},\Lambda)
\\
\times \bm{\delta}_{3} \left( (\alpha_{1}^{-1} \otimes \alpha_{2}^{-1} \otimes \alpha_{3}^{-1}) (\mu_{1}^{\otimes 3}) (\sigma_{1} \otimes \sigma_{2} \otimes \sigma_{3}) (\gamma_{1}^{\otimes 3})(\gamma_{2}^{\otimes 3}) (\tau_{1}^{-1} \otimes \tau_{2}^{-1} \otimes \tau_{3}^{-1})(\mu_{2}^{\otimes 3}) \right) \,,
\end{multline}
may be utilised by substituting this correlator into the  expression \eqref{eqn:tc_two_pt_func_tensor_bmf_fourier_basis_1_5}. By then summing over the $\sigma_{i}$ and $\tau_{i}$, the tensor product triples of \eqref{eqn:tenmod_two_point_func_pt7_2nd} are replaced by the $Q$-basis elements via the delta functions. As such we have
\begin{multline}
\label{eqn:tc_two_pt_func_tensor_bmf_fourier_basis_2}
\left\langle  \mathcal{O}^{R,S,T}_{\lambda_{1}, \rho_{1}}(\bar{\Lambda}, Z) \overline{\mathcal{O}^{R',S',T'}_{\lambda_{2}, \rho_{2}}(\bar{\Lambda}, Z)}  \right\rangle
= \frac{1}{(m!)^3} \sum_{ \alpha_{l}, \gamma_{1}, \gamma_{2}, \mu_{1},\mu_{2} \in S_{m}}  \mathcal{O}_{\alpha_{1}, \alpha_{2}, \alpha_{3}}(\bar{\Lambda},\Lambda)
\\
\times \bm{\delta}_{3} \left( (\alpha_{1}^{-1} \otimes \alpha_{2}^{-1} \otimes \alpha_{3}^{-1}) (\mu_{1}^{\otimes 3}) Q^{R,S,T}_{\text{un}; \lambda_{1}, \rho_{1}} (\gamma_{1}^{\otimes 3})(\gamma_{2}^{\otimes 3}) Q^{R',S',T'}_{\text{un}; \rho_{2}, \lambda_{2} } (\mu_{2}^{\otimes 3}) \right) \,.
\end{multline}
Using properties \eqref{eqn:tc_Qun_right_left_equiv} and \eqref{eqn:tc_Qun_multiplication} of the $Q_{\text{un}}$-basis,
\begin{align}
\MoveEqLeft[1] \left\langle \mathcal{O}^{R,S,T}_{\lambda_{1}, \rho_{1}}(\bar{\Lambda}, Z) \overline{\mathcal{O}^{R',S',T'}_{\lambda_{2}, \rho_{2}}(\bar{\Lambda}, Z)}  \right\rangle \nonumber 
\\
{}&= m!\sum_{\alpha_{l} \in S_{m}} \mathcal{O}_{\alpha_{1}, \alpha_{2}, \alpha_{3}}(\bar{\Lambda}, \Lambda) \bm{\delta}_{3}\left((\alpha_{1}^{-1} \otimes \alpha_{2}^{-1} \otimes \alpha_{3}^{-1})  Q^{R,S,T}_{\text{un}; \lambda_{1}, \rho_{1}} Q^{R',S',T'}_{\text{un}; \rho_{2}, \lambda_{2} } \right) \\
& =  m! \delta_{RR'} \delta_{SS'} \delta_{TT'} \delta_{\rho_{1} \rho_{2}}\sum_{\alpha_{l} \in S_{m}} \bm{\delta}_{3}\left((\alpha_{1}^{-1} \otimes \alpha_{2}^{-1} \otimes \alpha_{3}^{-1})  Q^{R,S,T}_{\text{un}; \lambda_{1}, \lambda_{2}} \right) \mathcal{O}_{\alpha_{1}, \alpha_{2}, \alpha_{3}}( \bar{\Lambda},\Lambda) 
\\
{}& =  m! \delta_{RR'} \delta_{SS'} \delta_{TT'} \delta_{\rho_{1} \rho_{2}} \mathcal{O}^{R,S,T}_{\text{un};\lambda_{1}, \lambda_{2}}(\bar{\Lambda},\Lambda)\,, &&
\end{align}
where the final equation comes from the operator definition in \eqref{eqn:tc_fourier_gio_form_for_tensors}. The correlator
\begin{equation}
\label{eqn:tc_boxed_fourier_corr_result}
\boxed{
\left\langle \mathcal{O}^{R,S,T}_{\lambda_{1}, \rho_{1}}(\bar{\Lambda}, Z) \overline{\mathcal{O}^{R',S',T'}_{\lambda_{2}, \rho_{2}}(\bar{\Lambda}, Z)}  \right\rangle = m! \delta_{RR'} \delta_{SS'} \delta_{TT'} \delta_{\rho_{1} \rho_{2}} \mathcal{O}^{R,S,T}_{\text{un};\lambda_{1}, \lambda_{2}}(\bar{\Lambda},\Lambda)
}
\end{equation}
is therefore akin to the matrix cases where the final result is also proportional to an operator of the classical witness fields in representation theoretic/Fourier basis.

\clearpage


\section{Summary  and Outlook } 

Representation theoretic orthogonal bases for the complex one-matrix, multi-matrix and tensor models have applications in the AdS/CFT correspondence and more general models of gauge-string duality. It has been understood that these orthogonal bases are related to permutation centralizer algebras and organise many aspects of the combinatorics of matrix and tensor model correlators.  In the context of $\text{CFT}_{4}$, these algebras have been shown to be related to enhanced symmetries in the free field limit \cite{Kimura:2008ac,Mattioli:2016eyp}.  However a direct physical interpretation, in terms of the observables of matrix or tensor models, of the structure constants of these algebras has so far been lacking. In this paper, we showed that the notion of matrix and tensor witness fields, defined in the introduction, allows such a direct interpretation. The Fourier transform from combinatorial bases of the PCAs to representation theoretic Wedderburn-Artin bases lead to generalisations of the orthogonality relations making contact with the super-integrability programme of Morozov and Mironov \cite{Mironov:2017och}. We outline a number of future research directions suggested by the results of this paper.

The problem of identifying the quantum state associated with the operator $\cO_R ( Z )$ in the half-BPS sector of 
$\cN =4$ SYM with $U(N)$ gauge group is related to interesting questions related to the black hole information paradox \cite{IILoss} and also connects with interesting structural properties of the centres of symmetric group algebras \cite{BPSCentres}. The Casimirs of $U(N)$, which can identify the Young diagrams, can be related to asymptotic multipole moments of the gravity fields generated by the LLM geometry \cite{Lin:2004nb} corresponding to the Young diagram $R$.  Using the results of this paper, we can measure the Young diagram labelling a matrix model observable by inspecting the outcome of the 2-point correlator as a function of a matrix coupling. It would be interesting to consider deformations of $\cN=4$ SYM involving the introduction of 4-dimensional background matrix fields. Presumably this would reduce or break the supersymmetry, but would provide an interesting higher dimensional quantum field theory arena to explore the implications of witness fields, with  potential implications for the gravity dual. 
The use of classical matrix fields alongside quantum matrix fields inside observables  has also been used in the context of  coherent state methods for matrix correlator  problems in  $\cN=4 $ SYM \cite{Berenstein:2022srd}.  In a distant corners limit the Young diagram bases of the 2-matrix sector considered in section 3 have been related to coherent state calculations, resulting in new integral formulae for symmetric group characters  \cite{Carlson:2022dot}. A better understanding of the link between coherent state methods and the results in the present paper is likely to be useful in applying  the present results to giant graviton physics.  For example  taking the matrix $A$ in section \ref{ss:back_fields_in_observables} to be another complex field say $Y$ among the $SU(3)$ triplet $\{ Z , Y , X \} $ of complex fields in SYM, the RHS of \eqref{BoxedResult1} contains Schur polynomial functions of $YY^{\dagger}$. Finding an interpretation of such Schur polynomials in the bulk space-time is an interesting problem. A non-trivial goal would be to justify such an interpretation using agreement between  independent calculations in bulk and boundary, e.g. agreements of $3$-point functions  along the lines of \cite{Bissi:2011dc,Caputa:2012yj,Lin:2012ey,Kristjansen:2015gpa,Jiang:2019xdz,Yang:2021kot,Chen:2019gsb,Holguin:2022zii}. 

In this paper we have considered modifications of Gaussian single and multi-matrix actions by adding matrix couplings to the quadratic terms. We have shown that two-point functions calculated as a function of the matrix couplings define structure constants of associative algebras. An interesting question is whether analogous correlators depending on matrix couplings in the quadratic terms lead to  associative algebra structure constants when the Gaussian actions  are perturbed with interaction terms. Commutator interaction terms are of interest in AdS/CFT (see  \cite{Cook:2007et,Filev:2013pza,Berenstein:2008eg})  and also in the context of emergence geometry in the context of the  IKKT IIB matrix model \cite{Battista:2022vvl,Ishibashi:1996xs}. Establishing associativity or characterising  the departure from associativity  in interacting matrix models are interesting projects.

The formula  \eqref{trace-basis-corr}  for the 2-point function in the one-complex matrix in the combinatorial basis has an interpretation as a  combinatorial  model of gauge-string duality \cite{Itzykson,deMelloKoch:2010hav,Brown:2010af} closely analogous to the duality between partition functions of two-dimensional Yang-Mills theory and combinatorial models of Euler characteristics of Hurwitz moduli space \cite{Gross:1993yt,Cordes:1994fc}. The combinatorial  string side is a sum over Belyi maps, which are branched covers of the sphere with three branch points.  The different world-sheet genera are summed with an $N$-dependent weight. With appropriate normalization of the operators, $1/N$ can be interpreted as the string coupling. In the presence of the matrix coupling $A$, powers of the string coupling have effectively been replaced by the invariant functions $\cO_{ p_3} ( B ) $. From \eqref{permbasCorr2} we recover $C_{ p_1 , p_2 }^{ p_3} $ by picking up 
the coefficient of $\cO_{ p_3} ( B )$. We thus have a direct matrix model interpretation of Belyi map counting with specified branching structure at the three branch points being given by the partitions $(p_1, p_2 , p_3 )$. 
This is a refinement of the Belyi map - matrix model connection known from 
\cite{Itzykson,deMelloKoch:2010hav,Brown:2010af}.

It is instructive to compare the emergence of the structure constants of a symmetry algebra in this paper  
from matrix and tensor witness fields, i.e. classical couplings in the action or classical constituents of composite classical/quantum observables,  with other ways of getting structure constants of algebras in quantum field theory. A common mechanism is to consider 3-point functions in topological or conformal field theories (see discussions, for example,  in \cite{deMelloKoch:2014aot,Bah:2022wot}). 

Algebraic techniques similar to those used in this paper have recently been applied to the case of matrix models or matrix quantum systems where the $U(N)$ invariance of matrix variables is replaced by an $S_N$ (symmetric group) invariance of the matrix variables. The hidden symmetry algebra is related to a  partition  algebra. More precisely, for the case of degree k invariants, we have an $S_k$ invariant subspace of a partition algebra $P_k (N)$, denoted $SP_k(N)$. This is developed in \cite{Barnes:2021tjp,Barnes:2022qli} . A natural future direction is to obtain the explicit structure constants of these algebras using witness fields as we have done here.

\vspace{1cm}
\centerline{\bf{Acknowledgments}}
We  would like to thank Joseph Ben Geloun for collaboration in the early stages of the project. We also thank George Barnes,  Robert de Mello Koch, Yang Lei, Adrian Padellaro, Gabriele Travaglini, David Vegh, Congkao Wen for insightful discussions during the course of the project. SR is supported by the STFC consolidated grant ST/P000754/1 ``String Theory, Gauge Theory and Duality''. LS is supported by an STFC quota studentship. 

\newpage


\appendix

\section{Deriving the two-point function with witness matrix field}
\label{app:background_matrix_field_correlator}
Here the basic two-point function of matrix variables $Z$ and $Z^{\dagger}$ is derived. The path integral with a coupling witness matrix field $A$ (that is taken to be invertible), is 
\begin{equation}
\label{eqn:amob_bmf_partition_func}
\Sigma[0] = \int [dZ] e^{ - \text{Tr}\left(Z A Z^{\dagger} \right)}
\end{equation}
where the $[0]$ in $\Sigma[0]$ represents a source-free action/integral. The action is 
\begin{align}
\begin{split}
\label{eqn:cd_action_split}
\text{Tr}\left(Z A Z^{\dagger} \right) 
&= \sum_{i,j,k} Z^{i}_{j} A^{j}_{k}  (Z^{\dagger})^{k}_{i} 
\\
&= \sum_{i,j,k,l,m,n} Z^{i}_{j} \left( \delta^{j}_{k} A^{k}_{l} \delta^{l}_{m} \delta^{n}_{i} \right) (Z^{\dagger})^{m}_{n}  
\\
&= \sum_{i,j,m,n} Z^{i}_{j} \left(\delta^{n}_{i} A^{j}_{m} \right) (Z^{\dagger})^{m}_{n}  \,.
\end{split}
\end{align}
Having separated the expression in terms of indices, to begin solving the integral, first define a $N^2$-dimensional vector with components equal to matrix $Z$'s elements as follows
\begin{align}
\label{cd_x_vector}
\vec{x} &= \begin{bmatrix}
           x^{1} \\
           x^{2} \\
           \vdots \\
           x^{N^2}
         \end{bmatrix} 
         = \begin{bmatrix}
           Z^{1}_{1} \\
           \vdots \\
           Z^{1}_{N} \\
           Z^{2}_{1} \\
           \vdots \\
           Z^{2}_{N} \\
           \vdots  \\
           Z^{N}_{1} \\
           \vdots \\
           Z^{N}_{N}
         \end{bmatrix}\,.
\end{align}
Its Hermitian conjugate is therefore
\begin{align}
\begin{split}
\label{eqn:cd_x_dagger}
\vec{x}^{\dagger} &= \begin{bmatrix}
										({x}^{1})^{*} & \cdots & \cdots  & \cdots 	&	({x}^{N^2})^{*} 
										\end{bmatrix} \\
									&= \begin{bmatrix}
   							 		 (Z^{1}_{1})^{*}  & \cdots & (Z^{1}_{N})^{*} & \cdots & \cdots & \cdots & (Z^{N}_{1})^{*} & \cdots & (Z^{N}_{N})^{*} 
 										\end{bmatrix} \\
 									&= \begin{bmatrix}
   							 		 (Z^{\dagger})^{1}_{1}  & \cdots & (Z^{\dagger})^{N}_{1} & \cdots & \cdots & \cdots & (Z^{\dagger})^{1}_{N} & \cdots & (Z^{\dagger})^{N}_{N}  
 										\end{bmatrix}\,.
\end{split}
\end{align}
In component form, each matrix element is then defined as 
\begin{equation}
\label{eqn:cd_x_comp_form}
x^{N(i-1) + j} := Z^{i}_{j}   \quad \text{and} \quad (x^{\dagger})_{N(n-1)+m}  := (Z^{*})^{n}_{m} =  (Z^{\dagger})^{m}_{n}\,.
\end{equation}
Similarly, for $\delta^{n}_{i} A^{j}_{m} $ in the last line of \eqref{eqn:cd_action_split}, introduce the $N^2 \times N^2$ dimensional matrix $R$, which is formed from this Kronecker product of $\mathbb{I}$ (the identity matrix) and $A^{T}$
\begin{equation}
\label{eqn:cd_delta_a-matrix_representation}
\delta^{n}_{i} A^{j}_{m} = \delta^{n}_{i} (A^{T})^{m}_{j}  = \left(\mathbb{I} \otimes A^{T} \right)^{n,m}_{i,j} =: R^{ N(n-1)+m}_{N(i-1) + j} \,.
\end{equation}
Note that $A^{T}$ was introduced so that $R$ and the vector variables of \eqref{eqn:cd_x_comp_form}, contract/multiply correctly. The action can thus be redefined using the above objects as
\begin{align}
\begin{split}
\label{eqn:cd_xdagger_b_x}
\vec{x}^{\dagger} R \vec{x} &= \sum_{i,j,n,m} (x^{\dagger})_{N(n-1)+m} \left(R^{ N(n-1)+m}_{N(i-1) + j} \right) x^{N(i-1) + j} \\
&=  \sum_{i,j,n,m} (Z^{\dagger})^{m}_{n} \left( \delta^{n}_{i} \left(A^{T}\right)^{m}_{j}  \right) Z^{i}_{j} \\
&=  \sum_{i,j,m} Z^{i}_{j}   A^{j}_{m} (Z^{\dagger})^{m}_{i} \\
&= \text{Tr}\left(Z A Z^{\dagger} \right)\,.
\end{split}
\end{align}

Substituting this into the integral and changing the variable of the measure to $\vec{x}$, we find it is of standard form and is readily computable
\begin{equation}
\label{eqn:cd_new_pf_with_vecs}
\Sigma[0] = \int [d\vec{x}] e^{- \vec{x}^{\dagger} R \vec{x}} = \pi^{N^2} \text{det}(R^{-1})\,.
\end{equation}
Equation \eqref{eqn:cd_new_pf_with_vecs} holds when $R$ is Hermitian and is positive definite (see Chapter 3.2 of \cite{Altland:2006si}). These conditions are adhered to throughout the discussion of matrix coupling fields in this paper. To derive the correlator, we introduce complex vector sources $(\vec{S}, \vec{S}^{\dagger})$ to the action and normalise by the sourceless partition function
\begin{equation}
\label{eqn:cd_sourced_pf}
 \Sigma[\vec{S}, \vec{S}^{\dagger}] = \frac{1}{\Sigma[0]} \int [d\vec{x}] e^{- \vec{x}^{\dagger} R \vec{x} + \vec{x}^{\dagger} \vec{S} + \vec{S}^{\dagger} \vec{x}}\,.
\end{equation}
where $\vec{x}^{\dagger}\vec{S} = (\vec{x}^{\dagger})_{N(i-1)+j} (\vec{S})^{N(i-1) +j}$ and $\vec{S}^{\dagger}\vec{x} = (\vec{S}^{\dagger})_{N(i-1)+j} (\vec{x})^{N(i-1) +j}$, with $\vec{S}$ an $N^2$-dimensional vector. One may write
\begin{align}
\begin{split}
\label{eqn:cd_complete_square}
(\vec{x} - R^{-1}\vec{S})^{\dagger} R(\vec{x} - R^{-1}\vec{S}) &= \vec{x}^{\dagger} R \vec{x} - \vec{x}^{\dagger} R R^{-1} \vec{S} - \vec{S}^{\dagger} (R^{-1})^{\dagger} R \vec{x} + \vec{S}^{\dagger} (R^{-1})^{\dagger} R R^{-1} \vec{S} \\
&= \vec{x}^{\dagger} R \vec{x} - \vec{x}^{\dagger} \vec{S} - \vec{S}^{\dagger} \vec{x} + \vec{S}^{\dagger} R^{-1} \vec{S}\,,
\end{split}
\end{align}
where in the second line $(R^{-1})^{\dagger} = R^{-1}$ was used to simplify. Therefore 
\begin{equation}
\label{eqn:cd_sub_for_comp_square}
\Sigma[\vec{S}, \vec{S}^{\dagger}] = \frac{1}{\Sigma[0]} \int [d\vec{x}] e^{-(\vec{x} - R^{-1}\vec{S})^{\dagger} R(\vec{x} - R^{-1}\vec{S}) + \vec{S}^{\dagger} R^{-1} \vec{S}} \,.
\end{equation}
Changing variables as $\vec{y} = \vec{x} - R^{-1}\vec{S}$ produces a trivial Jacobian and hence 
\begin{equation}
\label{eqn:cd_new_y_variable}
\Sigma[\vec{S}, \vec{S}^{\dagger}] = e^{\vec{S}^{\dagger} R^{-1} \vec{S}}\underbrace{\frac{1}{\Sigma[0]} \int [d\vec{y}] e^{-\vec{y}^{\dagger}R\vec{y}}}_{=1}  = e^{\vec{S}^{\dagger} R^{-1} \vec{S}}\,.
\end{equation}
Finally, to derive the correlator, we set
\begin{equation}
\label{eqn:cd_correlator_def}
\langle Z^{i}_{j} (Z^\dagger)^{k}_{l} \rangle = \langle x^{N(i-1) + j} (x^{\dagger})_{N(l-1)+k} \rangle = \langle  x^{\mu} ( x^{\dagger})_{\nu}\rangle
\end{equation}
where we have labelled $\mu = N(i-1) + j$, $\nu = N(l-1) + k$ for convenience. Taking derivatives of the partition function \eqref{eqn:cd_new_y_variable} with respect to the sources and at the end of the calculation, setting the sources to zero, we obtain 
\begin{align}
\begin{split}
\label{eqn:cd_derivs_of_pf}
\langle  x^{\mu} (x^{\dagger})_{\nu} \rangle 
&= \frac{\partial}{\partial S^{\nu}}\frac{\partial}{\partial (S^\dagger)_{\mu}}  \left( \Sigma[\vec{S}, \vec{S}^{\dagger}] \right) \biggr\rvert_{\vec{S} = \vec{S}^{\dagger} = 0}\\
& = \frac{\partial}{\partial S^{\nu}}\frac{\partial}{\partial (S^\dagger)_{\mu}} e^{\sum_{\sigma, \lambda} (S^{\dagger})_{\sigma} (R^{-1})^{\sigma}_{\lambda} S^{\lambda}} \biggr\rvert_{\vec{S} = \vec{S}^{\dagger} = 0} \\
&= \frac{\partial}{\partial S^{\nu}}\left[ \sum_{\sigma, \lambda} \delta^{\mu}_{\sigma}(R^{-1})^{\sigma}_{\lambda} S^{\lambda} \cdot \Sigma[\vec{S}, \vec{S}^{\dagger}] \right]  \Biggr\rvert_{\vec{S} = \vec{S}^{\dagger} = 0} \\
&= \left[ \sum_{\sigma, \lambda} \delta^{\mu}_{\sigma} (R^{-1})^{\sigma}_{\lambda} \delta^{\lambda}_{\nu} \cdot \Sigma[\vec{S}, \vec{S}^{\dagger}] + (\text{$\vec{S}$ and $\vec{S}^{\dagger}$ dependent terms}) \right]  \Biggr\rvert_{\vec{S} = \vec{S}^{\dagger} = 0} \\
&= (R^{-1})^{\mu}_{\nu}\,.
\end{split}
\end{align}
Therefore the basic two-point function with witness matrix field is
\begin{align}
\label{eqn:cd_final_corr_manip}
\langle Z^{i}_{j} (Z^\dagger)^{k}_{l} \rangle = \langle  x^{\mu} ( x^{\dagger})_{\nu}\rangle = (R^{-1})^{\mu}_{\nu} &= (R^{-1})^{N(i-1) + j}_{N(l-1)+k} 
\\
&= \left(\left(\mathbb{I} \otimes A^{T}\right)^{-1}\right)^{N(i-1) + j}_{N(l-1)+k} 
\\
&= \left((\mathbb{I})^{-1} \otimes (A^{T})^{-1}\right)^{N(i-1) + j}_{N(l-1)+k}
\\
&= \delta^{i}_{l} \left(\left(A^{T}\right)^{-1}\right)^{j}_{k} \label{eqn:cd_kron_prod_comps_del_a_trans_inv}
\\
&= \delta^{i}_{l} \left(A^{-1}\right)^{k}_{j}\,.
\end{align}
where equation \eqref{eqn:cd_kron_prod_comps_del_a_trans_inv} writes the $R^{-1}$ matrix components in terms of the Kronecker product components, as in \eqref{eqn:cd_delta_a-matrix_representation}. Note that the condition that $R$ is Hermitian, implies that in addition to invertibility, $A$ must also be Hermitian.
This can be seen by observing the block diagonal matrix $R$, formed from the Kronecker product between the delta function and $A^{T}$
\begin{equation}
R = (\mathbb{I} \otimes A^{T}) = \begin{pmatrix}
	\boxed{A^{T}} & 0 & \dots & 0 \\
	0 & \boxed{A^{T}} & \dots & 0 \\
	\vdots & \vdots & \ddots & \vdots \\
	0 & 0 & \dots & \boxed{A^{T}}
\end{pmatrix}
\end{equation}
Since $R = R^{\dagger}$, then by virtue that $R$ is block diagonal, where each block consists of the same matrix $A^{T}$, we have $A^{T} = (A^{T})^{\dagger}$ and consequently, $A = A^{\dagger}$.

Additionally, since Hermitian matrices are diagonalisable, it suffices that if $R$ is positive definite, i.e. its eigenvalues are all positive, and as $R$ is composed of $N$ copies of the eigenvalues of $A^{T}$ (noting that $A$ has the same eigenvalues as $A^{T}$), then $A$ must also be positive definite. Finally, integrals of the form \eqref{eqn:cd_new_pf_with_vecs}, can be extended to those with a complex symmetric matrix $R'$, providing further positivity conditions are applied to the real/Hermitian part of $R'$ (see Chapter 1.7 of \cite{zinn2004path}). 


\section{Interpreting correlator diagrams}
\label{app:corr_diagram_interpret}
This appendix describes how to interpret the correlator diagrams, focussing on the ``one-matrix correlator with coupling witness field" result of \S\ref{ss:background_matrix_field} as a main example first, then proceeds to describe a tensor GIO diagram, used to produce the correlator figure in \S\ref{ss:background_tensor_fields}.

Figure \ref{fig:two_perm_correlator_diagram_app} shows the one-matrix correlator diagrams where each box corresponds to an ``operator" which are connected to one another by ``branches". Note that there are also horizontal lines at the bottom and top of the diagrams, which imply that the ends of these branches should be identified. The operators can either be matrix variables such as $Z$, witness fields $B$, or permutation linear operators such as $\mathcal{L}_{\sigma}$ for $\sigma \in S_{n}$. As mentioned previously, the permutation operators act on basis vectors in the following way
\begin{equation}
\label{eqn:app_dia_perm_operator}
\mathcal{L}_{\sigma}\ket{e_{i_{1}} \otimes \dots \otimes e_{i_{n}}} = \ket{e_{i_{\sigma(1)}} \dots e_{i_{\sigma(n)}}} \,.
\end{equation}
where $e_{i_{1}}, \dots, e_{i_{n}}$ is a basis for $V^{\otimes n}$. 
They abide by the tensor composition property 
\begin{equation}
\label{eqn:app_dia_homo_prop}
\mathcal{L}_{\sigma}\mathcal{L}_{\tau} = \mathcal{L}_{\sigma\tau} \,,
\end{equation}
which is shown by calculating the tensor composition elements of these operators by inserting the identity
\begin{align}
\label{eqn:app_dia_identity_insert_ops}
\begin{split}
{}&\braket{e^{j_{1}} \otimes \dots \otimes e^{j_{n}} |\mathcal{L}_{\sigma} \mathcal{L}_{\tau} |e_{i_{1}} \otimes \dots \otimes e_{i_{n}} } 
\\
{}&= \braket{e^{j_{1}} \otimes \dots \otimes e^{j_{n}}|\mathcal{L}_{\sigma} |e_{k_{1}} \otimes \dots \otimes e_{k_{n}} } \braket{e^{k_{1}} \otimes \dots \otimes e^{k_{n}} |  \mathcal{L}_{\tau} |e_{i_{1}} \otimes \dots \otimes e_{i_{n}} }
\\
{}&=  \braket{e^{j_{1}} \otimes \dots \otimes e^{j_{n}}|e_{k_{\sigma(1)}} \otimes \dots \otimes e_{k_{\sigma(n)}} } \braket{e^{k_{1}} \otimes \dots \otimes e^{k_{n}}|e_{i_{\tau(1)}} \otimes \dots \otimes e_{i_{\tau(n)}} }
\\
{}&= \delta^{j_{1}}_{k_{\sigma(1)}} \dots \delta^{j_{n}}_{k_{\sigma(n)}} \delta^{k_{1}}_{i_{\tau(1)}} \dots \delta^{k_{n}}_{i_{\tau(n)}}
\\
{}&= \delta^{j_{\sigma^{-1}(1)}}_{k_{1}} \dots \delta^{j_{\sigma^{-1}(n)}}_{k_{n}} \delta^{k_{1}}_{i_{\tau(1)}} \dots \delta^{k_{n}}_{i_{\tau(n)}}
\\
{}&=\delta^{j_{1}}_{i_{\sigma \tau(1)}} \dots \delta^{j_{n}}_{i_{\sigma \tau(n)}} \,,
\end{split}
\end{align}
where the last two steps make use of Kronecker equivariance and the repeated indices are summed. By virtue of the operator action \eqref{eqn:app_dia_perm_operator}, it may also be written that
\begin{align}
\label{eqn:app_dia_perm_comp_ops}
\begin{split}
\braket{e^{j_{1}} \otimes \dots \otimes e^{j_{n}} |\mathcal{L}_{\sigma\tau} |e_{i_{1}} \otimes \dots \otimes e_{i_{n}} }  
&= \braket{e^{j_{1}} \otimes \dots \otimes e^{j_{n}}|e_{i_{\sigma \tau(1)}} \otimes \dots \otimes e_{i_{\sigma \tau(n)}} }
\\
&= \delta^{j_{1}}_{i_{\sigma \tau(1)}} \dots \delta^{j_{n}}_{i_{\sigma \tau(n)}} 
\end{split}
\end{align}
Since results \eqref{eqn:app_dia_identity_insert_ops} and \eqref{eqn:app_dia_perm_comp_ops} are equal, we conclude that $\cL_{\sigma}\cL_{\tau} = \cL_{\sigma \tau}$. The composition of these linear operators can therefore be thought of as first substituting $e_{i_{a}}$ for $e_{i_{\tau(a)}}$ from the action of $\cL_{\tau}$ ($\tau $ thus acting on the positions of basis vectors labelled by $a \in \{ 1, 2, , \cdots, n \}$ , then further substituting $e_{i_{\tau(a)}}$ for $e_{i_{\tau(\sigma(a))}} = e_{ i_{ \sigma \tau (a) } } $ from the action of $\cL_{\sigma}$. Having addressed the operator properties, to obtain the correlator result by reading Figure \ref{fig:two_perm_correlator_diagram_app}, start at the top $J$ index of the right hand side diagram. Follow this branch downward to encounter the $B^{\otimes n}$, $\mathcal{L}_{\gamma^{-1}}$ and $\mathcal{L}_{\sigma_{1}}$ operators. This connection of the three operators corresponds to multiplication/contraction of indices as
\begin{equation}
\sum_{K,L} B^{j_{1}}_{k_{1}} \dots B^{j_{n}}_{k_{n}} (\mathcal{L}_{\gamma^{-1}})^{k_{1}, \dots, k_{n}}_{l_{1}, \dots, l_{n}} (\mathcal{L}_{\sigma_{1}})^{l_{1}, \dots, l_{n}}_{i_{1}, \dots, i_{n}} = \sum_{K} B^{j_{1}}_{k_{1}} \dots B^{j_{n}}_{k_{n}} (\mathcal{L}_{\gamma^{-1}\sigma_{1}})^{k_{1}, \dots, k_{n}}_{i_{1}, \dots, i_{n}} \,.
\end{equation}
where $\sum_{K,L}$ indicates a sum over all $k,l$ indices. Note that the upper indices on the $B$ witness matrices are labelled with $j$ and the lower indices on operator $\mathcal{L}_{\sigma_{1}}$ are labelled with $i$, in keeping with the Figure \ref{fig:two_perm_correlator_diagram_app} labels. Continuing on, follow the bottom $I$ index under $\mathcal{L}_{\sigma_{1}}$ to the top of the diagram since we identify the horizontal lines, and progress down this branch, meeting $\mathcal{L}_{\gamma}$ then $\mathcal{L}_{\sigma_{2}^{-1}}$ and finally connecting back to the $J$ index at the top. Having formed a connected loop, all indices are contracted and we may represent the result as a trace over these operators. 
\begin{figure}[htb!]
\begin{center}
\includegraphics[width=13cm, height=7cm]{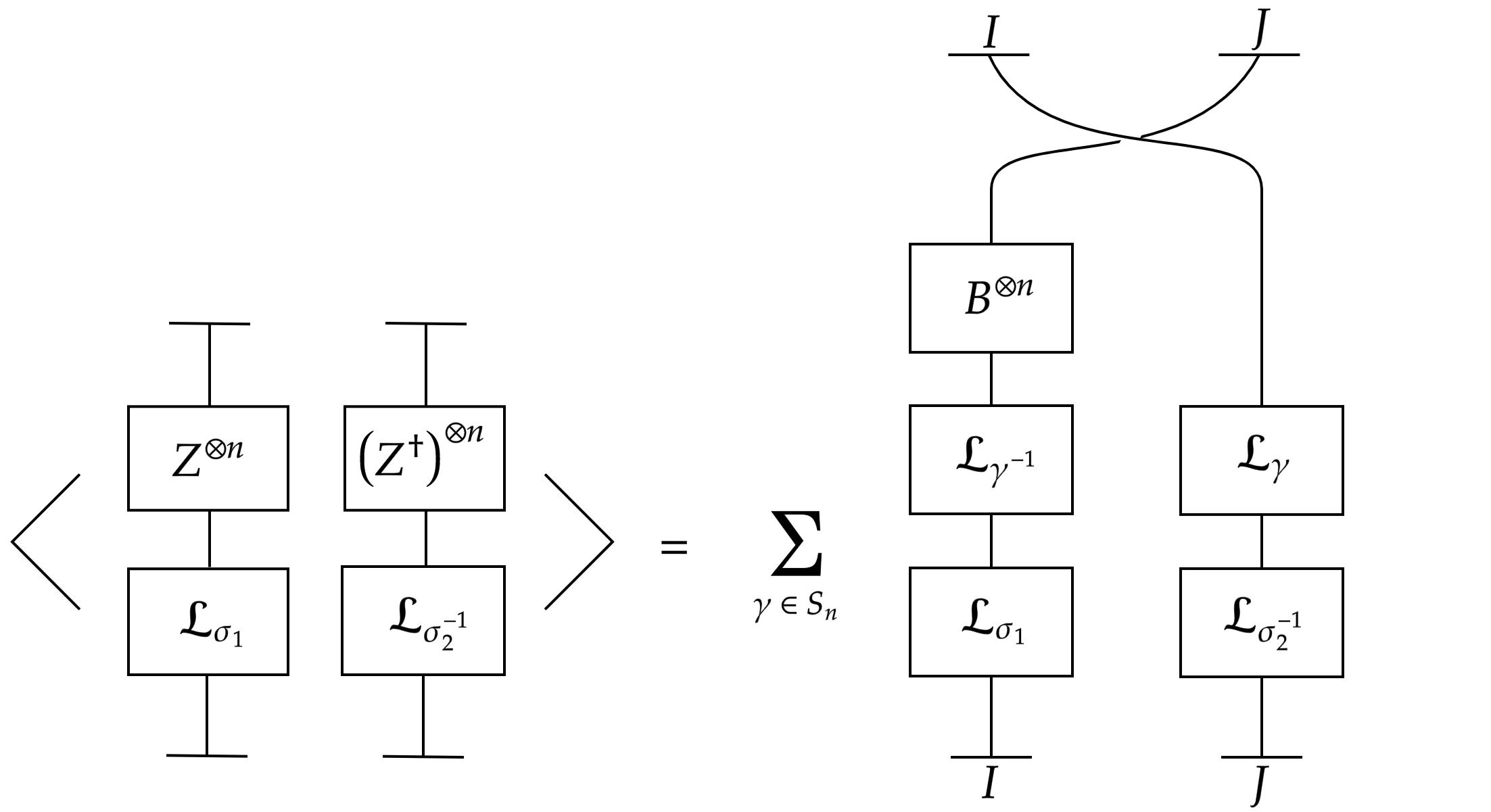}
\end{center}
\caption{The correlator diagrams where each box is an operator connected by branches. The right side diagram explicitly shows the indices, indicating that horizontal lines are to be thought of as traces connecting the bottom and top. The diagram is read by following branches downward and multiplying the operators encountered.}
\label{fig:two_perm_correlator_diagram_app}
\end{figure}
In total, including the sum over $\gamma$, the correlator of the one-matrix, coupling witness field model is therefore

\begin{align}
\label{eqn:app_dia_second_branch}
{}&\langle \mathcal{O}_{\sigma_{1}}(Z) \left(\mathcal{O}_{\sigma_{2}}(Z)\right)^{\dagger} \rangle \nonumber
\\
{}&= \sum_{\gamma \in S_{n}} \sum_{I,J,K,L,P} B^{j_{1}}_{k_{1}} \dots B^{j_{n}}_{k_{n}} (\mathcal{L}_{\gamma^{-1}})^{k_{1}, \dots, k_{n}}_{l_{1}, \dots, l_{n}} (\mathcal{L}_{\sigma_{1}})^{l_{1}, \dots, l_{n}}_{i_{1}, \dots, i_{n}} (\mathcal{L}_{\gamma})^{i_{1}, \dots, i_{n}}_{p_{1}, \dots, p_{n}}(\mathcal{L}_{\sigma_{2}^{-1}})^{p_{1}, \dots, p_{n}}_{j_{1}, \dots, j_{n}}
\\
&= \sum_{\gamma \in S_{n}}
\text{Tr}_{V_{N}^{\otimes n}}\left(B^{\otimes n} \mathcal{L}_{\gamma^{-1}} \mathcal{L}_{\sigma_{1}} \mathcal{L}_{\gamma} \mathcal{L}_{\sigma_{2}^{-1}} \right)
\\
&= \sum_{\gamma \in S_{n}} \text{Tr}_{V_{N}^{\otimes n}} \left(B^{\otimes n} \mathcal{L}_{\gamma^{-1} \sigma_{1} \gamma \sigma_{2}^{-1}} \right) \label{eqn:app_dia_used_homo_prop}
\\
&= \sum_{j_{1}, \dots j_{n}} \sum_{\gamma \in S_{n}} \bra{e^{j_{1}} \dots e^{j_{n}}} B^{\otimes n} \mathcal{L}_{\gamma^{-1} \sigma_{1} \gamma \sigma_{2}^{-1}} \ket{e_{j_{1}} \dots e_{j_{n}}}
\\
&= \sum_{j_{1}, \dots, j_{n}} \sum_{\gamma \in S_{n}} \bra{e^{j_{1}} \dots e^{j_{n}}} B^{\otimes n} \ket{e_{j_{\gamma^{-1} \sigma_{1} \gamma \sigma_{2}^{-1}(1)}} \dots e_{j_{\gamma^{-1} \sigma_{1} \gamma \sigma_{2}^{-1}(n)}}} \label{eqn:app_dia_used_operator_prop}
\\
&= \sum_{j_{1}, \dots, j_{n}} \sum_{\gamma \in S_{n}} B^{j_{1}}_{j_{\gamma^{-1} \sigma_{1} \gamma \sigma_{2}^{-1}(1)}} \dots B^{j_{n}}_{j_{\gamma^{-1} \sigma_{1} \gamma \sigma_{2}^{-1}(n)}} \,.
\end{align}
where \eqref{eqn:app_dia_homo_prop} was used in equation \eqref{eqn:app_dia_used_homo_prop}, and \eqref{eqn:app_dia_perm_operator} used in \eqref{eqn:app_dia_used_operator_prop}. This is the previously identified result of equation \eqref{eqn:bmf_cycle_struc_B_mat}. Note that the right hand side diagram introduces a swap in the indices. The origin of this swap stems from deriving the basic two-point function, as seen in Appendix \ref{app:background_matrix_field_correlator}, where indices $j,l$ on $\langle Z^{i}_{j} (Z^{\dagger})^{k}_{l} \rangle = \delta^{i}_{l} (A^{-1})^{k}_{j}$ switch positions in the outcome. 

The tensor correlator diagram of Figure \ref{fig:tensor_diagram_3} from \S\ref{ss:background_tensor_fields}, introduces tensor variables ($Z, \bar{Z}$) and witnesses ($\Lambda, \bar{\Lambda}$) illustrated by triangles. To help understand this, we explain the diagrammatic representation of a single tensor GIO, $\mathcal{O}_{\sigma_{1},\sigma_{2}, \sigma_{3}}(\bar{\Lambda}, Z)$, in Figure \ref{fig:app_tens_description_diagram}.
\begin{figure}[htb!]
\begin{center}
\includegraphics[width=9.5cm, height=5.5cm]{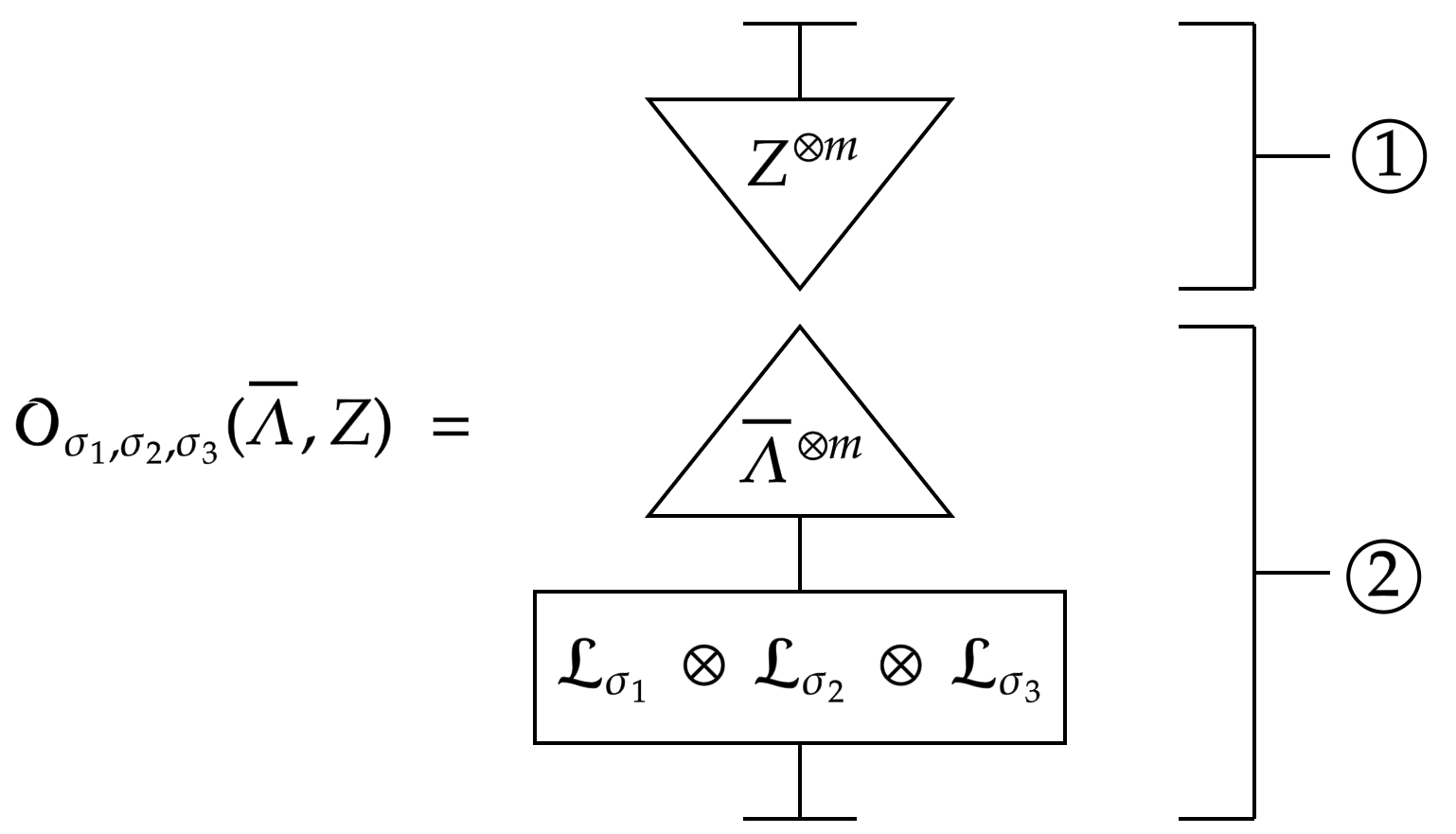}
\end{center}
\caption{A diagram representing a single gauge invariant tensor operator $\mathcal{O}_{\sigma_{1},\sigma_{2},\sigma_{3}}(\bar{\Lambda},Z)$, with quantum fields $Z$ and classical witnesses $\bar{\Lambda}$. It can be decomposed into two parts: \textcircled{\raisebox{-0.9pt}{1}} where $Z^{\otimes m}$ acts on a bra state and \textcircled{\raisebox{-0.9pt}{2}} where the linear permutation operators act first on a ket state, followed by $\bar{\Lambda}^{\otimes m}$.}
\label{fig:app_tens_description_diagram}
\end{figure}
Mathematically we calculate \textcircled{\raisebox{-0.9pt}{1}} and \textcircled{\raisebox{-0.9pt}{2}} as follows
\begin{align}
\text{\textcircled{\raisebox{-0.9pt}{1}}}
{}&=  \bra{(e^{i_{1}} \otimes e^{j_{1}} \otimes e^{k_{1}}) \otimes \dots \otimes (e^{i_{m}} \otimes e^{j_{m}} \otimes e^{k_{m}} )} Z^{\otimes m}
\\
{}&= Z^{i_{1} j_{1} k_{1}} \dots Z^{i_{m} j_{m} k_{m}}
\end{align}
\begin{align}
\text{\textcircled{\raisebox{-0.9pt}{2}}}
{}&= \bar{\Lambda}^{\otimes m} (\mathcal{L}_{\sigma_{1}} \otimes \mathcal{L}_{\sigma_{2}} \otimes \mathcal{L}_{\sigma_{3}})\ket{(e_{i_{1}} \otimes e_{j_{1}} \otimes e_{k_{1}}) \otimes \dots \otimes (e_{i_{m}} \otimes e_{j_{m}} \otimes e_{k_{m}} )}
\\
{}&= \bar{\Lambda}^{\otimes m}\ket{(e_{i_{\sigma_{1}(1)}} \otimes e_{j_{\sigma_{2}(1)}} \otimes e_{k_{\sigma_{3}(1)}}) \otimes \dots \otimes (e_{i_{\sigma_{1}(m)}} \otimes e_{j_{\sigma_{2}(m)}} \otimes e_{k_{\sigma_{3}(m)}} )}
\\
&{}=  \bar{\Lambda}_{i_{\sigma_{1}(1)} j_{\sigma_{2}(1)} k_{\sigma_{3}(1)}} \dots \bar{\Lambda}_{i_{\sigma_{1}(m)} j_{\sigma_{2}(m)} k_{\sigma_{3}(m)}}
\end{align}
Combining these two pieces, the final GIO evaluation is then 
\begin{align}
\mathcal{O}_{\sigma_{1},\sigma_{2},\sigma_{3}}(\bar{\Lambda},Z) {}&= \text{\textcircled{\raisebox{-0.9pt}{1}}} \times \text{\textcircled{\raisebox{-0.9pt}{2}}}
\\
{}&= \bar{\Lambda}_{i_{\sigma_{1}(1)} j_{\sigma_{2}(1)} k_{\sigma_{3}(1)}} \dots \bar{\Lambda}_{i_{\sigma_{1}(m)} j_{\sigma_{2}(m)} k_{\sigma_{3}(m)}} Z^{i_{1} j_{1} k_{1}} \dots Z^{i_{m} j_{m} k_{m}} \,,
\end{align}
which matches the definition in equation  \eqref{eqn:tenmod_observable_d_3}. The triangles are used to help emphasise that the tensor field operators are maps from the tensor space to the complex numbers $Z^{\otimes m}: V^{\otimes m} \to \mathbb{C}$. 

\section{Delta function sums}
\label{app:delta_function_sums}
Equation \eqref{eqn:tenmod_two_point_func_pt4} can be obtained from equation \eqref{eqn:tenmod_two_point_func_pt3} by
first exchanging the $\gamma \in S_{m}$. Begin with 
\begin{align}
\sum_{\gamma_{1}, \gamma_{2}, \alpha \in S_{m}} \delta(\gamma_{1} \gamma_{2} \alpha) \,.
\end{align}
Setting $\gamma = \gamma_{1}\gamma_{2}$ and noting that the sum over $\gamma_{1}$ runs over all permutations in $S_{m}$, and so too does the sum over $\gamma_{1} = \gamma \gamma_{2}^{-1}$, then we can  substitute $\sum_{\gamma_{1}}$ for $\sum_{\gamma}$
\begin{align}
\sum_{\gamma_{1}, \gamma_{2}, \alpha} \delta(\gamma_{1} \gamma_{2} \alpha) = \sum_{\gamma, \gamma_{2}, \alpha} \delta(\gamma \alpha) = m! \sum_{\gamma,\alpha} \delta(\gamma \alpha) \implies \sum_{\gamma, \alpha } \delta(\gamma \alpha) = \frac{1}{m!} \sum_{\gamma_{1}, \gamma_{2}, \alpha} \delta(\gamma_{1}\gamma_{2} \alpha) \,.
\end{align}
Therefore 
\begin{align}
\begin{split}
\label{eqn:app_delta_func1}
\langle \mathcal{O}_{\sigma_{1}, \sigma_{2}, \sigma_{3}}(\bar{\Lambda}, Z) \overline{\mathcal{O}_{\tau_{1}, \tau_{2}, \tau_{3}}(\bar{\Lambda}, Z)} \rangle  
&= \sum_{ \alpha_{l}, \gamma \in S_{m}} \mathcal{O}_{\alpha_{1}, \alpha_{2}, \alpha_{3}}(\bar{\Lambda},\Lambda)
\prod_{l=1}^{3} \delta(\alpha_{l}^{-1} \sigma_{l} \gamma \tau_{l}^{-1} ) \\
&= \frac{1}{m!}  \sum_{ \alpha_{l}, \gamma_{1}, \gamma_{2} \in S_{m}} \mathcal{O}_{\alpha_{1}, \alpha_{2}, \alpha_{3}}(\bar{\Lambda},\Lambda)
\prod_{l=1}^{3} \delta(\alpha_{l}^{-1} \sigma_{l} \gamma_{1} \gamma_{2} \tau_{l}^{-1} )  \,.
\end{split}
\end{align}
We then use the invariance of the operator $ \mathcal{O}_{\alpha_{1}, \alpha_{2}, \alpha_{3}}(\bar{\Lambda}, \Lambda)$ to insert permutations $\mu_{1}$ and $\mu_{2}$,
\begin{align}
\frac{1}{m!} &\sum_{ \alpha_{l}, \gamma_{1}, \gamma_{2} \in S_{m}} \mathcal{O}_{\alpha_{1}, \alpha_{2}, \alpha_{3}}(\bar{\Lambda},\Lambda) \prod_{l=1}^{3} \delta(\alpha_{l}^{-1} \sigma_{l} \gamma_{1} \gamma_{2} \tau_{l}^{-1}) \nonumber \\
&=\frac{1}{m!}  \sum_{ \alpha_{l}, \gamma_{1}, \gamma_{2} \in S_{m}} \mathcal{O}_{\mu_{1}\alpha_{1}\mu_{2}, \mu_{1}\alpha_{2}\mu_{2},\mu_{1} \alpha_{3}\mu_{2}}(\bar{\Lambda},\Lambda) \prod_{l=1}^{3} \delta(\alpha_{l}^{-1} \sigma_{l} \gamma_{1} \gamma_{2} \tau_{l}^{-1}) \label{eqn:appdel_op_invariance}\\
&= \frac{1}{m!} \sum_{ \gamma_{1}, \gamma_{2} \in S_{m}} \sum_{\mu_{1}^{-1}\tilde{\alpha}_{l}\mu_{2}^{-1}  \in S_{m}} \mathcal{O}_{\tilde{\alpha}_{1}, \tilde{\alpha}_{2},\tilde{\alpha}_{3}}(\bar{\Lambda},\Lambda) \prod_{l=1}^{3} \delta(\mu_{2}\tilde{\alpha}_{l}^{-1} \mu_{1} \sigma_{l} \gamma_{1} \gamma_{2} \tau_{l}^{-1}) \label{eqn:appdel_relabel}\\  
&= \frac{1}{m!} \sum_{ \gamma_{1}, \gamma_{2} \in S_{m}}  \sum_{\tilde{\alpha}_{l} \in S_{m}} \mathcal{O}_{\tilde{\alpha}_{1}, \tilde{\alpha}_{2},\tilde{\alpha}_{3}}(\bar{\Lambda},\Lambda) \prod_{l=1}^{3} \delta(\mu_{2}\tilde{\alpha}_{l}^{-1} \mu_{1} \sigma_{l} \gamma_{1} \gamma_{2} \tau_{l}^{-1}) \label{eqn:appdel_relabel_sum}
\\
&=\frac{1}{(m!)^3} \sum_{\alpha_{l}, \gamma_{1}, \gamma_{2} \mu_{1}, \mu_{2},  \in S_{m}} \mathcal{O}_{\alpha_{1}, \alpha_{2}, \alpha_{3}}(\bar{\Lambda},\Lambda) \prod_{l=1}^{3} \delta(\mu_{2} \alpha_{l}^{-1} \mu_{1} \sigma_{l} \gamma_{1} \gamma_{2} \tau_{l}^{-1}) \label{eqn:appdel_mu_sums}\,.
\end{align}

In \eqref{eqn:appdel_op_invariance} the equivalence property $\mathcal{O}_{\alpha_{1}, \alpha_{2}, \alpha_{3}}(\bar{\Lambda},\Lambda) = \mathcal{O}_{\mu_{1}\alpha_{1}\mu_{2}, \mu_{1}\alpha_{2}\mu_{2},\mu_{1} \alpha_{3}\mu_{2}}(\bar{\Lambda},\Lambda)$ is used. Equation \eqref{eqn:appdel_relabel} relabels $\mu_{1} \alpha_{l} \mu_{2} \to \tilde{\alpha}$, while \eqref{eqn:appdel_relabel_sum} relabels the sum since both cover all elements in $S_{m}$. Finally 
\eqref{eqn:appdel_mu_sums} introduces sums over $\mu_{1}$ and $\mu_{2}$ and includes the appropriate normalisation factors. 

\section{Wedderburn-Artin basis properties for general matrix models}
\label{app:redone_wa_q_basis}
Take the multi-matrix model tensor space decomposition 
\begin{equation}
\label{eqn:app_gen_mmm_spb_vector_decomp}
V_{R}^{(S_{\mathbf{m}})} = \bigoplus_{\mathbf{r}} V_{R_{1}}^{(S_{m_{1}})} \otimes V_{R_{2}}^{(S_{m_{2}})} \otimes \dots \otimes V_{R_{l}}^{(S_{m_{l}})} \otimes V_{\mathbf{r}}^{R}.
\end{equation}
Here, $R$ is the representation of $S_{\mathbf{m}}$ with ${\bf{m}} = m_{1} + m_{2} + \dots + m_{l}$ and $\mathbf{r}=(R_{1}, R_{2}, \dots, R_{l})$ which denotes the set of individual representations $R_{i}$, for $i = 1, \dots, l$. The restricted character is
\begin{equation}
\label{eqn:app_gen_mmm_rsc_in_branch_coef}
\chi^{R}_{\mathbf{r},\mu,\nu}(\sigma) = \sum_{\mathbf{t}} \sum_{i,j} D^{R}_{ij}(\sigma) B^{R; j}_{\mathbf{r},\mu; \mathbf{t}} B^{R; i}_{\mathbf{r},\nu;\mathbf{t}} 
\end{equation}
where $\mathbf{t} = (t_{1}, t_{2}, \dots, t_{l})$ are a set of labels used to denote the states of each individual representation space e.g. $V_{R_{1}}^{(S_{m_{1}})}$ has states labelled by $t_{1}$. $\mu,\nu$ are multiplicity indices. The Wedderburn-Artin basis (also referred to as $Q$-basis/Fourier basis) element
\begin{align}
Q^{R}_{\mathbf{r},\mu,\nu} = \frac{d_{R}}{\mathbf{m}!} \sum_{\sigma \in S_{\mathbf{m}}} \chi^{R}_{\mathbf{r},\mu,\nu} (\sigma) \sigma^{-1} \,,
\end{align}
multiplies like a matrix in the multiplicity indices and is subsequently proven.
\begin{lemma}
\label{lem:appendix_for_q_basis_multiplication}
\begin{equation}
Q^{R}_{\mathbf{r},\mu_{1},\nu_{1}} Q^{S}_{\mathbf{s},\mu_{2},\nu_{2}}  =  \delta^{RS}\delta_{\mathbf{r}\mathbf{s}}  \delta_{\nu_{1}\mu_{2}}  Q^{R}_{\mathbf{r},\mu_{1},\nu_{2}}
\end{equation}
\end{lemma}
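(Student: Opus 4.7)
The plan is to substitute the definition of $Q^{R}_{\mathbf{r},\mu,\nu}$ and reduce the product to a sum of the same form, by using Schur orthogonality of matrix elements together with the composition law of the projector-like operators $P^{R}_{\mathbf{r},\mu,\nu}$. First, I would write
\begin{equation}
Q^{R}_{\mathbf{r},\mu_{1},\nu_{1}} Q^{S}_{\mathbf{s},\mu_{2},\nu_{2}} = \frac{d_{R}d_{S}}{(\mathbf{m}!)^{2}} \sum_{\sigma,\tau \in S_{\mathbf{m}}} \chi^{R}_{\mathbf{r},\mu_{1},\nu_{1}}(\sigma)\, \chi^{S}_{\mathbf{s},\mu_{2},\nu_{2}}(\tau)\, \sigma^{-1}\tau^{-1},
\end{equation}
and then change variables in the group sum, setting $\rho = \tau\sigma$ so that $\sigma^{-1}\tau^{-1} = \rho^{-1}$. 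Summing over $\sigma$ at fixed $\rho$, the product of characters becomes a sum of the form $\sum_{\sigma} D^{R}_{ji}(\sigma) D^{S}_{mk}(\sigma^{-1})$ weighted by the matrix elements of $P^{R}_{\mathbf{r},\mu_{1},\nu_{1}}$ and $P^{S}_{\mathbf{s},\mu_{2},\nu_{2}}$ and a factor $D^{S}_{lm}(\rho)$.

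The second step is to apply the Schur orthogonality relation
\begin{equation}
\sum_{\sigma \in S_{\mathbf{m}}} D^{R}_{ji}(\sigma)\, D^{S}_{mk}(\sigma^{-1}) = \frac{\mathbf{m}!}{d_{R}}\,\delta^{RS}\delta_{jk}\delta_{im},
\end{equation}
which collapses two of the matrix index sums and identifies $R$ with $S$. What remains is a trace of the form $\text{Tr}\bigl(P^{R}_{\mathbf{r},\mu_{1},\nu_{1}} P^{R}_{\mathbf{s},\mu_{2},\nu_{2}} D^{R}(\rho)\bigr)$. Using the explicit form of $P^{R}_{\mathbf{r},\mu,\nu}$ in terms of the subgroup basis \eqref{eqn:mmm_spb_projection_operator}, together with orthonormality of the states $\ket{R;\mathbf{r},\nu,\mathbf{t}}$, one obtains
\begin{equation}
P^{R}_{\mathbf{r},\mu_{1},\nu_{1}} P^{R}_{\mathbf{s},\mu_{2},\nu_{2}} = \delta_{\mathbf{r}\mathbf{s}}\delta_{\nu_{1}\mu_{2}} P^{R}_{\mathbf{r},\mu_{1},\nu_{2}}.
\end{equation}
Substituting this back gives $\sum_{\sigma} \chi^{R}_{\mathbf{r},\mu_{1},\nu_{1}}(\sigma)\chi^{S}_{\mathbf{s},\mu_{2},\nu_{2}}(\rho\sigma^{-1}) = \frac{\mathbf{m}!}{d_{R}}\,\delta^{RS}\delta_{\mathbf{r}\mathbf{s}}\delta_{\nu_{1}\mu_{2}}\,\chi^{R}_{\mathbf{r},\mu_{1},\nu_{2}}(\rho)$.

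The final step is to re-assemble the remaining $\rho$-sum into the definition \eqref{eqn:q_basis_mmm_definition} of $Q^{R}_{\mathbf{r},\mu_{1},\nu_{2}}$, producing the desired identity. The main obstacle I anticipate is purely bookkeeping: tracking the distinction between the three types of indices (representation labels, multiplicity labels $\mu,\nu$, and the internal state labels $i,j$ on which $D^{R}(\sigma)$ acts) and making sure that the order in which the projectors are multiplied produces $\delta_{\nu_{1}\mu_{2}}$ rather than $\delta_{\mu_{1}\nu_{2}}$ (this is fixed by which character carries $\sigma$ and which carries $\sigma^{-1}$ after the change of variables).
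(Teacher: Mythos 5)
Your proposal is correct and follows essentially the same route as the paper's Appendix D proof: expand both $Q$'s, change variables so that $\sigma^{-1}\tau^{-1}$ becomes a single group element, apply Schur orthogonality of the representation matrix elements, and contract the remaining representation-theoretic data to reproduce $\chi^{R}_{\mathbf{r},\mu_{1},\nu_{2}}$ and hence $Q^{R}_{\mathbf{r},\mu_{1},\nu_{2}}$. Your use of the projector composition $P^{R}_{\mathbf{r},\mu_{1},\nu_{1}}P^{R}_{\mathbf{s},\mu_{2},\nu_{2}}=\delta_{\mathbf{r}\mathbf{s}}\delta_{\nu_{1}\mu_{2}}P^{R}_{\mathbf{r},\mu_{1},\nu_{2}}$ is just a repackaging of the paper's explicit branching-coefficient contraction, and it correctly yields $\delta_{\nu_{1}\mu_{2}}$ rather than $\delta_{\mu_{1}\nu_{2}}$.
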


\begin{proof}
\begin{align}
\label{eqn:app_gen_mmm_q_multiplication}
Q^{R}_{\mathbf{r},\mu_{1},\nu_{1}} Q^{S}_{\mathbf{s},\mu_{2},\nu_{2}}
&{}= \frac{d_{R}d_{S}}{(\mathbf{m}!)^2} \sum_{\sigma, \tau  \in S_{\mathbf{m}}} \chi^{R}_{\mathbf{r},\mu_{1},\nu_{1}} (\sigma) \chi^{S}_{\mathbf{s},\mu_{2},\nu_{2}} (\tau) \sigma^{-1} \tau^{-1}
\\
&{}= \frac{d_{R}d_{S}}{(\mathbf{m}!)^2} \sum_{\alpha \in S_{\mathbf{m}}} \left[\sum_{\tau \in S_{\mathbf{m}}}  \chi^{R}_{\mathbf{r},\mu_{1},\nu_{1}} (\tau^{-1} \alpha) \chi^{S}_{\mathbf{s},\mu_{2},\nu_{2}}  (\tau) \right] \alpha^{-1} \,, \label{eqn:app_gen_mmm_q_multiplication2}
\end{align}
where in the above, $\sigma \to  \tau^{-1} \alpha$ and the sum symbol is exchanged accordingly. The square bracket term can be evaluated using the definition of \eqref{eqn:app_gen_mmm_rsc_in_branch_coef}
\begin{align}
&{}\sum_{\tau \in S_{\mathbf{m}}}  \chi^{R}_{\mathbf{r},\mu_{1},\nu_{1}} (\tau^{-1} \alpha) \chi^{S}_{\mathbf{s},\mu_{2},\nu_{2}} (\tau) \nonumber 
\\
{}&= \sum_{\tau \in S_{\mathbf{m}}} \sum_{\mathbf{t},\mathbf{q}} \sum_{i,j,k,l} D^{R}_{ij}(\tau^{-1} \alpha) B^{R; j}_{\mathbf{r},\mu_{1}; \mathbf{t}} B^{R; i}_{\mathbf{r},\nu_{1};\mathbf{t}}  D^{S}_{kl}(\tau) B^{S; l}_{\mathbf{s},\mu_{2}; \mathbf{q}} B^{S; k}_{\mathbf{s},\nu_{2};\mathbf{q}}  
\\
{}&=  \sum_{\mathbf{t},\mathbf{q}} \sum_{i,j,k,l}  B^{R; j}_{\mathbf{r},\mu_{1}; \mathbf{t}} B^{R; i}_{\mathbf{r},\nu_{1};\mathbf{t}}   B^{S; l}_{\mathbf{s},\mu_{2}; \mathbf{q}} B^{S; k}_{\mathbf{s},\nu_{2};\mathbf{q}} \left[ \sum_{\tau \in S_{\mathbf{m}}}  D^{R}_{ij}(\tau^{-1} \alpha)D^{S}_{kl}(\tau) \right] \label{eqn:app_gen_mmm_spb_orthog_rel}
\end{align}
Using the following orthogonality relation
\begin{align}
\sum_{\tau \in S_{\mathbf{m}}}  D^{R}_{ij}(\tau^{-1} \alpha)D^{S}_{kl}(\tau) 
&{}=  \sum_{a}  D^{R}_{aj}(\alpha)  \sum_{\tau \in S_{\mathbf{m}}}  D^{S}_{kl}(\tau) D^{R}_{ia}(\tau^{-1})  \label{eqn:app_gen_mmm_spb_orthog_relation_ham_form}
\\
&{} = \sum_{a}  D^{R}_{aj}(\alpha)  \frac{\mathbf{m}!}{d_{R}}\delta^{RS} \delta_{ka}\delta_{li} 
\\
&{} = \frac{\mathbf{m}!}{d_{R}}\delta^{RS} D^{R}_{kj} (\alpha) \delta_{li} \,, \label{eqn:app_gen_mmm_spb_orthog_relation_step}
\end{align}
in equation \eqref{eqn:app_gen_mmm_spb_orthog_rel}, we achieve
\begin{align}
&{} \sum_{\tau \in S_{\mathbf{m}}}  \chi^{R}_{\mathbf{r},\mu_{1},\nu_{1}} (\tau^{-1} \alpha) \chi^{S}_{\mathbf{s},\mu_{2},\nu_{2}} (\tau)  \nonumber 
\\
&{}= \frac{\mathbf{m}!}{d_{R}}\delta^{RS}  \sum_{\mathbf{t},\mathbf{q}} \sum_{j,k}  \left[ \sum_{i,l} \delta_{li}  B^{R; i}_{\mathbf{r},\nu_{1};\mathbf{t}}  B^{S; l}_{\mathbf{s},\mu_{2}; \mathbf{q}} \right] D^{R}_{kj}(\alpha)  B^{R; j}_{\mathbf{r},\mu_{1}; \mathbf{t}} B^{S; k}_{\mathbf{s},\nu_{2};\mathbf{q}}  \label{eqn:app_gen_mmm_spb_bracn_contract}
\\
&{} = \frac{\mathbf{m}!}{d_{R}}\delta^{RS}   \sum_{j,k} \sum_{\mathbf{t},\mathbf{q}} \left[ \delta_{\mathbf{r}\mathbf{s}}\delta_{\mathbf{t}\mathbf{q}} \delta_{\nu_{1} \mu_{2}} \right] D^{R}_{kj}(\alpha)  B^{R; j}_{\mathbf{r},\mu_{1}; \mathbf{t}} B^{S; k}_{\mathbf{s},\nu_{2};\mathbf{q}} 
\\
&{} = \frac{\mathbf{m}!}{d_{R}}\delta^{RS}\delta_{\mathbf{r}\mathbf{s}}  \delta_{\nu_{1} \mu_{2}}   \sum_{j,k} \sum_{\mathbf{t}}  D^{R}_{kj}(\alpha)  B^{R; j}_{\mathbf{r},\mu_{1}; \mathbf{t}} B^{S; k}_{\mathbf{s},\nu_{2};\mathbf{t}} 
\\
&{}=   \frac{\mathbf{m}!}{d_{R}}\delta^{RS}\delta_{\mathbf{r}\mathbf{s}}  \delta_{\nu_{1} \mu_{2}} \chi^{R}_{\mathbf{r},\mu_{1},\nu_{2}}(\alpha) \,.
\end{align}
Therefore, substituting this result back into equation \eqref{eqn:app_gen_mmm_q_multiplication2}:
\begin{align}
Q^{R}_{\mathbf{r}, \mu_{1}, \nu_{1}} Q^{S}_{\mathbf{s}, \mu_{2}, \nu_{2}} 
&{}=  \frac{d_{R}d_{S}}{(\mathbf{m}!)^2} \sum_{\alpha \in S_{\mathbf{m}}} \left[\sum_{\tau \in S_{\mathbf{m}}}  \chi^{R}_{\mathbf{r},\mu_{1},\nu_{1}} (\tau^{-1} \alpha) \chi^{S}_{\mathbf{s},\mu_{2},\nu_{2}} (\tau) \right] \alpha^{-1}
\\
&{}=  \delta^{RS} \delta_{\mathbf{r}\mathbf{s}}  \delta_{\nu_{1} \mu_{2}} \left( \frac{d_{R}}{\mathbf{m!}} \sum_{\alpha \in S_{\mathbf{m}}} \chi^{R}_{\mathbf{r},\mu_{1},\nu_{2}}(\alpha) \alpha^{-1} \right)
\\
&{}=  \delta^{RS} \delta_{\mathbf{r}\mathbf{s}} \delta_{\nu_{1},\mu_{2}}  Q^{R}_{\mathbf{r}, \mu_{1},\nu_{2}}
\end{align}
\end{proof}
Note that specialising to $l=2$ in \eqref{eqn:app_gen_mmm_spb_vector_decomp} leads to the required rules for the two-matrix calculation in \S \ref{ss:two_matrix_schur_basis}. Namely, the decomposition becomes $V^{R}_{S_{m+n}} = V^{R_{1}}_{S_{m}} \otimes V^{R_{2}}_{S_{n}} \otimes V^{R}_{R_{1},R_{2}}$ where we label $m_{1} = m, m_{2} = n$ such that $\mathbf{m} = m+n$. This specialisation also implies that notationally, $\delta_{\mathbf{r}\mathbf{s}} \equiv \delta_{R_{1}S_{1}}\delta_{R_{2}S_{2}}$. Appendix B.2 of \cite{Pasukonis:2013ts} provides additional details and proofs on restricted quiver characters. Additionally, the Hermitian conjugate of the $Q$-basis element switches its multiplicity indices, such that  $\left(Q^{R}_{\mathbf{r}, \mu_{1},\nu_{1}}\right)^{\dagger} = Q^{R}_{\mathbf{r},\nu_{1},\mu_{1}}$.
\begin{proof}
\begin{align}
\left(Q^{R}_{\mathbf{r}, \mu_{1},\nu_{1}}\right)^{\dagger} 
&= 
\frac{d_{R}}{\mathbf{m}!} \sum_{\sigma \in S_{\mathbf{m}}} \left(\chi^{R}_{\mathbf{r}, \mu_{1},\nu_{1}} (\sigma)\right)^{\dagger} \left(\sigma^{-1} \right)^{\dagger}
\\
&= \frac{d_{R}}{\mathbf{m}!} \sum_{\sigma \in S_{\mathbf{m}}} \chi^{R}_{\mathbf{r}, \mu_{1},\nu_{1}} (\sigma) \sigma
\\
&= \frac{d_{R}}{\mathbf{m}!} \sum_{\sigma \in S_{\mathbf{m}}} \sum_{\mathbf{t}} \sum_{i,j} D^{R}_{ij}(\sigma) B^{R; j}_{\mathbf{r},\mu_{1}; \mathbf{t}} B^{R; i}_{\mathbf{r},\nu_{1};\mathbf{t}}  \sigma 
\\
&= \frac{d_{R}}{\mathbf{m}!} \sum_{\sigma \in S_{\mathbf{m}}} \sum_{\mathbf{t}} \sum_{i,j} D^{R}_{ji}(\sigma^{-1}) B^{R; i}_{\mathbf{r},\nu_{1};\mathbf{t}} B^{R; j}_{\mathbf{r},\mu_{1}; \mathbf{t}}  \sigma 
\\
&= \frac{d_{R}}{\mathbf{m}!} \sum_{\sigma \in S_{\mathbf{m}}} \chi^{R}_{\mathbf{r}, \nu_{1},\mu_{1}} (\sigma^{-1}) \sigma
\\
&= \frac{d_{R}}{\mathbf{m}!} \sum_{\tilde{\sigma} \in S_{\mathbf{m}}} \chi^{R}_{\mathbf{r}, \nu_{1},\mu_{1}} (\tilde{\sigma}) \tilde{\sigma}^{-1}
\\
&= Q^{R}_{\mathbf{r},\nu_{1},\mu_{1}} \,.
\end{align}
\end{proof}

\printbibliography
\end{document}